\def\compile{compile}
\newcommand\perio[1]{p_{#1}}
\newcommand\lediff[1]{\delta_l(#1)}
\newcommand\ridiff[1]{\delta_r(#1)}
\newcommand\putcolor{}
\newcommand\cc[1]{{\putcolor{}\ifmmode\textsc{CC}\left({#1}\right)\else$\textsc{cc}\left({#1}\right)$\fi}}
\newcommand\predp[1]{{\putcolor{}\ifmmode\textsc{Pred}_{#1}
    \else$\textsc{Pred}_{#1}$\fi}}
\newcommand\invap[2]{{\putcolor{}\ifmmode\textsc{Inv}_{#1}^{#2}
    \else$\textsc{Inv}_{#1}^{#2}$\fi}}
\newcommand\cyclp[2]{{\putcolor{}\ifmmode\textsc{Cycle}_{#1}^{#2}\else
  $\textsc{Cycle}_{#1}^{#2}$\fi}}
\newcommand\predc[1]{{\putcolor{}\cc{\predp{#1}}}}
\newcommand\invac[2]{{\putcolor{}\cc{\invap{#1}{#2}}}}
\newcommand\cyclc[2]{{\putcolor{}\cc{\cyclp{#1}{#2}}}}
\newcommand{\setN}{\mathbb{N}}
\newcommand{\setZ}{\mathbb{Z}}
\newcommand\TODO[1]{{\color{red}#1}}
\newcommand{\alphA}{Q_F}
\newcommand{\alphB}{Q_G}
\newcommand\ZZ{\setZ}
\newcommand\bloc[1]{b_{#1}}
\newcommand\debloc[1]{b^{-1}_{#1}}
\newcommand\sac{\sqsubseteq}
\newcommand{\simu}{\preccurlyeq}
\newcommand\grp[2]{{#1}^{<#2>}}
\newcommand\moindre{\prec}
\newcommand\CC{\mathbf{cc}}
\newcommand\IP{{\putcolor{}\textsc{ip}}\xspace}
\newcommand\DISJ{{\putcolor{}\textsc{disj}}\xspace}
\newcommand\EQ{{\putcolor{}\textsc{eq}}\xspace}
\newtheorem{definition}{Definition}
\newtheorem{proposition}{Proposition}
\newtheorem*{remark}{Remark}
\newtheorem*{claim}{Claim}
\newtheorem{lemma}{Lemma}
\newtheorem{corollary}{Corollary}
\newcommand{\singlerule}[4]{%
      \begin{tabular}[c]{c@{\hskip .2em}c@{\hskip .2em}c}
        &#1\\
        #2&#3&#4
      \end{tabular}
}
\newcommand{\carule}[8]{%
    \small
    \begin{tabular}[c]{cccccccc}
        \singlerule#1 000
&       \singlerule#2 001
&       \singlerule#3 010
&       \singlerule#4 011
&       \singlerule#5 100
&       \singlerule#6 101
&       \singlerule#7 110
&       \singlerule#8 111
    \end{tabular}%
}
\DeclareMathOperator{\overoplus}{\overline\oplus}
\newcommand\docompile[1]{\TODO{Ce dessin est trop long a compiler}}
\newcommand\docompile[1]{#1}
\newcommand\ignore[1]{}
\begin{document}
\begin{frontmatter}
  \title{Communication Complexity\\ and Intrinsic Universality\\ in Cellular
 Automata\footnote{Partially supported by programs Fondap and Basal-CMM, 
Fondecyt 1070022 (E.G) and Fondecyt 1090156 (I.R.).}}

  \author[uai]{E. Goles}
  \author[lama]{P.-E. Meunier}
  \author[dim]{I. Rapaport}
  \author[lama]{G. Theyssier\corref{cora}} 
  \cortext[cora]{Corresponding author
    (\url{guillaume.theyssier@univ-savoie.fr})}

  \address[uai]{Facultad de Ingenieria y Ciencias, Universidad Adolfo
    Ib\'a\~nez, Santiago, Chile}
  \address[dim]{DIM, CMM (UMI 2807 CNRS), Universidad de Chile, Santiago, Chile}
  \address[lama]{LAMA, Universit\'e de Savoie, CNRS,
    73\hspace{0.2em}376 Le Bourget-du-Lac Cedex, France}

\begin{abstract}

  The notions of universality and completeness are central in the theories of
  computation and computational complexity.  However, proving lower bounds and
  necessary conditions remains hard in most of the cases. In this article, we
  introduce necessary conditions for a cellular automaton to be ``universal'',
  according to a precise notion of simulation, related both to the dynamics of
  cellular automata and to their computational power. This notion of simulation
  relies on simple operations of space-time rescaling and it is intrinsic to the
  model of cellular automata. \emph{Intrinsic universality}, the derived
  notion, is stronger than Turing universality, but more uniform, and easier to
  define and study.

  Our approach builds upon the notion of \emph{communication complexity},
  which was primarily designed to study parallel programs, and thus
  is, as we show in this article, particulary well suited to
  the study of cellular automata: it allowed us to show, by studying
  natural problems on the dynamics of cellular automata, that
  several classes of cellular automata, as well as many natural
  (\emph{elementary}) examples, were not
  \emph{intrinsically universal}.

\ignore{
  Let $F$ be a cellular automaton (CA).  This paper establishes
  necessary conditions for $F$ in order to be intrinsically
  universal. The central idea is to consider the communication
  complexity of various ``canonical problems'' related to the dynamics
  of $F$. We show that the intrinsic universality of $F$ implies high
  communication complexity for each of the canonical problems.  This
  result allows us to rule out many CA from being intrinsically
  universal: The linear CA, the expansive CA, the reversible CA and
  the elementary CA 218, 33 and 94.  The notion of intrinsic universality
  is related to a process by which we change the scale of  space-time diagrams.
  Therefore,  in this work we are answering pure dynamical question by 
  using a computational theory. This communication complexity theory,
  on the other hand,   provides a finer tool than the one given by classical computational
  complexity analysis.  In fact, we prove that for two of the
  canonical problems there exists a CA for which the computational
  complexity is maximal (\textsc{P}-complete, or $\Pi_1^0$-complete)
  while the corresponding communication complexity is rather low.  We
  also show the orthogonality of the problems. More precisely, for any
  pair of problems there exists a CA with low communication complexity
  for one but high communication complexity for the other.
}
\end{abstract}

\begin{keyword}  
  cellular automata\sep communication complexity\sep intrinsic universality.
\end{keyword}
\end{frontmatter}

\section{Introduction}
\label{sec:intro}

Since the pioneering work of J.~von~Neumman \cite{neumann67}, universality in
cellular automata (CA) has received a lot of attention
(see~\cite{surveyOllinger} for a survey). Historically, the notion of
universality used for CA was more or less an adaptation of the classical
Turing-universality. Later, a stronger notion called \emph{intrinsic
  universality} was proposed: A CA is intrinsically universal if it is able to
simulate any other CA~\cite{liferokadur,rapaport99,surveyOllinger} through a
uniform and regular encoding based on \emph{rescaling}.

This definition of intrinsic universality may seem very restrictive.  However,
it can be very common among natural families of CA~\cite{BoyerT09}, and allows a
complete and precise formalization of the notion of universality \footnote{There
  is actually no consensus on the formal definition of Turing-universality in CA
  (see~\cite{liferokadur} for a discussion about encoding/decoding
  problems).}. As we are going to see, this preciseness, and the robustness of
this definition, allows for concrete proofs of negative results and lower
bounds.

Indeed, in this paper we will explain how to rule out
particular elementary cellular automata,
as well as whole well-known classes of cellular automata, 
from being intrinsically universal,
using the elegant framework of communication complexity.

In Section~\ref{sec:def} we give the basic definitions.
One of the key definitions is the following:
Given a traditional computational problem ${\cal P}$ with an
arbitrary input $w$, we can split the input into two subwords
$w_1$ and $w_2$; therefore, we can refer to the ``comunication complexity'' 
of such problem ($w_1$ is given to Alice while $w_2$ is given to Bob). 

In Section~\ref{sec:probs} we introduce a
family of ``canonical problems'' concerning various aspects of the
dynamics of a given CA. In other words, for any CA $F$ and any 
prototype problem ${\cal P}$, we consider the problem ${\cal P}_{F}$.

In  Section~\ref{sec:univ} we
explain how to infer properties of $F$ 
from the study of the communication complexity
of ${\cal P}_F$. More precisely, we prove that  
if the communication complexity 
of one of our canonical problem ${\cal P}_{F}$
is not maximal, 
then $F$ is not intrinsically universal.
In other words, we are introducing a powerful tool 
for ruling out CA
from being intrinsically universal. We conclude
that linear, expansive and reversible CA are
not intrinsically universal. We also 
show the uncomparability of our three canonical problems: none
of them is sufficient to discard \emph{all} non-universal 
cellular automata, and none of them is stronger than any other.

In Section~\ref{sec:iu} we explain clearly why the communication
complexity approach appears to be a promising tool for ruling out CA
from being intrinsically universal. More precisely, we prove
computational intractability results about problems that our
framework considers very simple.

Finally, in Section~\ref{sec:concrete} we use our results to prove
that a few concrete elementary CA are not intrinsically
universal. Although looking at several space-time diagrams of these
automata might give a strong intuition about their non-universality, we
stress that producing complete formal proofs for such a negative
result is a difficult task and, as far as we know, had never been done
before.

\section{Basic definitions}
\label{sec:def}

\subsection{Communication complexity}

Communication complexity is a notion introduced by A.~C.-C.~Yao 
in \cite{yao79},
and designed at first for lower-bounding the amount of communication 
needed in
distributed algorithms. In that model he considered two players, 
namely Alice and Bob,
both with arbitrary computational power and communicating to 
each other in order to collaboratively decide the
value of a given function. More precisely,  for a function 
$\phi: X\times Y\rightarrow Z$, the
question  is ``how much information do they need to exchange, 
in the worst case, in order to compute $\phi(x,y)$, with
Alice knowing only $x$ and Bob only $y$''.

This communication problem is solved by a \emph{protocol}, which 
specifies, at each step of the
communication between Alice and Bob, who speaks (Alice or Bob), and what she/he
says (a bit, 0 or 1), as a function of her/his respective input. 
This simple framework, and some of its variants we discuss in this article, 
appears  to be promising for studying CA. 

A protocol $\mathcal{P}$ over a domain $X\times Y$ with range $Z$ is a binary
tree where each internal node $v$ is labeled either by a map
$a_v:X\rightarrow \{0,1\}$ or by a map $b_v:Y\rightarrow \{0,1\}$, and each leaf
$v$ is labeled either by a map ${A_v:X\rightarrow Z}$ or by a map
${B_v:Y\rightarrow Z}$.

The \emph{value} of protocol $\mathcal{P}$ on input $(x,y)\in X\times Y$ is
given by $A_v(x)$ (or $B_v(y)$) where $A_v$ (or $B_v$) is the label of the leaf
reached by walking on the tree from the root, turning left if $a_v(x)=0$ (or
$b_v(y)=0$), and right otherwise. We say that a protocol computes a
function ${\phi: X\times Y\rightarrow Z}$ if for any $(x,y)\in X\times Y$, its value
on input $(x,y)$ is $\phi(x,y)$.

Intuitively, each internal node specifies a bit to be communicated
either by Alice or by Bob, whereas at the leaves either Alice or Bob
determines the value of $\phi$ when she/he has received enough
information from the other party.

In our formalism, we do not ask both Alice and Bob to be able to
give the final value. We do so in order to consider protocols
where communication is unidirectional.

We denote by $\CC(\phi)$ the (deterministic) communication complexity
of a function $\phi:X\times Y\rightarrow Z$. It is the minimal cost of
a protocol, over
all protocols computing $\phi$, where the cost of a protocol is the 
depth of its corresponding tree.


One approach for
proving lower bounds on the communication complexity of 
an arbitrary function $\phi$
is based on the so-called
\emph{fooling sets} (for a deeper presentation of 
this theory we refer to \cite{kushilevitz97}).

\begin{definition}
\label{def:fool}
Given a function ${\phi: X\times Y\rightarrow Z}$, a set ${S\subseteq
  X\times Y}$ is a \emph{fooling set} for $\phi$ if there exists $z\in Z$
with:
\begin{enumerate}
\item ${\forall (x,y)\in S,\, \phi(x,y)=z}$,
\item ${\forall (x_1,y_1)\in S,\forall (x_2,y_2)\in S}$, either
  ${\phi(x_1,y_2)\not=z}$ or ${\phi(x_2,y_1)\not=z}$.
\end{enumerate}
\end{definition}

The usefulness of fooling sets is given by the following lemma (see
\cite{kushilevitz97}).

\begin{lemma}
  If $S$ is a fooling set of size $m$ for $\phi$ then
  $\CC(\phi)\geq\log_2(m)$.
\end{lemma}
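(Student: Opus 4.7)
The plan is to use the standard combinatorial-rectangle argument, adapted to the unidirectional flavour of the protocols defined here. I would fix a protocol $\mathcal{P}$ of cost $c$ that computes $\phi$, and a fooling set $S$ of size $m$ with associated value $z$, and show that $c\geq\log_2 m$.

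First I would prove by induction on the depth that for every node $v$ of the protocol tree, the set of inputs $(x,y)\in X\times Y$ that reach $v$ during the execution of $\mathcal{P}$ is a combinatorial rectangle $X_v\times Y_v$ with $X_v\subseteq X$ and $Y_v\subseteq Y$. At the root this is $X\times Y$; at an internal node labeled by $a_v$, the left child is reached by $\{x\in X_v:a_v(x)=0\}\times Y_v$ and the right child by $\{x\in X_v:a_v(x)=1\}\times Y_v$, preserving the rectangle shape, and the case of nodes labeled $b_v$ is symmetric.

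Next I would argue that each leaf of $\mathcal{P}$ contains at most one element of $S$. Suppose two distinct pairs $(x_1,y_1),(x_2,y_2)\in S$ both reach the same leaf $v$; by the rectangle property, $(x_1,y_2)$ and $(x_2,y_1)$ reach $v$ as well. If $v$ is labeled by $A_v:X\to Z$, then on inputs $(x_1,y_1)$ and $(x_1,y_2)$ the protocol outputs the same value $A_v(x_1)$; correctness of $\mathcal{P}$ combined with $\phi(x_1,y_1)=z$ then forces $\phi(x_1,y_2)=z$, and symmetrically $\phi(x_2,y_1)=z$, contradicting property~(2) of Definition~\ref{def:fool}. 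A leaf labeled by some $B_v:Y\to Z$ is handled analogously, exchanging the roles of $x$ and $y$: it is precisely to cover both leaf types uniformly that the fooling-set condition only requires that \emph{one} of $\phi(x_1,y_2),\phi(x_2,y_1)$ differ from $z$.

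Finally, since the leaves of $\mathcal{P}$ induce a partition of $X\times Y$ and each absorbs at most one element of $S$, the tree must have at least $m$ leaves; being binary of depth $c$, it has at most $2^c$ leaves, whence $2^c\geq m$ and $c\geq\log_2 m$. There is no real obstacle in the argument: the only subtlety is the two possible leaf labels, which is already absorbed into the symmetric form of the fooling-set definition, so the proof is essentially a direct consequence of the rectangle property.
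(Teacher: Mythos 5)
Your proof is correct and is exactly the standard combinatorial-rectangle argument that the paper itself omits, deferring instead to the reference \cite{kushilevitz97}. The one point specific to this paper's formalism --- that a leaf is labeled by a map $A_v:X\to Z$ or $B_v:Y\to Z$ rather than by a constant, so that correctness on $(x_1,y_1)$ and $(x_2,y_2)$ forces \emph{both} $\phi(x_1,y_2)=z$ and $\phi(x_2,y_1)=z$ at an $A_v$-leaf (and symmetrically at a $B_v$-leaf) --- is handled correctly, so nothing is missing.
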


In addition to ad hoc fooling set constructions, we will use the
following classical lower bounds on communication complexity (the proofs
appear in \cite{kushilevitz97}).

\begin{proposition}
  \label{prop:commlb}
  Let $n\geq 1$ be fixed.  Let $\phi_{\EQ}$, $\phi_{\IP}$ and $\phi_{\DISJ}$ be
  the functions ``equality'', ``inner product'' and ``disjointness'' defined
  from ${\{0,1\}^n\times\{0,1\}^n}$ to ${\{0,1\}}$ by:
  \begin{align*}
    \phi_{\EQ} (x,y) &=
    \begin{cases}
      1&\text{ if }(\forall i) (x_i=y_i),\\
      0&\text{ otherwise.}
    \end{cases}\\
    \phi_{\IP} (x,y) &=
    \begin{cases}
      1&\text{ if }\sum_i x_iy_i\bmod 2 = 1,\\
      0&\text{ otherwise.}
    \end{cases}\\
    \phi_{\DISJ} (x,y) &=
    \begin{cases}
      1&\text{ if }(\forall i) (x_iy_i\not=1),\\
      0&\text{ otherwise.}
    \end{cases}
  \end{align*}
  The following lower bounds hold:
  \begin{itemize}
  \item $\CC(\phi_{\EQ})\geq n$.
  \item $\CC(\phi_{\IP})\geq n$.
  \item $\CC(\phi_{\DISJ})\geq n$.\\
  \end{itemize}
\end{proposition}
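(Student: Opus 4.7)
The three inequalities are classical, and I would dispatch the first two via direct fooling-set constructions and reserve an algebraic rank argument for the third, which is the real obstacle.

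For $\phi_{\EQ}$, I would take the diagonal set $S = \{(x,x) : x \in \{0,1\}^n\}$ with witness value $z=1$. Condition~1 of Definition~\ref{def:fool} is immediate, and for any $x_1 \neq x_2$ the pair $(x_1,x_2)$ already gives $\phi_{\EQ}(x_1,x_2)=0$, so condition~2 also holds. Since $|S|=2^n$, the preceding lemma yields $\CC(\phi_{\EQ}) \geq n$.

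For $\phi_{\DISJ}$, the analogous object is the complementary-pair set $S = \{(x,\bar x) : x \in \{0,1\}^n\}$ with witness $z=1$, where $\bar x$ denotes the bitwise complement of $x$. Each pair $(x,\bar x)$ has disjoint support by construction. For distinct $x_1,x_2$, pick any coordinate $i$ where they differ: one of the two cross pairs $(x_1,\bar x_2)$, $(x_2,\bar x_1)$ then has a common $1$ at position $i$ and hence evaluates to $0$. Again $|S|=2^n$, so the preceding lemma gives $\CC(\phi_{\DISJ}) \geq n$.

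The main obstacle is $\phi_{\IP}$, for which no fooling set of size $2^n$ is known. I would instead use the classical algebraic lower bound $\CC(\phi) \geq \log_2 \mathrm{rank}_{\mathbb{R}}(M_\phi)$, where $M_\phi$ is the $2^n\times 2^n$ matrix whose $(x,y)$-entry is $\phi(x,y)$. The key object is the order-$2^n$ Hadamard matrix $H$ with $H_{x,y}=(-1)^{\langle x,y\rangle}$; since $HH^\top = 2^n\,I$, its rank over $\mathbb{R}$ is exactly $2^n$. Writing $M_{\IP}=\tfrac{1}{2}(J-H)$, with $J$ the all-ones matrix of rank $1$, we obtain $\mathrm{rank}(M_{\IP}) \geq 2^n-1$ and hence $\CC(\phi_{\IP}) \geq \log_2(2^n-1)$. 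Closing the last integer gap to the claimed bound $n$ is then a matter of invoking the standard refinement of the rank inequality based on counting monochromatic rectangles, whose full details can be imported from~\cite{kushilevitz97}.
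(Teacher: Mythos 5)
Your three arguments are correct, but note that the paper itself does not prove this proposition at all: it sets up fooling sets (Definition~\ref{def:fool} and the lemma following it) and then simply cites \cite{kushilevitz97} for all three bounds. So what you have written is a self-contained substitute rather than a variant of the paper's proof. Your diagonal fooling set for $\phi_{\EQ}$ and complementary-pair fooling set for $\phi_{\DISJ}$ are exactly the standard constructions and plug directly into the paper's lemma, since each has size $2^n$ and $\log_2(2^n)=n$. For $\phi_{\IP}$ you are right that a size-$2^n$ fooling set is unavailable and that the rank method is the natural tool; your computation $\mathrm{rank}(M_{\IP})\geq\mathrm{rank}(H)-\mathrm{rank}(J)=2^n-1$ is fine. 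Two small remarks. First, the ``integer gap'' needs no refinement of the rank inequality: the cost of a protocol is the depth of a tree, hence an integer, and $\log_2(2^n-1)>n-1$ already forces $\CC(\phi_{\IP})\geq n$. Second, since the paper's protocol model labels leaves by one-party functions $A_v:X\to Z$ or $B_v:Y\to Z$ rather than by constants, leaf rectangles need not be monochromatic; you should add the one-line observation that a leaf labeled $A_v$ still contributes a block of the form (column vector)$\times$(all-ones row), hence of rank at most one, so $\mathrm{rank}(M_\phi)\leq 2^{\CC(\phi)}$ continues to hold in this model. (The same remark explains why the fooling-set lemma is valid here too.) With those two points tightened, your proof is complete and arguably more informative than the paper's pointer to the literature.
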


\subsection{Splitting the input of computational problems}

Let us consider now classical computational input-output problems.
In this work we will only encounter problems of the form
${\cal P}:Q^\ast\rightarrow Z$, whose inputs are words over
some alphabet $Q$ and outputs are elements of a finite set $Z$. 
Moreover, we will always have $Z=Q$ or $Z=\{0,1\}$ as output sets. 

Given such type of problem $\cal P$,
we define, for any $n$,  its restriction to words of length $n$; i.e, 
we consider the restricted problem ${{\cal P}|_n:Q^{n}\rightarrow Z}$.

The key idea of the communication approach is to {\emph{split}} the
input into two parts: For any ${1 \leq i \leq (n-1)}$, we define 
${{\cal P}|^i_n : Q^i\times Q^{n-i}\rightarrow Z}$. More 
precisely, for every $x \in Q^i, y \in  Q^{n-i}$, we have
${\cal P}|^i_n(x,y)={\cal P}|_n(xy)$. Then, we can consider
the communication complexity $\CC\bigl({\cal P}|^i_n\bigr)$ of the $i$th
split function ${\cal P}|^i_n$. Of course the choice of $i$ matters
and can alter the corresponding communication complexity. Since we
don't want to rely on an arbitrary choice, we consider the worst
case. This yields the following definition:

\begin{definition}
Let ${\cal P}:Q^\ast\rightarrow Z$ be a problem. The
\emph{communication complexity} of $\cal P$, denoted $\cc{\cal P}$,
is the function:
\[ n\mapsto \max_{1 \leq i\leq n-1} \CC\bigl({\cal P}|_n^i\bigr).\]
\end{definition}

\subsection{Cellular automata}

In this paper we are always going to consider one-dimensional CA.  A
CA is defined by its local rule $f:Q^{2r+1} \rightarrow Q$ (where $Q$
corresponds to the set of states and $r$ denotes the radius of the
local rule).  For any ${n\geq 2r+1}$, we extend $f:Q^{2r+1} \rightarrow Q$ 
to the more general
$f:Q^n\rightarrow Q^{n-2r}$ by
\[f(u_1\cdots u_n) = f(u_1\cdots u_{2r+1})\cdots f(u_{n-2r}\cdots
u_n).\] Moreover, for every $1 \leq t \leq \lfloor{(n-1)/2r}\rfloor$,
we define the $t$-steps local iteration as $f^t:Q^{n} \rightarrow
Q^{n-2\cdot r\cdot t}$ by
\[\begin{cases}
  f^1 = f\\
  f^t(u_1\cdots u_n) = f\bigl(f^{t-1}(u_1\cdots u_{n-2r})\cdots f^{t-1}(u_{2r+1}\cdots u_n)\bigr)
\end{cases}\]
We also define ${f^\ast:Q^\ast\rightarrow Q^\ast}$ by
\[f^\ast(u) = f^{\left\lfloor\frac{|u|-1}{2r}\right\rfloor} (u).\]

Intuitively, $f^\ast$ applied on $u$ consist in iterating $f$ as long
as possible (until ending up with a word too short for $f$). The result is
a word of length at most $2r$ (depending on ${|u| \bmod 2r}$).

We denote by $F:Q^\setZ \rightarrow Q^\setZ$ the global rule induced
by $f$ following the classical definition:
\[F(c)_z = f(c_{z-r},\ldots,c_{z+r}).\]

Finally, we denote by $F^t:Q^\setZ \rightarrow Q^\setZ$ the $t$-step
iteration of the global function $F$.

A  global function $F$ can be represented by different local
functions. All properties considered in this paper depend only on $F$
and are not sensitive to the choice of a particular local
function. However, to avoid useless formalism, we will use the
following notion of \emph{canonical} local representation: $(f,r)$ is
the canonical local representation of $F$ if $f$ has radius $r$ and it is
the local function of smallest radius having $F$ as its associated global
function.

Throughout this work we are going to refer to the CA $F$ with $(f,r)$
being its canonical local representation. 

\section{The three canonical communication problems}
\label{sec:probs}

In this section we define the three ``problem schemes'' on which we are going to
apply the communication complexity approach. Before entering into
details, we stress that this set of problems tackles various dynamical
aspects of CA: Transient, periodic and asymptotic regime starting
respectively from finite, cyclic, or ultimately periodic
configurations. Moreover, algorithmically speaking, they are also very
different since they belong respectively to the classes \textsc{p},
\textsc{pspace}, and $\Pi_1^0$ (and can be complete for these
classes as we will see in this section).

Thus, they form an interesting set of prototype problems.

\subsection{Prediction}
\label{sec:pred}

The prediction problem consists in determining the far future of a
cell given the state of sufficiently many cells around it.

\begin{definition}
  \label{def:prediction}

Let $F$ be a CA.  The problem $\predp{F}: Q^{\ast} \to Q$ is defined
as follows: \[\predp{F}(u)= \bigl(f^\ast(u)\bigr)_1,\] where
$(f,r)$ is the canonical local representation of $F$  while the
``$\bigl(f^\ast(u)\bigr)_1$'' notation means that we take the first letter
of the word $f^\ast(u)$, which has length at most $2r$.
\end{definition}

Clearly, this problem is in $\textsc{DTime}(n^2)$, and,
as we have already said before, we can also  view \predp{F} as a
communication problem (see Figure \ref{fig:pred}): Given an initial
configuration as input, we {\emph{split}} the initial configuration
between Alice and Bob, and ask for the \emph{final} value computed by
$F$ on this input configuration, as represented in Figure
\ref{fig:pred.cc}.

\begin{figure}[ht]
\begin{center}
\subfigure[A space-time diagram of Rule 110.]
{
\begin{tikzpicture}[scale=0.4]
    \draw[fill=black](0,0)--(1,0)--(1,1)--(0,1)--(0,0);
    \draw[fill=black](1,0)--(2,0)--(2,1)--(1,1)--(1,0);
    \draw(2,0)--(3,0)--(3,1)--(2,1)--(2,0);
    \draw[fill=black](3,0)--(4,0)--(4,1)--(3,1)--(3,0);
    \draw(4,0)--(5,0)--(5,1)--(4,1)--(4,0);
    \draw(5,0)--(6,0)--(6,1)--(5,1)--(5,0);
    \draw[fill=black](6,0)--(7,0)--(7,1)--(6,1)--(6,0);
    \draw[fill=white,color=white](0,-0.5)--(6,-0.5);

    \draw[fill=black](1,1)--(2,1)--(2,2)--(1,2)--cycle;
    \draw[fill=black](2,1)--(3,1)--(3,2)--(2,2)--cycle;
    \draw[fill=black](3,1)--(4,1)--(4,2)--(3,2)--cycle;
    \draw(4,1)--(5,1)--(5,2)--(4,2)--cycle;
    \draw[fill=black](5,1)--(6,1)--(6,2)--(5,2)--cycle;
    \draw(2,2)--(3,2)--(3,3)--(2,3)--cycle;
    \draw[fill=black](3,2)--(4,2)--(4,3)--(3,3)--cycle;
    \draw[fill=black](4,2)--(5,2)--(5,3)--(4,3)--cycle;
    \draw[fill=black](3,3)--(4,3)--(4,4)--(3,4)--cycle;
    \draw[fill=white](0,-1.485);
\end{tikzpicture}
\label{fig:pred.spacetime}
}
\hspace{30pt}
\subfigure[The commmunication interpretation of
$\predp{F_{110}}$.]
{
\begin{tikzpicture}[scale=0.4]

    \draw(-0.25,-0.7)--(-0.25,-0.8)--(2.75,-0.8)--(2.75,-0.7) (1.25,-1.4) node {Alice};

    \draw[fill=black](-0.25,-0.5)--(0.75,-0.5)--(0.75,0.5)--(-0.25,0.5)--cycle;
    \draw[fill=black](0.75,-0.5)--(1.75,-0.5)--(1.75,0.5)--(0.75,0.5)--cycle;
    \draw(1.75,-0.5)--(2.75,-0.5)--(2.75,0.5)--(1.75,0.5)--cycle;
    \draw(3.25,-0.7)--(3.25,-0.8)--(7.25,-0.8)--(7.25,-0.7) (5.25,-1.4) node {Bob};
    \draw[fill=black](3.25,-0.5)--(4.25,-0.5)--(4.25,0.5)--(3.25,0.5)--cycle;
    \draw(4.25,-0.5)--(5.25,-0.5)--(5.25,0.5)--(4.25,0.5)--cycle;
    \draw(5.25,-0.5)--(6.25,-0.5)--(6.25,0.5)--(5.25,0.5)--cycle;
    \draw[fill=black](6.25,-0.5)--(7.25,-0.5)--(7.25,0.5)--(6.25,0.5)--cycle;

    \draw[fill=black](1,1)--(2,1)--(2,2)--(1,2)--cycle;
    \draw[fill=black](2,1)--(3,1)--(3,2)--(2,2)--cycle;
    \draw[fill=black](3,1)--(4,1)--(4,2)--(3,2)--cycle;
    \draw(4,1)--(5,1)--(5,2)--(4,2)--cycle;
    \draw[fill=black](5,1)--(6,1)--(6,2)--(5,2)--cycle;
    \draw(2,2)--(3,2)--(3,3)--(2,3)--cycle;
    \draw[fill=black](3,2)--(4,2)--(4,3)--(3,3)--cycle;
    \draw[fill=black](4,2)--(5,2)--(5,3)--(4,3)--cycle;

    \draw(4.2,3.5)--(4.3,3.5)--(4.3,4.5)--(4.2,4.5) (5.75,4) node {Result};
    \draw[fill=black](3,3.5)--(4,3.5)--(4,4.5)--(3,4.5)--cycle;

\end{tikzpicture}
\label{fig:pred.cc}
}
\end{center}
\caption{Problem \predp{F_{110}}.}
\label{fig:pred}
\end{figure}

More precisely, for every $1 \leq i \leq (n-1)$, $\predp{F}|_n^i:Q^{i}
\times Q^{n-i}\rightarrow Q$ is such that
$\predp{F}|_n^i(x,y)=(f^\ast(xy))_1$.  This function $\predp{F}|_n^i$ can
be represented as a $|Q|^i \times |Q|^{n-i}$ matrix.  In other words,
we give $i$ states to Alice (rows) and $n-i$ states to Bob (columns);
i.e. $X=Q^i$ and $Y=Q^{n-i}$. We denote by $M_{F}^{n,i}$ such a
matrix. In the examples of Figure~\ref{fig178}, we have $n=2i+1=13$
and $n=2i+1=15$ (for the elementary CA Rule 178).

\begin{figure}[ht]
\centering
\fbox{\includegraphics[scale=0.3]{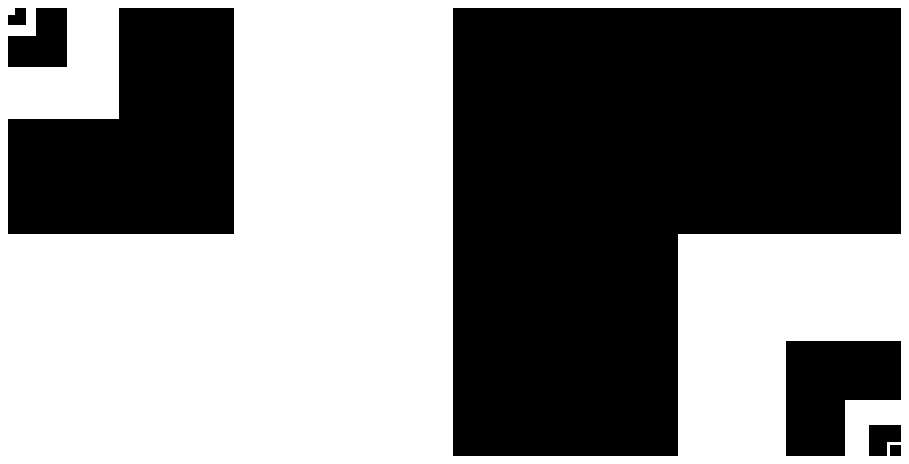}}\hspace*{30pt}
\fbox{\includegraphics[scale=0.3]{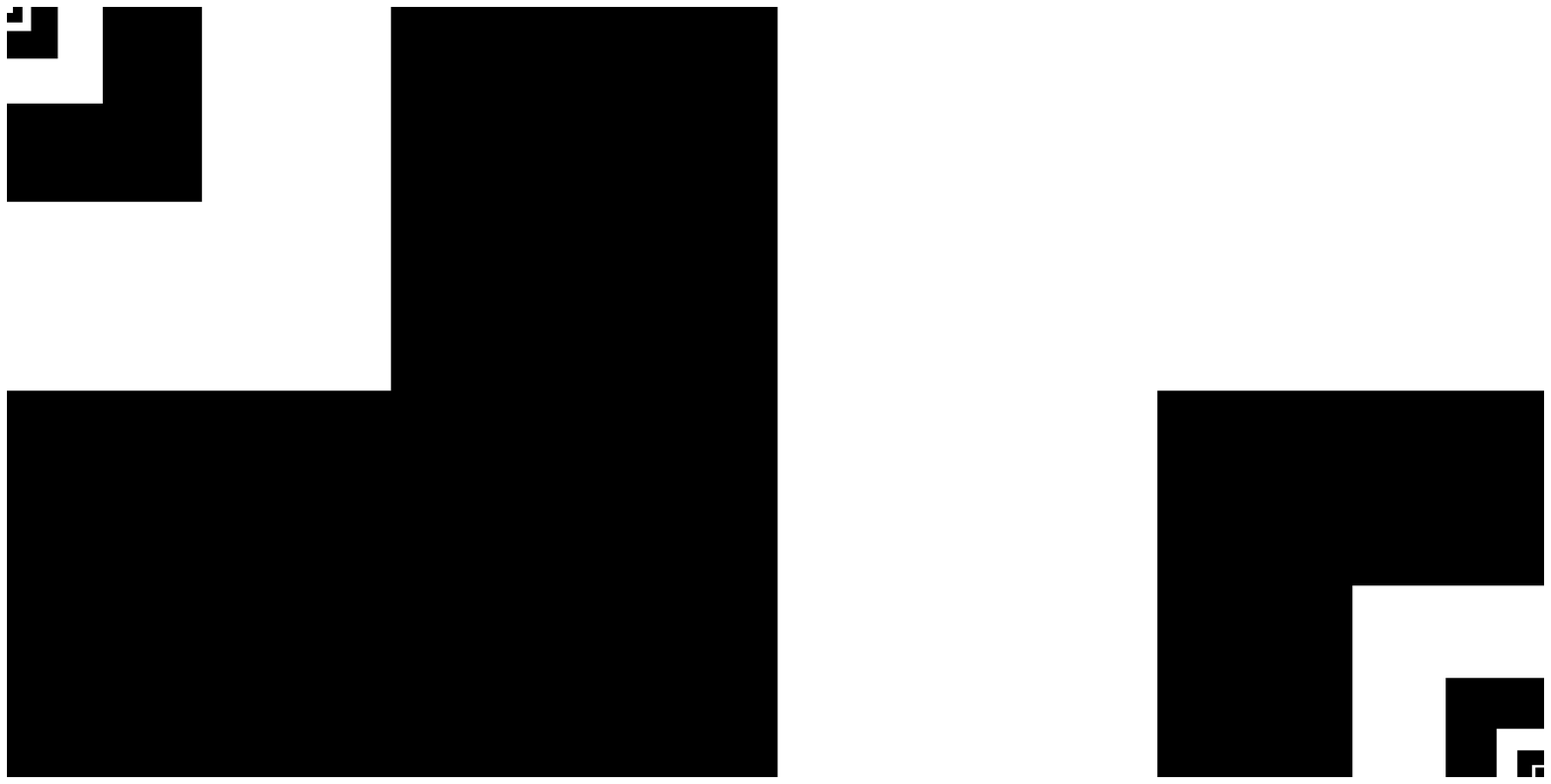}}
\caption{Matrices $M_{F_{178}}^{13,6}$ and $M_{F_{178}}^{15,7}$,
where ``178'' stands for the elementary CA Rule 178.}
\label{fig178}
\end{figure}

\begin{remark}
  We can consider the more restricted {\emph{one-round communication
      complexity}}.  In this setting only one party (either
  Alice or Bob) is allowed to send information. This restriction is
  justified by the fact that, according to a theorem of
  \cite{kushilevitz97}, by simply counting the number of different
  rows or columns of a certain matrix we obtain the {\emph{exact}}
  one-round communication complexity of the function. In our framework,
  the one round communication complexity of $\predp{F}|_n^i$
  corresponds to the minimum between the number of different rows and
  different columns of $M_{F}^{n,i}$.  Therefore, performing
  computational experiments in order to infer the one-round
  communication complexity of $\predp{F}|_n^i$, becomes an easy
  task.
\end{remark}

Recall that, given a CA $F$,  the communication complexity of \predp{F}
is defined as:

$$\cc{\predp{F}} = n\mapsto \max_{1\leq i\leq n-1} \CC\bigl({\predp{F}}|_n^i\bigr).$$

\begin{remark}
  In the above definition of $\predp{F}$, we choose a canonical local
  representation $(f,r)$ for the CA $F$. Replacing $f$ by
  another valid local representation can change the problem and its
  communication complexity. However this change would only introduce a
  multiplicative factor and therefore would not alter the main point
  of this paper (Section~\ref{sec:ncu}).
\end{remark}

Now we show that some well-known properties of CA induce small upper
bounds for the communication complexity of the prediction problem.
The results below are adaptations of ideas of~\cite{durr04} to the
formalism adopted in the present paper.

\begin{proposition}
  Let $F$ be any CA and $(f,r)$ be its canonical local representation.
  If there is a function ${g:\setN\rightarrow\setN}$ such that
  $f^{n}$ depends on only $g(n)$ cells, then $\cc{\predp{F}}
  \leq g(n)/2$.
\end{proposition}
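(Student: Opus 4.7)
The plan is to give, for each split $\predp{F}|_n^i$, a one-round protocol whose cost matches the desired bound, and then to take the maximum over $i$. The intuition is that the hypothesis says the single output cell of interest has a bounded ``footprint'' in the input, so only a bounded number of input cells need to cross the Alice/Bob boundary in order for one party to compute the value alone.

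Concretely, I would first fix $n$ and $i$ with $1\leq i\leq n-1$ and set $T = \lfloor (n-1)/(2r)\rfloor$, so that on input $w = xy \in Q^n$ the problem asks for $(f^T(w))_1$. By the hypothesis on $g$, this value depends only on the letters of $w$ at some fixed set of positions $I\subseteq\{1,\ldots,n\}$ with $|I|\leq g(T)$. Splitting $I$ as $I_A = I\cap\{1,\ldots,i\}$ and $I_B = I\cap\{i+1,\ldots,n\}$, at least one of $|I_A|$ and $|I_B|$ is at most $g(T)/2$.

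The protocol is then essentially forced: the party owning the smaller of $I_A,I_B$ sends the corresponding letters of its input (on $\lceil\log_2|Q|\rceil$ bits per letter); the other party then knows $w$ restricted to all of $I$ and outputs $(f^T(w))_1$ at its leaf. This gives $\CC(\predp{F}|_n^i) \leq \tfrac{g(T)}{2}\lceil\log_2|Q|\rceil$. Taking the maximum over $i$ and noting that $T\leq n$ (together with the natural monotonicity one may assume on $g$, or else replacing it by its non-decreasing hull) yields the claimed bound, the $\lceil\log_2|Q|\rceil$ factor being a constant absorbed in the statement since $|Q|$ is fixed for a given CA.

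The one mild subtlety is the mismatch between $n$, which indexes $\cc{\predp{F}}$, and the iteration count $T\approx n/(2r)$, which is the natural argument of $g$; since we only want an upper bound this is harmless. Otherwise there is no genuine obstacle: the lemma is just the translation of ``small light-cone footprint'' into ``low communication'', and the entire content is the one-shot protocol described above.
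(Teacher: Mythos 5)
Your argument is correct and is exactly the intended one: the paper gives no proof of this proposition (it only says the results are adaptations of ideas from D\"urr et al.), and the ``split the light-cone footprint at the cut point, the poorer party ships its at most $g/2$ relevant cells'' protocol is the standard argument being alluded to. The only blemish is the one you already flag: literally your protocol costs $\tfrac{g(T)}{2}\lceil\log_2|Q|\rceil$ bits with $T=\lfloor (n-1)/(2r)\rfloor$, so the stated bound $g(n)/2$ is only obtained up to this constant factor (and monotonicity of $g$), which is harmless for everything the paper does with the proposition but means the inequality as printed is not achieved verbatim for $|Q|>2$.
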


In the work of M.~Sablik~\cite{sablik08}, CA which are equicontinuous
in some direction are considered. Following Theorem 4.3
of~\cite{sablik08}, they have a bounded number of dependant cells
(i.e, a bounded function $g(n)$). A well known example of such CA are
the nilpotent CA (a CA is \emph{nilpotent} if it converges to a unique
configuration from any initial configuration, or equivalently, if
$F^t$ is a constant function for any large enough $t$).

\begin{corollary}
 If $F$ is an equicontinuous CA in some direction then
\[\cc{\predp{F}} \in O(1).\]
\end{corollary}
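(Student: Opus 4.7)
The plan is to obtain this statement as a direct combination of the preceding proposition and the cited result of Sablik. The preceding proposition asserts that whenever $f^n$ depends on at most $g(n)$ cells, one has $\cc{\predp{F}}\leq g(n)/2$; the corollary therefore reduces to exhibiting a constant bound on the number of cells on which $f^n$ depends, in the case when $F$ is equicontinuous in some direction.

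First I would invoke Theorem~4.3 of~\cite{sablik08}, which provides exactly this: if $F$ is equicontinuous along some direction, then there exists a constant $C$ (depending on $F$ and on the direction, but not on $n$) such that, when viewed through the appropriate space-time slope, each iterate $f^n$ has dependency bounded by $C$. Thus we may take $g\equiv C$ in the preceding proposition, yielding $\cc{\predp{F}}\leq C/2$, i.e.\ $\cc{\predp{F}}\in O(1)$.

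The only subtlety, and essentially the sole place where anything needs to be checked rather than just quoted, lies in reconciling the direction of equicontinuity with the particular ``right-diagonal'' propagation implicit in the definition of $\predp{F}$ (which reads the first letter of $f^\ast(u)$, thereby tracking a specific slope in the space-time diagram). If the direction of equicontinuity does not coincide with this slope, one must either reindex the input configuration by an appropriate spatial shift so that the relevant dependency cone collapses to a bounded set, or observe that a shift of the configuration by a linear function of $n$ only affects $\cc{\predp{F}}$ by an additive constant (since Alice and Bob can locally shift their own halves of the input without communication, up to boundary cells whose states can be exchanged in $O(1)$ bits). Either way, the bounded-dependency hypothesis carries over and the proposition applies. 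I expect this direction-matching observation to be the main, albeit minor, obstacle; everything else is immediate from the quoted results.
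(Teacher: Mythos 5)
Your proposal is correct and follows exactly the route the paper takes: invoke Theorem~4.3 of Sablik to obtain a bounded dependency function $g$, then apply the preceding proposition with $g$ constant. The additional remark about reconciling the direction of equicontinuity with the slope implicit in $\predp{F}$ is a reasonable point of care that the paper leaves implicit, but it does not change the argument.
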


Another set of CA with that property is the set of linear CA.  
A CA $F$ with state set $S$ is
linear if there is an operator $\oplus$ such that ${(S,\oplus)}$ is a
semi-group with neutral element $e$ and for all configurations $c$ and
$c'$ we have:
\[F(c\overoplus c') = F(c)\overoplus F(c'),\] where
$\overoplus$ denotes the uniform (cell-by-cell) extension of $\oplus$.

\begin{proposition}
If $F$ is a linear CA then $\cc{\predp{F}} \in O(1)$
\end{proposition}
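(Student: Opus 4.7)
The plan is to exploit the fact that linearity allows each party to contribute independently to the final result. With $e$ the neutral element, write $e^k$ for the word $e\cdots e$ of length $k$. For $x\in Q^i$ and $y\in Q^{n-i}$, the key observation is the decomposition
\[xy = \bigl(x\cdot e^{n-i}\bigr) \overoplus \bigl(e^i\cdot y\bigr),\]
which holds pointwise because $e$ is neutral. If this decomposition survives the application of $f^*$, then Alice can compute $f^*(x\cdot e^{n-i})$ from her input alone, send its first letter (an element of $Q$) to Bob, and Bob combines it with $f^*(e^i\cdot y)$ through $\oplus$ to recover the answer.

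First I would check that the linearity of the global rule $F$ induces a coordinate-wise linearity of $f$, namely
\[f(u_1 \oplus v_1,\ldots,u_{2r+1} \oplus v_{2r+1}) = f(u_1,\ldots,u_{2r+1}) \oplus f(v_1,\ldots,v_{2r+1}),\]
which is immediate from $F(c \overoplus c')_z = F(c)_z \oplus F(c')_z$ by picking configurations $c,c'$ whose neighborhoods around $z$ realise any prescribed pair of tuples. A straightforward induction on $t$ then lifts this property to each iterate $f^t:Q^n\rightarrow Q^{n-2rt}$, and hence to $f^*$. Since $x\cdot e^{n-i}$ and $e^i\cdot y$ both have length $n$, $f^*$ sends them to words of the same length and
\[f^*(xy) = f^*(x\cdot e^{n-i}) \overoplus f^*(e^i\cdot y).\]
Taking first letters yields $\predp{F}|_n^i(x,y) = \bigl(f^*(x\cdot e^{n-i})\bigr)_1 \oplus \bigl(f^*(e^i\cdot y)\bigr)_1$.

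The protocol is then trivial: Alice computes $\alpha = \bigl(f^*(x\cdot e^{n-i})\bigr)_1 \in Q$ and sends it to Bob using $\lceil \log_2 |Q|\rceil$ bits; Bob computes $\beta = \bigl(f^*(e^i\cdot y)\bigr)_1$ and outputs $\alpha \oplus \beta$. This works for every split $i$, and the number of transmitted bits is a constant depending only on $F$, so $\cc{\predp{F}}(n) \leq \lceil \log_2 |Q|\rceil \in O(1)$. There is no real obstacle here; the only point that needs mild care is the lifting of linearity from $F$ to $f^*$, which requires handling the shrinking domain of the iterates but follows by induction.
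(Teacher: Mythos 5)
Your proposal is correct and follows essentially the same route as the paper's proof, which also has Alice and Bob each compute the image assuming the other party holds only neutral elements and then combine by linearity. The paper leaves the details (the lift of linearity from $F$ to $f$ and to $f^\ast$, and the order of the $\oplus$-decomposition) implicit, citing \cite{durr04}, whereas you spell them out; both arguments are sound.
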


\begin{proof}
The proof appears in~\cite{durr04} in a different setting. The idea
is that there is a simple one-round protocol to compute linear
functions: Alice and Bob can each compute on their own the image the
function would produce assuming the other party has only the neutral
element as input; then Alice or Bob communicate this result to the
other who can answer the final result by linearity.
\end{proof}

\subsection{Invasion}
\label{sec:inva}
Let $F$ be a CA and let $u$ be a given word.  Roughly, the problem
\invap{F}{u} is defined as follows: Given an input word $w$, we
define the $u$-periodic configuration $p_u$ on the one hand, and the
configuration $p_u(w)$ obtained by putting the word $w$ at the origin
over $p_u$ on the other hand; the invasion problem consists in determining
whether the differences between $p_u$ and $p_u(w)$ will expand to an
infinite width as time tends to infinity (hatched surface on
Figure~\ref{fig:inva}).


As we show in Proposition \ref{subsubsec:undecidable}, the general
case is, from the point of view of classical algorithmic theory,
undecidable.

\begin{figure}[ht]
\begin{center}
\begin{tikzpicture}[scale=0.5]
\draw[dashed](1,0.5)--(3,0.5);
\draw(3.25,0)--(6.25,0)--(6.25,1)--(3.25,1)--cycle (4.75,0.5) node {u};
\draw(6.5,0)--(9.5,0)--(9.5,1)--(6.5,1)--cycle (8,0.5) node {u};
\draw[fill=gray!20!white](9.75,0)--(15.75,0)--(15.75,1)--(9.75,1)--cycle (11.15,0.5) node {Alice}
    (14.1,0.5) node {Bob};
\draw[dashed] (12.5,0)--(12.5,1);
\draw (12.5, -0.5) node {$w$};
\draw(16,0)--(18.25,0)--(18.25,1)--(16,1) (16.625,0.5) node {u};
\draw(18.5,0)--(21.5,0)--(21.5,1)--(18.5,1)--cycle (20,0.5) node {u};
\draw[dashed](21.75,0.5)--(23.75,0.5);
\draw[white](5.75,5)--(19.75,5);
\path[draw=black,pattern=big north east lines,decorate,decoration=random steps](9.75,1)--(5.75,5)
    --(19.75,5)--(15.75,1);
\draw[draw=white,fill=white,decorate,decoration=random steps](5,4.5) rectangle (20,5.5);
\end{tikzpicture}
\end{center}
\caption{The \textsc{invasion} problem}
\label{fig:inva}

\end{figure}
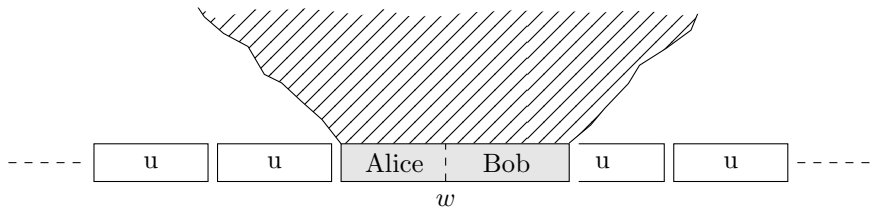

Now we give formal definitions.

\begin{definition}
  \label{def:invasion}
  Let $u=u_1 \ldots u_l$ be a finite word.  Let $p_u$ be such that for
  all $i\in\setZ$, $p_u[i]=u[i\mod l]$.

 \begin{itemize}
  \item we consider the ultimately periodic orbit
    ${\bigl(F^t(\perio{u})\bigr)_t}$ as the reference orbit;
  \item for each ${x_1,\ldots,x_n\in Q}$, we define the configuration
    $\perio{u}(x_1,\ldots,x_n)$ obtained by modifying
    $\perio{u}$ as follows:
    \[\perio{u}(x_1,\ldots,x_n)_z =
    \begin{cases}
      (\perio{u})_z&\text{for }z\leq 0\text{ or }z\geq n+1,\\
      x_z&\text{otherwise.}
    \end{cases}\]
  \item for each $t$, we denote $\lediff{t}$ and $\ridiff{t}$ the
    lefmost and rightmost differences between the $t^\text{th}$ images
    of $\perio{u}$ and $\perio{u}(x_1,\ldots,x_n)$: 
    \begin{align*}
      \lediff{t} &= \min\bigl\{z : F^t(\perio{u})_z\not= F^t\bigl(\perio{u}(x_1,\ldots,x_n)\bigr)_z\bigr\},\\
      \ridiff{t} &= \max\bigl\{z : F^t(\perio{u})_z\not= F^t\bigl(\perio{u}(x_1,\ldots,x_n)\bigr)_z\bigr\}.
    \end{align*}
  \item then $\invap{F}{u}(x_1 \ldots x_n)$ equals $1$ if
    ${\ridiff{t}-\lediff{t} \rightarrow_t \infty}$ and $0$ otherwise.
  \end{itemize}
\end{definition}

As explained before, we associate to any $F$ and $u$, the communication
complexity of $\invap{F}{u}$ defined as $\cc{\invap{F}{u}}$.





Some CA have by nature a trivial invasion complexity because their
dynamics consists in propagating errors systematically. This is the
case of (positively) expansive CA. Recall that $F$ is (positively)
expansive if there is some $\epsilon>0$ such that:
\[\forall x,y,\ x\not=y\Rightarrow \exists t, d(F^t(x),F^t(y))\geq\epsilon\]
where $d$ is the Cantor distance.

\begin{proposition}
  Let $F$ be a positively expansive CA. Then for all $u$ we have
  ${\cc{\invap{F}{u}} = 1}$.
\end{proposition}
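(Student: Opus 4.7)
The plan is to give an explicit one-bit protocol for $\invap{F}{u}$ together with a matching lower bound. The crux is to show that, when $F$ is positively expansive, $\invap{F}{u}(x_1\cdots x_n)=1$ if and only if the input word fails to coincide with the segment $(\perio{u})_1\cdots(\perio{u})_n$ of the periodic background; once this is in hand, Alice and Bob can each test their own halves locally and combine the results with a single bit of communication.

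The heart of the argument is the following key claim: if $c=\perio{u}$ and $c'=\perio{u}(x_1,\ldots,x_n)$ are distinct, then $\ridiff{t}-\lediff{t}\to\infty$. Let $\epsilon=2^{-k_0}$ be a positive-expansivity constant, so that any two configurations whose $F^t$-images are at Cantor distance at least $\epsilon$ must differ at some cell in $[-k_0,k_0]$. Since $F$ commutes with the shift $\sigma$, applying positive expansivity to each shifted pair $\sigma^m(c),\sigma^m(c')$ yields, for every $m\in\setZ$, some time $t_m$ at which $F^{t_m}(c),F^{t_m}(c')$ disagree at a cell in $[m-k_0,m+k_0]$. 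Letting $m$ range over large positive and large negative integers forces the width $\ridiff{t}-\lediff{t}$ to be \emph{unbounded} over~$t$.

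The main obstacle is upgrading ``unbounded'' to ``tends to infinity''. The plan is a compactness argument: suppose for contradiction that some $N$ and an infinite subsequence $t_k\to\infty$ satisfy $\ridiff{t_k}-\lediff{t_k}\leq N$; translate each pair $(F^{t_k}(c),F^{t_k}(c'))$ by $-\lediff{t_k}$ to push all its differences into $[0,N]$, and use compactness of $Q^{\setZ}$ together with the finiteness of the cells in $[0,N]$ to extract a convergent sub-subsequence with limit $(y,y')$ satisfying $y\neq y'$ and agreeing outside $[0,N]$. Re-applying the shift-plus-positive-expansivity argument to $(y,y')$, then transferring the conclusion back to the original sequence via continuity of $F^s$ and shift-equivariance, should produce, at times $t_k+s$, differences at cells whose distance from the lightcone emanating from the bounded window at time $t_k$ grows without bound as one pushes the shift parameter, yielding the contradiction. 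This step is the delicate part of the proof and in fact is exactly where the full strength of positive expansivity (rather than mere expansivity at some instant) has to be exploited.

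Granted the key claim, the protocol is immediate. For the split at index $i$, Alice sets $a=1$ iff $x_1\cdots x_i=(\perio{u})_1\cdots(\perio{u})_i$, Bob sets $b=1$ iff $x_{i+1}\cdots x_n=(\perio{u})_{i+1}\cdots(\perio{u})_n$, Alice sends $a$ (one bit), and Bob outputs $\neg(a\wedge b)$, which by the claim equals $\invap{F}{u}(x_1\cdots x_n)$. This gives $\cc{\invap{F}{u}}\leq 1$. For $n$ large enough, both output values are realized and the answer depends genuinely on both halves (the all-matching input yields $0$, any single-cell modification yields $1$), so no zero-bit protocol can compute it and $\cc{\invap{F}{u}}\geq 1$, hence $\cc{\invap{F}{u}}=1$.
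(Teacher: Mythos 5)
Your reduction to a one‑bit protocol is exactly the paper's: invasion occurs if and only if ${\perio{u}(x_1,\ldots,x_n)\not=\perio{u}}$, each party checks her/his half against the background locally, one bit is exchanged, and the (easy) lower bound of one bit holds because neither party's input alone determines the answer. The difference lies entirely in how the key dynamical claim is justified. The paper does not prove it: it invokes K\r urka's classical result that a positively expansive CA has positive left and right propagation speeds, i.e.\ ${\lediff{t}\leq-\alpha t}$ and ${\ridiff{t}\geq\alpha t}$ for some ${\alpha>0}$, which immediately gives ${\ridiff{t}-\lediff{t}\to\infty}$. You attempt to reprove this from the bare definition, and that attempt has a genuine gap.

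Concretely, applying positive expansivity to the shifted pair ${\sigma^m(c),\sigma^m(c')}$ produces, for each $m$, \emph{some} time $t_m$ at which the orbits differ within distance $k_0$ of position $m$; doing this for large positive and large negative $m$ yields ${\limsup_t\ridiff{t}=+\infty}$ and ${\liminf_t\lediff{t}=-\infty}$, but at \emph{different} times $t_m$ and $s_m$. This does not make the width ${\ridiff{t}-\lediff{t}}$ unbounded: nothing so far rules out a difference set of bounded width that slowly drifts right for a long stretch and then left (the shift map exhibits exactly this drifting behaviour; it fails positive expansivity, but your argument never uses more than ``for every shifted pair there exists one good time,'' which is too weak to exclude drift). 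So the step ``forces the width to be unbounded over $t$'' is unjustified, and the subsequent compactness argument --- which you yourself flag as the delicate part and leave as a plan --- is built on it. A correct self-contained proof needs the \emph{uniform} consequence of positive expansivity (for every $M$ there is $T_M$ such that any two configurations differing in ${[-M,M]}$ produce a central difference by time $T_M$), obtained by compactness, and then a bootstrapping argument to turn this into a linear propagation speed; this is precisely K\r urka's theorem, and citing it, as the paper does, is the economical route. (One further point both you and the paper elide, but which does follow from positive expansivity: the two orbits never merge, so $\lediff{t}$ and $\ridiff{t}$ are defined for all $t$.)
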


\begin{proof}
  Fix any $u$ and consider any ${(x_1,\ldots,x_n)}$ such that
  ${\perio{u}(x_1,\ldots,x_n)\not=\perio{u}}$. By classical results of
  P.~K\r urka~\cite{kurka97}, there is a positive constant $\alpha$
  (average propagation speed) such that ${\lediff{t}} \leq -\alpha t$
  and ${\ridiff{t}} \geq \alpha t$. Therefore, invasion occurs if and
  only if:
  \[\perio{u}(x_1,\ldots,x_n)\not=\perio{u}.\]
  
  Testing this condition can be done with only $1$ bit of
  communication: Either Alice or Bob communicates whether she (or he)
  sees any difference between her (or his) input and the corresponding part of
  $p_u$; then the other party  can answer. The proposition follows.
\end{proof}



\subsection{Cycle length}
\label{sec:cycl}

For this last problem, we consider spatially periodic configurations.
Since there are only a finite number of such configurations of a given
period size, and the size of the period does not grow with time, then
clearly the evolution becomes periodic (in time) after a certain
number of steps (see Figure~\ref{fig:cycle-length} where successive
steps are represented by successive concentric circles). Roughly
speaking, the \textsc{cycle} problem consists in determining whether the
length of this ultimate (temporal) period is small, starting from a
given (spatially) periodic initial configuration. The formal definition
follows.

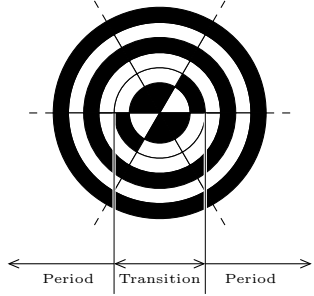
\begin{figure}[ht]
\begin{center}
\begin{tikzpicture}[scale=0.2]
\draw[fill=white](0.0:0)--(0.0:2) arc (0.0:60.0:2)--(60.0:0) arc (60.0:0.0:0);
\draw[fill=black](60.0:0)--(60.0:2) arc (60.0:120.0:2)--(120.0:0) arc (120.0:60.0:0);
\draw[fill=black](120.0:0)--(120.0:2) arc (120.0:180.0:2)--(180.0:0) arc (180.0:120.0:0);
\draw[fill=white](180.0:0)--(180.0:2) arc (180.0:240.0:2)--(240.0:0) arc (240.0:180.0:0);
\draw[fill=black](240.0:0)--(240.0:2) arc (240.0:300.0:2)--(300.0:0) arc (300.0:240.0:0);
\draw[fill=black](300.0:0)--(300.0:2) arc (300.0:360.0:2)--(360.0:0) arc (360.0:300.0:0);
\draw[fill=black](0.0:2)--(0.0:3) arc (0.0:60.0:3)--(60.0:2) arc (60.0:0.0:2);
\draw[fill=white](60.0:2)--(60.0:3) arc (60.0:120.0:3)--(120.0:2) arc (120.0:60.0:2);
\draw[fill=white](120.0:2)--(120.0:3) arc (120.0:180.0:3)--(180.0:2) arc (180.0:120.0:2);
\draw[fill=black](180.0:2)--(180.0:3) arc (180.0:240.0:3)--(240.0:2) arc (240.0:180.0:2);
\draw[fill=white](240.0:2)--(240.0:3) arc (240.0:300.0:3)--(300.0:2) arc (300.0:240.0:2);
\draw[fill=white](300.0:2)--(300.0:3) arc (300.0:360.0:3)--(360.0:2) arc (360.0:300.0:2);
\draw[fill=white](0.0:3)--(0.0:4) arc (0.0:60.0:4)--(60.0:3) arc (60.0:0.0:3);
\draw[fill=white](60.0:3)--(60.0:4) arc (60.0:120.0:4)--(120.0:3) arc (120.0:60.0:3);
\draw[fill=white](120.0:3)--(120.0:4) arc (120.0:180.0:4)--(180.0:3) arc (180.0:120.0:3);
\draw[fill=white](180.0:3)--(180.0:4) arc (180.0:240.0:4)--(240.0:3) arc (240.0:180.0:3);
\draw[fill=white](240.0:3)--(240.0:4) arc (240.0:300.0:4)--(300.0:3) arc (300.0:240.0:3);
\draw[fill=white](300.0:3)--(300.0:4) arc (300.0:360.0:4)--(360.0:3) arc (360.0:300.0:3);
\draw[fill=black](0.0:4)--(0.0:5) arc (0.0:60.0:5)--(60.0:4) arc (60.0:0.0:4);
\draw[fill=black](60.0:4)--(60.0:5) arc (60.0:120.0:5)--(120.0:4) arc (120.0:60.0:4);
\draw[fill=black](120.0:4)--(120.0:5) arc (120.0:180.0:5)--(180.0:4) arc (180.0:120.0:4);
\draw[fill=black](180.0:4)--(180.0:5) arc (180.0:240.0:5)--(240.0:4) arc (240.0:180.0:4);
\draw[fill=black](240.0:4)--(240.0:5) arc (240.0:300.0:5)--(300.0:4) arc (300.0:240.0:4);
\draw[fill=black](300.0:4)--(300.0:5) arc (300.0:360.0:5)--(360.0:4) arc (360.0:300.0:4);
\draw[fill=white](0.0:5)--(0.0:6) arc (0.0:60.0:6)--(60.0:5) arc (60.0:0.0:5);
\draw[fill=white](60.0:5)--(60.0:6) arc (60.0:120.0:6)--(120.0:5) arc (120.0:60.0:5);
\draw[fill=white](120.0:5)--(120.0:6) arc (120.0:180.0:6)--(180.0:5) arc (180.0:120.0:5);
\draw[fill=white](180.0:5)--(180.0:6) arc (180.0:240.0:6)--(240.0:5) arc (240.0:180.0:5);
\draw[fill=white](240.0:5)--(240.0:6) arc (240.0:300.0:6)--(300.0:5) arc (300.0:240.0:5);
\draw[fill=white](300.0:5)--(300.0:6) arc (300.0:360.0:6)--(360.0:5) arc (360.0:300.0:5);
\draw[fill=black](0.0:6)--(0.0:7) arc (0.0:60.0:7)--(60.0:6) arc (60.0:0.0:6);
\draw[fill=black](60.0:6)--(60.0:7) arc (60.0:120.0:7)--(120.0:6) arc (120.0:60.0:6);
\draw[fill=black](120.0:6)--(120.0:7) arc (120.0:180.0:7)--(180.0:6) arc (180.0:120.0:6);
\draw[fill=black](180.0:6)--(180.0:7) arc (180.0:240.0:7)--(240.0:6) arc (240.0:180.0:6);
\draw[fill=black](240.0:6)--(240.0:7) arc (240.0:300.0:7)--(300.0:6) arc (300.0:240.0:6);
\draw[fill=black](300.0:6)--(300.0:7) arc (300.0:360.0:7)--(360.0:6) arc (360.0:300.0:6);
\draw[white,very thick](3,0)--(3,-12) (-3,0)--(-3,-12);
\draw[black,very thin](3,0)--(3,-12) (-3,0)--(-3,-12);
\draw[black](-4,0)--(4,0);
\draw[angle 45-angle 45](-3,-10)--(3,-10);
\draw[-angle 45](3,-10)--(10,-10);
\draw[angle 45-](-10,-10)--(-3,-10);
\draw(-6,-11)node{\tiny Period} 
(0,-11)node{\tiny Transition}
(6,-11)node{\tiny Period};
{[draw=black]
\foreach \x in {0,60,120,180,240,300}{
\draw[dashed](\x:7)--(\x:9);}
}
\end{tikzpicture}

\end{center}
\caption{The \cyclp{}{} problem on elementary CA Rule 33. Since the configurations
are cyclic, we can represent one configuration on a circle. Time goes from
the inner circle to outer circles,
zeros are white, and ones are black. For instance, the
initial configuration -- on the innermost circle -- is 011011. After one step, it
becomes 100100.}
\label{fig:cycle-length}
\end{figure}

\begin{definition}
  Let $F$ be a CA and let $k\geq1$.  For any $u\in Q^\ast$ we denote
  by $\lambda(u)$ the length of the ultimate period of the orbit of
  configuration $\perio{u}$ under $F$:
  \[\lambda(u) = \min \bigl\{p : \exists t_0, \forall t\geq t_0,
  F^t(p_u) = F^{t+p}(p_u)\bigr\}.\] 
  The problem $\cyclp{F}{k}$ is then defined by:
  \[\cyclp{F}{k}(u) =
  \begin{cases}
    1 &\text{if }\lambda(u)\leq k,\\
    0 &\text{otherwise.}
  \end{cases}\]
\end{definition}

One of the interests of the cycle length problem lies in the following
complexity upper bound for reversible CA.

\begin{proposition}
  \label{prop:reversible}
  Let $F$ be any reversible CA. Then, for any fixed $k$, we have:
  \[\cc{\cyclp{F}{k}}\in O(1).\]
\end{proposition}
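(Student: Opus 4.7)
The plan rests on two observations. First, reversibility collapses the condition $\lambda(u) \leq k$ to the clean statement that there exists some $j \in \{1, \ldots, k\}$ with $F^j(p_u) = p_u$. Indeed, the set of configurations whose spatial period divides $n = |u|$ is finite and invariant under both $F$ and $F^{-1}$ (which also commutes with the shift), so $F$ restricted to this set is a bijection, and the orbit of $p_u$ is a pure cycle with no transient. By spatial periodicity, the test $F^j(p_u) = p_u$ further reduces to the finite conjunction of equalities $F^j(p_u)_z = u[z \bmod n]$ for $z \in \{1, \ldots, n\}$.

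Second, since $k$ is fixed and $f$ has radius $r$, the map $F^j$ has radius at most $kr$ for every $j \leq k$, so $F^j(p_u)_z$ depends only on the cells of $p_u$ at positions $[z-kr, z+kr] \bmod n$. For the split $u = xy$ with $|x| = i$, call a position $z \in \{1, \ldots, n\}$ \emph{A-interior} (resp.\ \emph{B-interior}) when this window lies entirely in Alice's (resp.\ Bob's) cells, and \emph{boundary} otherwise. Only positions within $kr$ of one of the two cut points ($i$ or $n \equiv 0$) can be boundary, so there are $O(kr)$ of them; moreover, each such window only touches $O(kr)$ cells of each party near the corresponding cut.

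The protocol is then immediate. Alice sends Bob her $O(kr)$ cells lying within $2kr$ of a cut point --- her first and last border blocks --- together with a $k$-bit vector whose $j$-th entry records whether $F^j(p_u)_z = u[z \bmod n]$ holds at every A-interior $z$ (a check she carries out alone). Bob now has everything he needs to complete the test at boundary positions and at his B-interior for each $j$, and outputs $1$ iff some $j \leq k$ passes all three sets of checks. Degenerate cases, when one of $n$, $i$, $n - i$ is itself $O(kr)$, are handled by having the short party transmit its whole input. The total communication is $O(kr \log |Q|) + k$ bits, a constant independent of $n$, giving $\cc{\cyclp{F}{k}} \in O(1)$.

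The main source of friction --- more bookkeeping than real difficulty --- is handling wraparound carefully: by periodicity of $p_u$, the cells near the left end of $u$ appear in windows centered near its right end, which is why Alice must send \emph{both} her leftmost and her rightmost border blocks, and one must verify that the $O(kr)$ count of boundary positions holds uniformly in $i$ (including in the small-$n$ regime).
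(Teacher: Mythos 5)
Your proposal is correct and follows essentially the same route as the paper: reversibility makes the orbit of $p_u$ a pure cycle, so $\lambda(u)\leq k$ reduces to testing $F^j(p_u)=p_u$ for some $j\leq k$, and this test needs only the $O(kr)$ cells adjacent to the two cut points plus $O(k)$ summary bits. Your version merely spells out the wraparound bookkeeping that the paper's proof leaves implicit.
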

\begin{proof}
  For a reversible CA, orbits of periodic configurations are not only
  ultimately periodic but also periodic. More precisely, for any
  periodic configuration $c$, the cycle length starting from $c$ is
  less than $k$ if and only if:
  \[\exists t\leq k: F^t(c)=c.\]

  Thus, Alice and Bob can simply simulate the automaton for $k$ steps,
  then check if a configuration repeats during these steps : this can be done with
  $4k\cdot r\cdot \lfloor 1+\log Q\rfloor$ bits, to transmit the cells next to the
  border between Alice and Bob's respective parts, then one bit for Alice to
  tell Bob if a configuration appeared twice during the $k$ steps.
  
\end{proof}

\section{The three corresponding necessary conditions for 
intrinsic universality}
\label{sec:univ}

In this section we show that intrinsic universality implies that the
communication complexity of the three canonical problems described above
must be maximal. Before giving precise definitions, recall that a CA is
intrinsically universal if it is able to simulate any other CA. Our approach
with communication complexity proceeds in two steps:

\begin{itemize}
\item we show that the simulation of $F$ by $G$ implies a reduction
  from any canonical problem for $F$ to the corresponding problem for
  $G$ in such a way that the communication complexity is preserved (up to
  some distortions involving only multiplicative factors);

\item we show the existence of maximal communication
  complexity CA for each of the canonical problems.
\end{itemize}

Before developing these two steps, we give formal definitions for
simulations and intrinsic universality.

\subsection{Simulations and universality}

The base ingredient is the relation of sub-automaton.  A CA $F$ is a
\emph{sub-automaton} of a CA $G$, denoted by ${F\sac G}$, if there is
an injective map $\iota$ from $\alphA$ to $\alphB$ such that
${\overline{\iota}\circ F=G\circ \overline{\iota}}$, where
${\overline{\iota}:\alphA^\ZZ\rightarrow \alphB^\ZZ}$ denotes the
uniform extension of $\iota$.

A CA $F$ simulates a CA $G$ if some \emph{rescaling} of $G$
is a sub-automaton of some \emph{rescaling} of $F$. The ingredients
of the rescalings are simple: packing cells into blocks, iterating the
rule and composing with a translation. Formally, given any state set
$Q$ and any $m\geq 1$, we define the bijective packing map ${\bloc{m}:
  Q^\ZZ\rightarrow \bigl(Q^m\bigr)^\ZZ}$ by:
\[\forall z\in\ZZ : \bigl(\bloc{m}(c)\bigr)(z) = \bigl(c(mz),\ldots,c(mz+m-1)\bigr)\]
for all ${c\in Q^\ZZ}$. The rescaling $\grp{F}{m,t,z}$ of $F$ by
parameters $m$ (packing), ${t\geq 1}$ (iterating) and ${z\in\ZZ}$
(shifting) is the CA of state set $Q^m$ and global rule:
\[\bloc{m} \circ \sigma_z\circ F^t \circ \debloc{m}.\]
The fact that the above function is the global rule of a cellular
automaton follows from Curtis-Lyndon-Hedlund theorem \cite{hedlund69}
because it is continuous and commutes with translations. With these
definitions, we say that $G$ simulates $F$, denoted ${F\simu G}$, if
there are rescaling parameters $m_1$, $m_2$, $t_1$, $t_2$, $z_1$ and
$z_2$ such that ${\grp{F}{m_1,t_1,z_1}\sac\grp{G}{m_2,t_2,z_2}}$.

We can now naturally define the notion of universality associated to
this simulation relation.

\begin{definition}
  $F$ is \emph{intrinsically universal} if for all $G$ it holds that
  ${G\simu F}$. $F$ is \emph{reversible universal} if for all
  \emph{reversible} $G$ it holds that ${G\simu F}$.
\end{definition}

We consider the following relation of comparison
between functions from $\setN$ to $\setN$:
\[\phi_1\moindre \phi_2\iff\exists \alpha,\beta,\gamma,\delta \geq 1,\forall n\in\setN:
\phi_1(\alpha n)\leq\beta\phi_2(\gamma n)+\delta.\]

\begin{remark}
  All the functions we will compare by $\moindre$ are in $O(n)$ since
  they come from a communication complexity problem. Moreover, the set
  of such functions that are in $\Omega(n)$ form an equivalence class
  for $\moindre$. Although we sometimes give more precise bounds, most
  of the paper focuses on whether or not some function belongs to this class.
\end{remark}

\begin{proposition}
  \label{prop:simpred}
  If $F\simu G$ then ${\cc{\predp{F}}\moindre\cc{\predp{G}}}$.
\end{proposition}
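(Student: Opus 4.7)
The plan is to translate, from any protocol computing $\predp{G}$ on inputs of length $N$, a protocol computing $\predp{F}$ on inputs of length $n$ (with $N$ proportional to $n$) whose cost blows up only by constant factors. The key ingredient is that the simulation $F\simu G$ provides, via the sub-automaton injection $\iota: Q_F^{m_1}\to Q_G^{m_2}$ underlying the rescaling identity $\grp{F}{m_1,t_1,z_1}\sac\grp{G}{m_2,t_2,z_2}$, a \emph{local} encoding $\phi = \bloc{m_2}^{-1}\circ\overline{\iota}\circ\bloc{m_1}$ of any $Q_F$-configuration as a $Q_G$-configuration. Locality is crucial here: the $m_2$ cells of $\phi(c)$ lying in block number $z$ depend only on the $m_1$ cells of $c$ in that same block, so Alice and Bob can apply $\phi$ independently on their respective halves of the input.

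The reduction then proceeds in three stages. First, given an instance $(x,y)\in Q_F^i\times Q_F^{n-i}$ of $\predp{F}|_n^i$, Alice and Bob exchange $O(m_1\log|Q_F|)=O(1)$ bits so as to realign the split point with a multiple of $m_1$; each party then independently encodes her or his aligned portion by block-packing, applying $\iota$, and unpacking, producing an encoded instance $(x',y')$ of $\predp{G}|_N^j$ with $N = n\, m_2/m_1$ and $j = i\, m_2/m_1$. Second, they run the assumed $\predp{G}$-protocol on $(x',y')$, incurring cost at most $\cc{\predp{G}}(N)$. Third, since $\predp{G}$ returns only a single $Q_G$-cell whereas inverting $\iota$ requires an entire block of $m_2$ cells, they repeat the protocol on $O(m_2)=O(1)$ shifted encodings to harvest the full block, apply $\iota^{-1}$ to recover the corresponding $Q_F$-block, and read off the desired cell $(f^\ast(xy))_1$.

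Correctness rests on iterating the sub-automaton identity $k$ times: doing so shows that $kt_1$ steps of $F$, composed with a shift of $kz_1 m_1$ cells, become, after encoding through $\phi$, exactly $kt_2$ steps of $G$ composed with a shift of $kz_2 m_2$ cells. By restricting attention to input lengths $n$ of the form $\alpha n_0$, where $\alpha$ is a common multiple of $m_1$, $2r_F t_1$ and $2r_G t_2$, I can arrange that the prediction horizons $T_F = \lfloor (n-1)/(2r_F)\rfloor$ and $T_G = \lfloor (N-1)/(2r_G)\rfloor$ satisfy $T_F t_2 = T_G t_1$ up to a constant additive slack, so that the first cell of $f^{T_F}(xy)$ is genuinely recoverable from a bounded window around the first cell of $g^{T_G}(x'y')$.

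The main obstacle will be taming this cloud of alignment conditions simultaneously: boundary effects at the split point, modular mismatches between $T_F$ and $T_G$, the constant shifts induced by $z_1,z_2$, and the $m_2$-to-$1$ expansion between $Q_F$- and $Q_G$-cells at decoding time. Each of them, however, contributes only an $O(1)$ additive or multiplicative overhead, and hence fits within the four constants $\alpha,\beta,\gamma,\delta$ tolerated by the definition of $\moindre$. No conceptually new machinery beyond locality of $\phi$ and the iterated rescaling identity is required.
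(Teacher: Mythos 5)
Your reduction is correct, but it is organized quite differently from the paper's proof, so a comparison is worth making. The paper factors the simulation ${\grp{F}{m_1,t_1,z_1}\sac\grp{G}{m_2,t_2,z_2}}$ into its four elementary ingredients --- sub-automaton, iteration, shift, packing --- proves a separate $\moindre$-inequality for each (e.g.\ ${\cc{\predp{\grp{F}{m,1,0}}}(n)\leq m\cdot\cc{\predp{F}}(n)}$ and its near-converse for packing, and a $\Theta$-equivalence for iteration), and then composes them using transitivity of $\moindre$. You instead build the composite local encoding ${\phi=\debloc{m_2}\circ\overline{\iota}\circ\bloc{m_1}}$ once and perform a single end-to-end reduction: realign the split to a multiple of $m_1$ with $O(1)$ bits, encode each half locally, invoke the $\predp{G}$ protocol a constant number of times to harvest a window of the image configuration, invert $\iota$, and finish by simulating the $O(1)$ residual steps of $F$ directly. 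The technical content is the same in both cases --- locality of the block encoding so that Alice and Bob encode independently, the iterated intertwining identity ${\phi\circ\sigma_{kz_1}\circ F^{kt_1}=\sigma_{kz_2}\circ G^{kt_2}\circ\phi}$, a bounded number of protocol invocations absorbed by the multiplicative constant $\beta$, and the restriction to input lengths that are multiples of a fixed $\alpha$, which is exactly what the definition of $\moindre$ tolerates. What the paper's decomposition buys is modularity: the packing and iteration lemmas are reused almost verbatim in the analogous propositions for the invasion and cycle-length problems. What your monolithic version buys is that all the alignment bookkeeping (split position, horizon mismatch $T_Ft_2\approx T_Gt_1$, the shifts $z_1,z_2$, the $m_1$-to-$m_2$ cell expansion) is confronted in one place rather than being distributed across three inequalities whose composition must then be checked; your level of rigour on these boundary issues matches the paper's own. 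No gap.
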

\begin{proof}
  We successively consider each ``ingredient'' involved in the simulation
  relation. 
  \begin{description} 
  \item[Sub-automaton: ] if ${F\sac G}$ then each valid protocol to
    compute $\predp{G}|_n^i$ is also a valid protocol to compute
    iterations of $\predp{F}|_n^i$ (up to state
    renaming).
  \item[Iterating: ]
    We have ${\cc{\predp{F^t}} \in\Theta(\cc{\predp{F}})}$.
    In fact, if we have a protocol for the prediction problem of $F^t$ -- which
    is an automaton of radius $t\cdot r$ -- then we can use it to predict $F$ :
    on a configuration $x$ of size $n$, we use the protocol to predict the result
    of iterating $\lfloor\frac{n}{r\cdot t}\rfloor$ times $F^t$,
    which gives a configuration of size at most $t\cdot r-1$. To do this, we
    just use the protocol at most $t\cdot r-1$ times to predict each cell
    of this configuration, then Alice or Bob conclude by simulating the
    automaton directly.
    
    The other direction is even simpler: a protocol for $\predp{F}$ can be used
    directly for $\predp{F^t}$ by just slightly reducing the input of Bob.
  \item[Shifting: ] This operation only affects the splitting of
    inputs. Since we always take in each case the splitting of maximum
    complexity, this has no influence on the final complexity
    function.  
  \item[Packing: ] let $F$ be any CA and $n$ be fixed.  Consider the
    problem $\predp{\grp{F}{m,1,0}}|_n^j$ for some $j$. Now consider
    any sequence of valid protocols $(P_i)$, one for each problem
    ${\predp{F}|_{nm}^i}$. It follows from the the definition of
    packing maps that $\predp{\grp{F}{m,1,0}}|_n^j$ can be solved by
    applying $m$ suitably chosen protocols in the sequence
    $(P_i)$. Therefore \[\cc{\predp{\grp{F}{m,1,0}}}(n)\leq
      m\cdot\cc{\predp{F}}(n)\]

    Reciprocally, one has for all $n$: \[\cc{\predp{F}}(n)\leq
      \cc{\predp{\grp{f}{m,1,0}}}({\lceil n/m\rceil})+m\] where the
    additional constant $m$ is used to deal with input splittings of
    $\predp{F}|_n$ which have no equivalent in
    $\predp{\grp{f}{m,1,0}}|_{\lceil n/m\rceil}$ because they do not
    cut the input at a position which is multiple of $m$.
  \end{description} 
  Therefore we have: $\cc{\predp{F}}\moindre\predp{\grp{F}{m,t,z}}$,
  $\predp{\grp{F}{m,t,z}}\moindre\cc{\predp{F}}$ and if ${F\sac G}$
  then ${\cc{\predp{F}}\moindre\predp{G}}$. The proposition follows.
\end{proof}

The following result shows that the invasion complexity is increasing
with respect to simulations.

\begin{proposition}
  \label{prop:siminva}
  If $F\simu G$ then for all $u$ there is $v$ such that
  \[\cc{\invap{F}{u}}\moindre\cc{\invap{G}{v}}.\]
\end{proposition}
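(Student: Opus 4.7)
The plan is to mirror the structure of Proposition~\ref{prop:simpred}: decompose the simulation $\grp{F}{m_1,t_1,z_1}\sac\grp{G}{m_2,t_2,z_2}$ into its four atomic rescaling ingredients (sub-automaton embedding, iteration, shifting, packing), and establish for each one that, for a canonically chosen word, the invasion complexity is preserved in both directions by $\moindre$. Composing these then yields the required word $v$ for $G$ built from $u$.

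The sub-automaton step is the only one that modifies the word non-trivially: if $F\sac G$ via $\iota$, set $v = \overline{\iota}(u)$; the intertwining $\overline{\iota}\circ F = G\circ\overline{\iota}$ together with the injectivity of $\iota$ sends the $F$-orbit of $p_u$ (and of every perturbation $p_u(x_1,\ldots,x_n)$) to the corresponding $G$-orbit, preserving the positions of differences and hence both $\lediff{t}$ and $\ridiff{t}$. Under shifting, widths are uniformly translated and therefore unchanged, so $v = u$ works. Under iteration one again takes $v = u$: although widths in the $F$-orbit are not monotone in $t$, the radius $r$ of $F$ forces the width to change by at most $2rt$ between consecutive multiples of $t$, so $\ridiff{t'}-\lediff{t'}$ diverges along the full orbit iff it does so along the $t$-spaced subsequence, yielding the pointwise equality $\invap{F}{u}=\invap{F^t}{u}$. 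Under packing with block size $m$, set $v = \bloc{m}\bigl(u^{\mathrm{lcm}(m,|u|)/|u|}\bigr)$, so that $p_v = \bloc{m}(p_u)$; a perturbation of $v$ at $n$ cells unpacks to a perturbation of $p_u$ at $mn$ cells, hence a protocol for $\invap{\grp{F}{m,1,0}}{v}$ on $n$-letter inputs transfers to one for $\invap{F}{u}$ on $mn$-letter inputs and conversely, with an additive $O(m)$ overhead to absorb $F$-side input splittings that do not land on block boundaries.

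The main obstacle I expect is the packing step: when $m$ and $|u|$ are incommensurate one has to inflate $u$ to a common-multiple length so that $v$ is a well-formed periodic word over $Q_F^m$ encoding exactly $p_u$, and then carefully track how input splittings transport across $\bloc{m}$. The iteration step is the second delicate point, since the non-monotonicity of widths forces the argument through the bounded-spreading estimate rather than a direct monotone comparison. Once these are in hand, chaining the four ingredients along the sequence $F$, $\grp{F}{m_1,t_1,z_1}$, $\grp{G}{m_2,t_2,z_2}$, $G$ produces the announced word $v$ for $G$ and the comparison $\cc{\invap{F}{u}}\moindre\cc{\invap{G}{v}}$.
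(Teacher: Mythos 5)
Your proposal is correct and follows essentially the same route as the paper's proof, which likewise reduces the claim to the sub-automaton step (with $v=\overline{\iota}(u)$) and the two directions of the rescaling transformation (taking the period of $\bloc{m}(\perio{u})$, respectively of $\debloc{m}(\perio{U})$, as the transported word). The paper states these three transfer properties without verification, whereas you additionally work out the two genuinely delicate points — the common-multiple inflation of $u$ under packing and the bounded-spreading argument showing $\invap{F}{u}=\invap{F^t}{u}$ despite non-monotone widths — so your write-up is a faithful, slightly more detailed version of the same argument.
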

\begin{proof}
  The simulation relation $\simu$ is such that ultimately periodic
  configurations of $F$ are converted into ultimately periodic
  configurations of $G$. Hence, the invasion problem of $F$ reduces to
  the invasion problem of $G$. More precisely, it is sufficient to
  check the following properties, each dealing with an aspect of the
  simulation relation $\simu$:
  \begin{itemize}
  \item for any CA $F$, any $u$ and any rescaling parameters $m,t,z$,
    we have
    \[\cc{\invap{F}{u}}\moindre\cc{\invap{\grp{F}{m,t,z}}{U}}\] where
    $U$ is the period of the configuration $\bloc{m}(\perio{u})$;
  \item if ${F\sac G}$ then, for any $u$, ${\cc{\invap{F}{u}}\moindre\cc{\invap{G}{u}}}$;
  \item for any CA $F$, any rescaling parameters $m,t,z$, any $U$
    (over the alphabet of $\grp{F}{m,t,z}$)
    ${\cc{\invap{\grp{F}{m,t,z}}{U}}\moindre\cc{\invap{F}{u}}}$ where
    $u$ is the period of the configuration $\debloc{m}(\perio{U})$.
  \end{itemize}
  The result follows by composition of the $3$ properties above.
\end{proof}

Finally, we show a similar result for the cycle length problem. The
problem is parametrized by an integer $k$ and the following
proposition establishes that for suitable but arbitrary large values
of this parameter the complexity of the problem is
conserved.

\begin{proposition}
  \label{prop:simcycl}
  If $F\simu G$ then for all $k_0$ there is $k$ and $k'$ such that:
  \begin{itemize}
  \item $k\geq k_0$ and $k'\geq k_0$;
  \item ${\cc{\cyclp{F}{k}}\moindre\cc{\cyclp{G}{k'}}}$.
  \end{itemize}
\end{proposition}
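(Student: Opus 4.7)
The plan is to follow the decomposition strategy used in Propositions~\ref{prop:simpred} and~\ref{prop:siminva}: break $F \simu G$ into its ingredients (sub-automaton, packing, iteration, shifting) and verify that cycle length complexity is preserved, up to $\moindre$, under each step, with controlled changes to the parameter $k$. The ``for all $k_0$'' flexibility in the statement is tailored precisely to absorb these parameter shifts.

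The first two ingredients are straightforward. For the sub-automaton step $F \sac G$ via $\iota$, the uniform extension $\overline{\iota}$ maps the $F$-orbit of $p_u$ bijectively onto the $G$-orbit of $p_{\iota(u)}$, so $\lambda_F(u) = \lambda_G(\iota(u))$ and thus $\cc{\cyclp{F}{k}} \moindre \cc{\cyclp{G}{k}}$ for every $k$, with no parameter shift. For the packing step, the bijection $\bloc{m}$ conjugates $\grp{F}{m,1,0}$ with $F$ on the space of configurations of period a multiple of $m$, so orbits and cycle lengths correspond exactly, yielding the $\moindre$-comparison with only a constant rescaling of input length.

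The iteration and shifting, where $F$ is replaced by $H = \sigma_z \circ F^t$, constitute the main technical content. The key dynamical observation is that $F$ commutes with $\sigma_1$ and $\sigma_{|u|}$ fixes $p_u$, so the orbit of $p_u$ under $\sigma_z \circ F^t$ sits inside the finite set generated by $F$ and $\sigma_1$ acting on $p_u$; one obtains divisibility relations bounding $\lambda_H(u)$ above by $\mathrm{lcm}\bigl(|u|/\gcd(|u|,z),\, \lambda_F(u)/\gcd(\lambda_F(u),t)\bigr)$ and below by $\lambda_F(u)/(t\,|u|)$. Combined with the previous two steps, these relations sandwich $\lambda_G(\iota(U))$ between two multiples of $\lambda_F(u)$ depending on $m,t,z$ and the input length.

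The main obstacle is precisely that the input length $|u|$ appears in these bounds, so that a naive reduction from $\cyclp{F}{k}$ to $\cyclp{H}{k'}$ with fixed parameters is not an exact decision reduction. This is handled by exploiting the coarseness of $\moindre$ (which permits multiplicative rescaling of input length and of the complexity function), together with the freedom to choose $k$ and $k'$ arbitrarily large with a multiplicative gap tuned to the constants $m,t,z$, so that for any prescribed $k_0$ one can pick $k\geq k_0$ and $k'\geq k_0$ for which the chain $\cc{\cyclp{F}{k}} \moindre \cc{\cyclp{\grp{F}{m_1,t_1,z_1}}{k''}} \moindre \cc{\cyclp{\grp{G}{m_2,t_2,z_2}}{k''}} \moindre \cc{\cyclp{G}{k'}}$ closes up on all sufficiently long inputs.
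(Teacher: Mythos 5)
Your decomposition (sub-automaton, packing, iteration, shift) is the same as the paper's, and your treatment of the first two ingredients coincides with it: $\overline{\iota}$ and $\bloc{m}$ are exact conjugacies on the relevant periodic orbits, so cycle lengths are preserved and only the constant input-length rescaling built into $\moindre$ is needed. The divergence, and the gap, is in how you handle $H=\sigma_z\circ F^t$. Your divisibility bounds are correct as stated: writing $n=|u|$ and $\lambda_F(u)$ for the cycle length under $F$, one gets $\lambda_F(u)/(tn)\leq\lambda_H(u)\leq\mathrm{lcm}\bigl(n/\gcd(n,z),\,\lambda_F(u)/\gcd(\lambda_F(u),t)\bigr)$. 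But the gap between these two bounds grows with $n$, and your claimed mechanism for absorbing it does not exist. The relation $\moindre$ rescales the input length and the number of communicated bits by \emph{constants}, while $\cyclp{F}{k}$ and $\cyclp{H}{k'}$ are fixed decision problems whose thresholds do not vary with $n$: the single output bit of a protocol for $\cyclp{H}{k'}$ tells you only whether $\lambda_H(u)\leq k'$, and when the translation factor between $\lambda_H$ and $\lambda_F$ is of order $n$, no choice of fixed $k$ and $k'$ (however large, and however their ratio is ``tuned'') makes that bit determine whether $\lambda_F(u)\leq k$ on all sufficiently long inputs. The ``for all $k_0$ there exist $k,k'$'' quantifier buys you freedom in the thresholds, not an $n$-dependent threshold.

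The paper avoids this by never comparing cycle lengths across a nonzero shift: its iteration step is stated only for $\grp{F}{1,t,0}$, where the exact identity $\lambda_{F^t}(u)=\lambda_F(u)/\gcd(\lambda_F(u),t)$ holds with no dependence on $|u|$, and the threshold bookkeeping is done by restricting to $k$ with $k\bmod t=0$ (which is precisely what forces the existential quantifier over $k,k'\geq k_0$ in the statement). To repair your argument you would need to either eliminate the shift before comparing cycle lengths (reducing to the $z=0$ case), or show directly that a protocol for the shifted problem can be converted into one for the unshifted problem at constant extra cost; the lcm sandwich alone does not do this.
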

\begin{proof}
  The effect of rescaling transformations on cyclic orbits of periodic
  configurations is to change the (spatial) period length as well as
  the (temporal) cycle length. More precisely, we have:
  \begin{itemize}
  \item if ${F\sac G}$ then, for any $k$, ${\cc{\cyclp{F}{k}}\moindre\cc{\cyclp{G}{k}}}$;
  \item for any $k$,
    \begin{itemize}
    \item ${\cc{\cyclp{F}{k}}\moindre\cc{\cyclp{\grp{F}{m,1,0}}{k}}}$ and
    \item ${\cc{\cyclp{\grp{F}{m,1,0}}{k}}\moindre\cc{\cyclp{F}{k}}}$;
    \end{itemize}
  \item for any $t$ and any $k$ we have:
    \[\cc{\cyclp{\grp{F}{1,t,0}}{k}}\moindre\cc{\cyclp{F}{kt}};\]
  \item for any $t$ and any $k$ such that ${k\bmod  t = 0}$ we have:
    \[\cc{\cyclp{F}{k}}\moindre\cc{\cyclp{\grp{F}{1,t,0}}{k/t}}.\]
  \end{itemize}
  The proposition follows.
\end{proof}

\subsection{Existence of CA with maximal complexity}

This section is devoted to the following existence result.

\begin{proposition}\ 
  \label{exists}

  \begin{enumerate}
  \item There exists a reversible CA $F$ and a word $u$ with $\cc{\invap{F}{u}}\in\Omega(n)$.
  \item There exists a reversible CA $F$ with $\cc{\predp{F}}\in\Omega(n)$.
  \item There exists a CA $F$ s.t. for any ${k\geq 1}$, $\cc{\cyclp{F}{k}}\in\Omega(n)$.
  \end{enumerate}
\end{proposition}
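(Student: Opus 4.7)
The plan is to handle the three parts via a common template: for each case, exhibit an explicit CA $F$ (and, for part 1, a word $u$) so that a suitably chosen input-splitting of its canonical problem directly computes one of the hard functions $\phi_{\EQ}$, $\phi_{\IP}$ or $\phi_{\DISJ}$ of Proposition~\ref{prop:commlb} on vectors of length $\Theta(n)$; by those lower bounds, the corresponding canonical problem inherits an $\Omega(n)$ communication complexity bound. The real content lies in the combinatorial design of local rules.

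For part 2 I would build a reversible \emph{signals} CA. Its state set contains a quiescent state, right-movers and left-movers each carrying one bit, plus an ``accumulator'' carrying a single $\mathbb{F}_2$ register. Alice's $n/2$ input bits are placed as right-movers on her half of the tape and Bob's as left-movers on his, with an accumulator at the centre. All particles propagate ballistically; when a right-mover of bit $a$ and a left-mover of bit $b$ cross, they keep their bits (preserving reversibility) while $ab$ is XORed into any accumulator sitting at that cell. After $\Theta(n)$ steps the accumulator holds the inner product of Alice's and Bob's vectors, so $\predp{F}$ at a chosen central cell computes $\phi_{\IP}$. Global reversibility is enforced by presenting the rule in a partitioned/second-order form (Margolus--Morita style), making each transition a bijection on the state set by construction.

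Part 1 follows the same idea on a quiescent periodic background $p_u$ for a short chosen word $u$: design $F$ so that the perturbation $(x_1,\ldots,x_n)$ produces ballistic signals that either annihilate within the perturbed window or release a signal propagating outward forever, depending on a hard predicate of the perturbation word, for instance $\phi_{\DISJ}$ on the two halves. Reversibility is again obtained from the second-order presentation. For part 3 no reversibility is required: design $F$ so that $\lambda(u)=1$ exactly when a linear-sized equality check on $u$ holds, and otherwise $\lambda(u)$ grows unboundedly with $n$ (for instance via a counter initiated by a mismatch flag produced by a ``comparator'' sub-rule that scans the period). Then for every fixed $k\geq 1$ and large enough $n$, $\cyclp{F}{k}$ coincides with an $\Omega(n)$-communication-complexity equality check.

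The main obstacle is concentrated in parts 1 and 2, precisely because of the reversibility requirement: the naive ``signals collide and emit a single output'' rule erases information and is forbidden. Both colliding bits must therefore survive each collision alongside an updated accumulator, which forces both a larger alphabet and the partitioned/second-order presentation of the rule. Moreover, the earlier results of Section~\ref{sec:pred} rule out linear or equicontinuous CA for part 2, so the collision rule must be genuinely non-linear --- which is precisely where the combinatorial subtlety lives.
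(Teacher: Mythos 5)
Your overall template is the paper's: reduce \EQ, \IP\ or \DISJ\ on $\Theta(n)$-bit vectors to the canonical problem via an explicit multi-layer construction and invoke Proposition~\ref{prop:commlb}. For part 2 your construction is essentially identical to the paper's (two independent shift layers carrying Alice's and Bob's bits towards each other, plus a test bit that flips on a $(1,1)$ crossing; reversibility because the shifts are bijective and the test layer is a controlled involution given the other layers). Part 3 is also sound: the paper uses \DISJ\ with a spreading erasing state rather than an equality counter, but either predicate works since reversibility is not required there.

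Part 1, however, has a concrete gap. In a reversible CA the map $F$ is injective, so if $\perio{u}(x_1,\ldots,x_n)\not=\perio{u}$ then $F^t(\perio{u}(x_1,\ldots,x_n))\not=F^t(\perio{u})$ for \emph{all} $t$: the differences can never ``annihilate within the perturbed window.'' They can only remain \emph{confined} to a bounded region. Your sketch offers no confinement mechanism, and your own fix for reversibility (``both colliding bits must survive each collision'') makes things worse: surviving ballistic signals escape to infinity, so invasion would occur for every nonzero perturbation and the problem would be trivial rather than hard. The paper resolves exactly this tension with its wall states $W$: the circulation bits are reflected back and forth between two walls forever, so the circulation-layer differences stay bounded, and only the outward-shifting test layer can carry a difference to infinity --- which it does precisely when some $x_i=y_i=1$. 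Without such a trapping/reflection device (or an equivalent way to keep the ``no'' instances' differences bounded while preserving injectivity), your part-1 construction does not yield a well-posed reduction of \DISJ\ to \invap{F}{u}.
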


We now define the reversible CA of assertion $2$ of
Proposition~\ref{exists}, which we call $G$ in the sequel. It is made of $3$
layers:
\begin{itemize}
\item flag layer $Q_f=\{0,1\}$,
\item circulation layer $Q_c=\{W\}\cup\{0,1\}\times\{0,1\}$,
\item test layer $Q_t=\{0,1\}\times\{0,1\}$.
\end{itemize}

The flag layer is simply the identity over $Q_f$. The circulation
layer does not depend on other layers and has the following behaviour.
\begin{figure}[htp]
  \centering

  \begin{tikzpicture}[scale=0.38]
    \draw[draw=none,fill=black!3!white](-13,-1)rectangle(13,4);

    \draw[dotted](-9,3.5) node[anchor=west,fill=black!3!white] {$x_1$}
    --(-0.5,3.5) node[anchor=east, fill=black!3!white]{$x_n$};
    \draw[dotted](0.5,3.5) node[anchor=west,fill=black!3!white] {$0$}
    --(9,3.5) node[anchor=east, fill=black!3!white]{$0$};
    \draw (0,3.5) node {$0$};

    \draw[dotted](-9,2.5) node[anchor=west,fill=black!3!white] {$0$}
    --(-0.5,2.5) node[anchor=east, fill=black!3!white]{$0$};
    \draw[dotted](0.5,2.5) node[anchor=west,fill=black!3!white] {$y_n$}
    --(9,2.5) node[anchor=east, fill=black!3!white]{$y_1$};
    \draw (0,2.5) node {$0$};

    \draw[dotted](-0.5,1.5)node[fill=black!3!white,anchor=east]{$0$}--(-5,1.5)
    (0.5,1.5)node[fill=black!3!white,anchor=west]{$0$}--(5,1.5);

    \draw (-0.5,1) rectangle (0.5,4)
    (0,1.5) node {$1$};
    \draw(-13,3)--(13,3);
    \foreach\i in {-11,10}
    \draw[draw=none,fill=black!20!white](\i,2)rectangle(\i+1,4)
    (\i+0.5,3) node {$W$};
    \foreach\i in {-1,0,1,2,4}
    \draw(-13,\i)--(13,\i);
    
    \draw(-13.2,4)--(-13.3,4)--(-13.3,2)--(-13.2,2);
    \draw(-13.3,3) node [anchor=east] {\small circulation};    
    \draw(-13.3,1.5)node[anchor=east]{\small flag};
    \draw(-13.2,-1)--(-13.3,-1)--(-13.3,1)--(-13.2,1);    
    \draw(-13.3,0)node[anchor=east]{\small test};
    
    \foreach\i in {-9,12}
    {
      \draw[-latex'] (\i,2.5)..controls (\i-1,2.5) and (\i-1,3.5)..(\i,3.5);
      \draw[-latex'] (-\i,3.5)..controls (-\i+1,3.5) and (-\i+1,2.5)..(-\i,2.5);
    }


    \begin{scope}[yshift=-7cm]
    \draw[draw=none,fill=black!3!white](-13,-1)rectangle(13,4);
    \draw[dotted](-5,3.5) node[anchor=west,fill=black!3!white] {$x_1$}
    --(3.5,3.5) node[anchor=east, fill=black!3!white]{$x_n$};
    \draw[dotted]
    (-9,3.5) node[anchor=west,fill=black!3!white]{$0$}--
    (-5,3.5) node[anchor=east,fill=black!3!white]{$0$}
    (3.5,3.5) node[anchor=west,fill=black!3!white] {$0$}
    --(9,3.5) node[anchor=east, fill=black!3!white]{$0$};
    \draw (0,3.5) node[fill=black!3!white] {$x_k$};

    \draw[dotted]
    (-9,2.5) node[anchor=west,fill=black!3!white]{$0$}--
    (-3.5,2.5) node[anchor=east,fill=black!3!white]{$0$}
    (5,2.5) node[anchor=west,fill=black!3!white] {$0$}
    --(9,2.5) node[anchor=east, fill=black!3!white]{$0$};
    \draw[dotted](-3.5,2.5) node[anchor=west,fill=black!3!white] {$y_n$}
    --(5,2.5) node[anchor=east, fill=black!3!white]{$y_1$};
    \draw (0,2.5)node[fill=black!3!white]{$y_k$};

    \draw[dotted](-0.5,1.5)node[fill=black!3!white,anchor=east]{$0$}--(-5,1.5)
    (0.5,1.5)node[fill=black!3!white,anchor=west]{$0$}--(5,1.5);

    \draw[latex'-](-8,0.5)--(-6,0.5);
    \draw[latex'-](8,-0.5)--(6,-0.5);
    \pgfmathdeclarerandomlist{state}{{$0$}{$1$}}
    \foreach\i in {2,3,4,5,6}
    {
      \pgfmathrandomitem{\s}{state}
      \draw(\i,-0.5)node[fill=black!3!white]{\s};
      \draw(-\i,0.5)node[fill=black!3!white]{\s};
    }
    \draw (-0.5,1) rectangle (0.5,4)
    (0,1.5) node {$1$};
    \draw(-13,3)--(13,3);
    \foreach\i in {-11,10}
    \draw[draw=none,fill=black!20!white](\i,2)rectangle(\i+1,4)
    (\i+0.5,3) node {$W$};
    \foreach\i in {-1,0,1,2,4}
    \draw(-13,\i)--(13,\i);

    \draw(-13.2,4)--(-13.3,4)--(-13.3,2)--(-13.2,2);
    \draw(-13.3,3) node [anchor=east] {\small circulation};
    \draw(-13.3,1.5)node[anchor=east]{\small flag};
    \draw(-13.2,-1)--(-13.3,-1)--(-13.3,1)--(-13.2,1);    
    \draw(-13.3,0)node[anchor=east]{\small test};
    
    \foreach\i in {-9,12}
    {
      \draw[-latex'] (\i,2.5)..controls (\i-1,2.5) and (\i-1,3.5)..(\i,3.5);
      \draw[-latex'] (-\i,3.5)..controls (-\i+1,3.5) and (-\i+1,2.5)..(-\i,2.5);
    }
\end{scope}
                                   
  \end{tikzpicture}
  \caption{Above: initial configuration. Below: the configuration $k$ steps later.}
\label{fig:circule}
\end{figure}
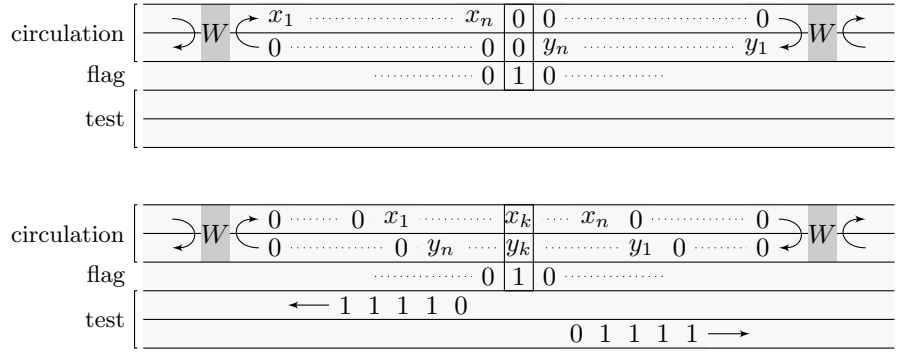
\begin{itemize}
\item normal states in $\{0,1\}\times\{0,1\}$ represent two sub-layers
  (top and bottom) and, if no $W$ state is in the neighbourhood, the
  top sub-layer simply shifts to the right and the bottom sub-layer simply
  shifts to the left.
\item $W$ states are walls: They stay unchanged forever. Moreover, a
  normal cell on the right of a wall has the following behaviour: The
  top value shifts to the right and the bottom value goes to the top. A
  normal cell on the left of a wall has a symmetric behaviour: The
  bottom value shifts to the  left and the top value goes to the bottom. See
  figure~\ref{fig:circule}.
\end{itemize}

Finally, the test layer is made of two sub layers (top and bottom)
wich are independant. The top layer does the following:
\begin{itemize}
\item if the flag layer of the cell is $1$ and if the circulation
  layer contains the state $(1,1)$ then invert bit and shift right;
\item in any other case, simply shift right.
\end{itemize}

The bottom sub-layer does the same but replace right by left.

\begin{proof}[Proof of Proposition~\ref{exists}]\ 
  
  \begin{enumerate}
  \item We first show that $G$ defined above has the properties of
    assertion 1 of the proposition. First, it is reversible: the flag and
    circulation layers are themselves reversible, and the knowledge of 
    these two layers makes the flag layer reversible too.

    Now let $q_0$ be the state where flag layer is $0$, circulation
    layer is $(0,0)$ and the test layer is $(0,0)$. Consider input bits
    $x_1,\ldots, x_n$ on the one hand and $y_1,\ldots, y_n$ on the other
    hand. Let $X_i$ be the state with flag layer $0$, test layer
    $(0,0)$ and circulation layer $(x_i,0)$. Similarily let $Y_i$ be
    the state with flag layer $0$, test layer $(0,0)$ and circulation
    layer $(0,y_i)$. Let $M$ be the state of flag layer $0$,
    circulation layer $W$ and test layer $(0,0)$. Finally let $T$ be
    the state of flag layer $1$, circulation layer $(0,0)$ and test
    layer $(0,0)$.  Consider the configuration
    $C(x_1,\ldots,x_n,y_1,\ldots,y_n)$:
    \[{}^\omega q_0\ M\ X_n\cdots X_1\ T\ Y_1\cdots Y_n\ M\
    q_0^\omega\] We can consider this configuration as an instance of
    the invasion problem \invap{F^{2n+3}}{u} where $u=q_0$. The only
    possible invasion in such an instance comes from the test
    layer. It follows from the definition of $G$ that there is
    invasion on this instance if and only if
    \[\exists i, x_i=y_i=1.\]
    Hence, the \textsc{disjointness} problem reduces to the invasion
    problem through such instances. Using
    proposition~\ref{prop:commlb}, we conclude that
    $\cc{\invap{G}{q_0}}\in\Omega(n)$.
  \item Assertion 2 of the proposition can be proven with a CA $F$
    simpler than $G$, but using similar ideas. $F$ has radius $1$ and
    its state set is the product of 3 components:
    \begin{itemize}
    \item left circulation with state set $\{0,1\}$,
    \item right circulation with state set $\{0,1\}$,
    \item test with state set $\{0,1\}$.
    \end{itemize}
    The behaviour is the following:
    \begin{itemize}
    \item each of the left and right circulation components are
      independent of the other components and consists in simple shift
      (left and right respectively),
    \item the test component simply flips its value if both left
      and right circulation components have value $1$ and stays
      unchanged else.
    \end{itemize}
    $F$ is clearly reversible (circulation layers are independent
    shifts and test layer is reversible knowing other components).
    Moreover, the inner product problem reduces to the prediction
    problem of $F$. Indeed, for any ${x,y\in\{0,1\}^n}$ consider the
    word
    \[w = X_1\cdots X_nZY_n\cdots Y_1\] where $X_i$ is the state equal
    to $x_i$ on the right circulation component and $0$ elsewhere,
    $Y_i$ is the state equal to $y_i$ on the left circulation
    component and $0$ else, and $Z$ is the state equal to $0$
    everywhere. It follows from definition of $F$ that
    \[\predp{F}|_n(w) = 1 \iff \sum x_iy_i \bmod 2 = 1.\]
    proposition~\ref{prop:commlb} implies that
    ${\cc{\predp{F}}\in\Omega(n)}$.
  \item 

We use the problem \DISJ to build a hard \cyclp{}{} problem. The idea
is that if Alice and Bob receive two disjoint sets as their inputs,
our CA will check \DISJ forever. Otherwise it will erase all the
tape, leaving a uniform, 1-periodic, configuration.

We use three layers in this construction, let us call the corresponding rules
$F_1$, $F_2$ and $F_3$. They are all of radius one, and all use the same
set of states $\{0,1,K\}$.
The $K$ state is used to erase all three tapes: thus, if it appears on
any component, it spreads on all three.

On (local) configurations not involving $K$, $F_1$ is a simple left shift,
and $F_2$ a simple right shift. We use $F_3$ as a control layer: we need to
check if the two other components represent two disjoint sets. The correponding
bitwise operation is: \[\bigwedge_{i=1}^n \neg(x_i\wedge y_i)\]
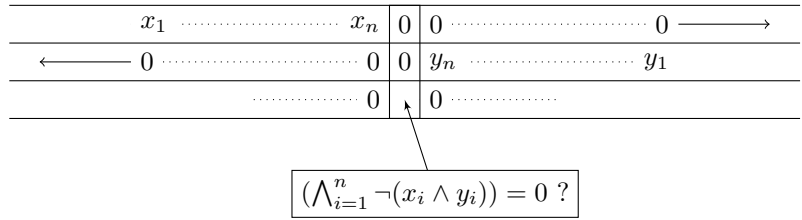
\begin{figure}[ht]
\begin{center}
\begin{tikzpicture}[yscale=0.5,xscale=0.4]
    \draw[dotted](-9,3.5) node[anchor=west,fill=white] {$x_1$}
    --(-0.5,3.5) node[anchor=east, fill=white]{$x_n$};
    \draw[dotted](0.5,3.5) node[anchor=west,fill=white] {$0$}
    --(9,3.5) node[anchor=east, fill=white]{$0$};
    \draw (0,3.5) node {$0$};

    \draw[->](9,3.5)--(12,3.5);
    \draw[->](-9,2.5)--(-12,2.5);
    \draw[dotted](-9,2.5) node[anchor=west,fill=white] {$0$}
    --(-0.5,2.5) node[anchor=east, fill=white]{$0$};
    \draw[dotted](0.5,2.5) node[anchor=west,fill=white] {$y_n$}
    --(9,2.5) node[anchor=east, fill=white]{$y_1$};
    \draw (0,2.5) node {$0$};

    \draw[dotted](-0.5,1.5)node[fill=white,anchor=east]{$0$}--(-5,1.5)
    (0.5,1.5)node[fill=white,anchor=west]{$0$}--(5,1.5);

    \draw (-.5,1) rectangle (0.5,4);

    \draw[-latex'](1,-1)node[rectangle,fill=white,draw=black]
         {$\left(\bigwedge_{i=1}^n\neg(x_i\wedge y_i)\right)=0\ ?$}--(0,1.5);
         \draw(-13,3)--(13,3);
    \foreach\i in {1,2,4}
    \draw(-13,\i)--(13,\i);

\end{tikzpicture}
\end{center}
\caption{An automaton with a hard \cyclp{}{} problem, and an easy
  \invap{}{}.}
\label{fig:hard-cycle}
\end{figure}

This corresponds to the following (partial) rule:
\begin{eqnarray*}
F_3\left(*,\left(\begin{array}{c}*\\ * \\0\end{array}\right),*\right)&=&0\\
F_3\left(*,\left(\begin{array}{c}*\\ * \\1\end{array}\right),*\right)&=&1\\
F_3\left(*,\left(\begin{array}{c}1\\1\\1\end{array}\right),*\right)&=&K
\end{eqnarray*}

We consider a cyclic configuration containing an input for Alice on
the first layer, and an input for Bob on the second layer, (as in
Figure \ref{fig:hard-cycle}), and a third layer everywhere empty,
except for a central ``test'' state, actually performing the tests.
While the test value is $1$, the tests go on. There are three cases:

\begin{itemize}
\item If both Alice and Bob receive the empty set, the configuration is
  $1$-periodic, but Alice and Bob can detect this case with a single bit
  of communication.
\item Else, since the tape is cyclic, if
  $\bigwedge_{i=1}^n\neg(x_i\wedge y_i) = 1$,
  then the test goes on forever,
  producing a (temporal) cycle of length $\Omega(n)$, because in this
  case, at least one $x_i$ or one $y_i$ is $1$, and it is separated from
  the next $1$ (possibly itself !) by at least the $2n+1$ zeros depicted
  on figure \ref{fig:hard-cycle}.

\item Otherwise, the
  test becomes $0$ at some step and a spreading state is generated,
  which erases all the layers in both directions and produce a
  (temporal) cycle of length $1$.
\end{itemize}
Thus, except in the case where both sets are empty,
this \emph{is} an ``implementation'' of the \DISJ problem,
shown in $\Omega(n)$ for several variants of communication complexity in 
\cite{kushilevitz97}. This proves that this automaton can embed an
 $\Omega(n)$ communication problem in some of its configurations,
which is enough to prove that its $\cyclp{}{}$ problem is hard.
  \end{enumerate}
\end{proof}

\begin{remark}
\label{rem:exists}
  We prove in section \ref{hard-cycle} that the last construction of
  proposition \ref{exists} has an \invap{}{} problem in $O(1)$.
\end{remark}

\subsection{Necessary conditions for universality}
\label{sec:ncu}

The following corollary is the main tool provided by this paper to
prove negative results about (intrinsic) universality.

\begin{corollary}
  \label{coro:uni}
  Let $F$ be an intrinsically universal CA. Then it holds that:
  \begin{enumerate}
  \item\label{titi} there exists $u$ s.t. ${\cc{\invap{F}{u}}\in\Omega(n)}$,
  \item\label{toto} ${\cc{\predp{F}}\in\Omega(n)}$,
  \item there exists $k$ s.t. ${\cc{\cyclp{F}{k}}\in\Omega(n)}$.
  \end{enumerate}
  Moreover, if $F$ is only reversible-universal, then \ref{toto} and
  \ref{titi} still holds.
\end{corollary}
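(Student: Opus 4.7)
The plan is to derive each of the three items by combining Proposition~\ref{exists} (which exhibits, for each canonical problem, a witness CA whose communication complexity is in $\Omega(n)$) with the three monotonicity results Propositions~\ref{prop:simpred}, \ref{prop:siminva} and \ref{prop:simcycl} (which state that $\simu$ weakly increases the complexity of the associated problem up to the quasi-order $\moindre$). Intrinsic universality of $F$ means precisely that $G\simu F$ for every $G$, so we instantiate $G$ to be the maximal-complexity witness for each problem and read off the lower bound.

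More concretely, for item~\ref{toto} I would pick the reversible $F'$ provided by Proposition~\ref{exists}(2), with $\cc{\predp{F'}}\in\Omega(n)$. Since $F$ is intrinsically universal we have $F'\simu F$, so Proposition~\ref{prop:simpred} gives $\cc{\predp{F'}}\moindre\cc{\predp{F}}$. Unfolding the definition of $\moindre$, $\cc{\predp{F'}}(\alpha n)\leq\beta\cc{\predp{F}}(\gamma n)+\delta$ for some constants, and since $\cc{\predp{F'}}(\alpha n)\in\Omega(n)$ we conclude $\cc{\predp{F}}\in\Omega(n)$. For item~\ref{titi} I pick the reversible $F''$ and word $u$ from Proposition~\ref{exists}(1); by Proposition~\ref{prop:siminva}, the reduction $F''\simu F$ produces a word $v$ (over the alphabet of $F$) with $\cc{\invap{F''}{u}}\moindre\cc{\invap{F}{v}}$, and the same $\moindre$-to-$\Omega(n)$ step yields the claim with this $v$ playing the role of the asserted word.

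Item~3 is analogous but requires a little care because the $\cyclp{F}{k}$-problem depends on the parameter $k$. I take the CA $F'''$ of Proposition~\ref{exists}(3), which satisfies $\cc{\cyclp{F'''}{k_0}}\in\Omega(n)$ for every $k_0\geq 1$. From $F'''\simu F$ and Proposition~\ref{prop:simcycl}, for any $k_0$ we obtain integers $k\geq k_0$ and $k'\geq k_0$ with $\cc{\cyclp{F'''}{k}}\moindre\cc{\cyclp{F}{k'}}$; since the left-hand side is in $\Omega(n)$, so is $\cc{\cyclp{F}{k'}}$, which provides the desired $k$.

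The final clause about reversible-universality is immediate: the witnesses supplied by Proposition~\ref{exists}(1) and~(2) are already reversible, so the simulation hypothesis $G\simu F$ need only be assumed for reversible $G$ in order to run the above arguments for items~\ref{titi} and~\ref{toto}. The only real obstacle in this proof is bookkeeping — tracking that ``$\moindre$ plus $\Omega(n)$'' still yields $\Omega(n)$ on the target side — which, as noted in the remark following the definition of $\moindre$, is automatic because every function involved lies in $O(n)$ and the $\Omega(n)$ functions form a single $\moindre$-equivalence class.
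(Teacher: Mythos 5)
Your proposal is correct and follows exactly the route the paper takes: the paper's proof of this corollary is a one-line appeal to Propositions~\ref{prop:simpred}, \ref{prop:siminva}, \ref{prop:simcycl} together with Proposition~\ref{exists}, which is precisely the combination you spell out (including the parameter bookkeeping for \cyclp{}{} and the observation that the witnesses for items~\ref{titi} and~\ref{toto} are reversible, which yields the final clause).
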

\begin{proof}
  It follows from Propositions~\ref{prop:simpred}, \ref{prop:siminva}
  and \ref{prop:simcycl} on the one hand, and Proposition~\ref{exists} on
  the other hand.
\end{proof}

A first application of this corollary to the complexity upper-bounds
presented in Section~\ref{sec:probs} yields the following necessary
conditions for universality. The first proofs of these results appears in
\cite{phd-theyssier}. However, our approach allows us to formulate
much simpler and more elegant proofs.

\begin{corollary}
  Let $F$ be an intrinsically universal CA, then $F$ \emph{cannot be}:
  \begin{itemize}
  \item neither expansive
  \item nor linear
  \item nor reversible.
  \end{itemize}
  Moreover, a reversible universal CA \emph{can not be} expansive or
  linear.
\end{corollary}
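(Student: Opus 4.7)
The plan is to combine Corollary~\ref{coro:uni} with the three upper-bound propositions of Section~\ref{sec:probs}, each giving a contradiction on one of the canonical problems. The whole argument is a three-way contrapositive, and the upper bounds are tight enough that no new combinatorial work is needed.

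First, suppose $F$ is intrinsically universal and expansive. By Corollary~\ref{coro:uni}(\ref{titi}) there must exist some word $u$ such that $\cc{\invap{F}{u}} \in \Omega(n)$. But the proposition on positively expansive CA states that $\cc{\invap{F}{u}} = 1$ for \emph{every} $u$, since invasion reduces to a single bit testing $\perio{u}(x_1,\ldots,x_n) \neq \perio{u}$. This contradicts the $\Omega(n)$ lower bound, so no intrinsically universal CA can be expansive. Since this uses only assertion~\ref{titi} of Corollary~\ref{coro:uni}, which still holds under the weaker assumption of reversible universality, the same argument rules out expansive reversible-universal CA.

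Next, suppose $F$ is intrinsically universal and linear. By Corollary~\ref{coro:uni}(\ref{toto}), $\cc{\predp{F}} \in \Omega(n)$. But the proposition on linear CA asserts $\cc{\predp{F}} \in O(1)$, via the one-round protocol where each party computes their ``contribution'' assuming the other sends the neutral element. These two bounds are incompatible, so $F$ cannot be linear. Again, since assertion~\ref{toto} of Corollary~\ref{coro:uni} is preserved under the reversible-universal hypothesis, the same contradiction rules out linear reversible-universal CA.

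Finally, suppose $F$ is intrinsically universal and reversible. By the last assertion of Corollary~\ref{coro:uni}, there exists some $k$ with $\cc{\cyclp{F}{k}} \in \Omega(n)$. But Proposition~\ref{prop:reversible} gives $\cc{\cyclp{F}{k}} \in O(1)$ for \emph{every} fixed $k$ as soon as $F$ is reversible, since Alice and Bob need only exchange their boundary cells over $k$ steps and one extra bit reporting whether a configuration repeated. Contradiction, so $F$ cannot be reversible; note that this last part is not claimed for reversible universality, which is consistent with the fact that the cycle assertion of Corollary~\ref{coro:uni} requires full intrinsic universality. There is no real obstacle here: the three results are designed precisely so the upper bounds of Section~\ref{sec:probs} plug directly into the necessary conditions of Corollary~\ref{coro:uni}.
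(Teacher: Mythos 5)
Your proof is correct and follows exactly the route the paper intends: the paper's own justification is the single sentence that the corollary is ``a first application'' of Corollary~\ref{coro:uni} to the upper bounds of Section~\ref{sec:probs}, and you have simply written out the three contradictions (expansive vs.\ the $\invap{}{}$ bound, linear vs.\ the $\predp{}$ bound, reversible vs.\ Proposition~\ref{prop:reversible}) together with the correct observation that only the first two assertions of Corollary~\ref{coro:uni} survive under reversible universality. Nothing to add.
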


\subsection{Uncomparability of the three conditions}
Here we show the ``orthogonality'' of our three problems:
For any pair of problems $({\cal P}_0,{\cal P}_1)$,
we exhibit two CA, ${\cal A}$ and ${\cal B}$, such that:
\begin{itemize}
\item $\cc{{\cal P}_0^{\cal A}}\in o(\cc{{\cal P}_1^{\cal A}})$,
in which case we say that ${\cal A}$ is ``hard'' for ${\cal P}_1$ and
``easy'' for ${\cal P}_0$.
\item $\cc{{\cal P}_1^{\cal B}}\in o(\cc{{\cal P}_0^{\cal B}})$,
in which case we say that ${\cal B}$ is ``hard'' for ${\cal P}_0$ and
``easy'' for ${\cal P}_1$.
\end{itemize}

This shows that our three necessary conditions for intrinsic
universality are \emph{really} necessary:  no condition 
is stronger than any other.

\subsubsection{A CA easy for \predp{} and hard for \invap{}{}}
\label{sec:pred_easy_inv_hard}

The idea is to embed an equality test (more precisely, a \emph{palindrom} test)
launching signals invading the
whole configuration, while keeping the prediction problem easy; see \cite{kushilevitz97}
or proposition~\ref{prop:commlb}
to see why this problem requires $\Omega(n)$ communicated bits. The idea
is to use two components that both stay easy for \predp{}: one with tests
that do not alter the component, and one with signals, moving quickly out
of the way:
\begin{enumerate}

\item The first layer performs tests for equality, as described below, and
  initialy contains a word over the alphabet
  $\Gamma_1=\{\overrightarrow{0},\overrightarrow{1},\overleftarrow{0},
  \overleftarrow{1},\top,\emptyset_1,K_1\}$. On figure
  \ref{fig:easypred-hardinv}, this layer is drawn with full lines.

  The dynamic of the first layer is simple : $\overrightarrow{a}$ states shift
right, and $\overleftarrow{a}$ states shift left. $\top$
states do not move, and $\emptyset_1$ are spreading.
  
\item A layer with an automaton invading the configuration from a
  seed. We need five states on this layer:
  $\Gamma_2=\{s,\emptyset_2,\rightarrow,\leftarrow,K_2\}$. We describe the rule
  below. On figure \ref{fig:easypred-hardinv}, this layer is drawn dashed.

  The rule here is even simpler: $\emptyset_2$ states do not move,
  $\rightarrow$ states shift right, $\leftarrow$
  states shift left. State $s$ represents a signal ``seed'', meaning that if it
  appears once, it disappears on the next step, and changes 
  into a $\rightarrow$ signal on its right, and a $\leftarrow$ signal on its left.

\end{enumerate}

We add a few rules that allow to verify the well-formedness of configurations.
This allows us to ensure that there can be only one $\top$ state on the first
layer, and that signals on the second layer never cross. States $K_1$ and $K_2$
are used for this purpose: if one of them appears somewhere, they \emph{both}
spread on both layers, thus erasing the whole configuration: the \predp{} problem
becomes trivial.

\begin{itemize}
\item If a $\overleftarrow{a}$ state is found immediately next
to an $\overrightarrow{a}$
state, then $K_1$ and $K_2$ are both raised.

\item If a $\rightarrow$ signal is found in the same cell as an $\overrightarrow{a}$,
or a $\leftarrow$ in the same cell as an $\overleftarrow{a}$, then $K_1$ and
$K_2$ are raised. This ensures that signals on the second layer never cross.
\end{itemize}

\begin{figure}[ht]
\begin{center}
\begin{tikzpicture}[scale=0.3]
\def\signals{black!50!white}
\draw[dotted]
(-9.5,0.5) node {\tiny $\overrightarrow{0}$}
(-8.5,0.5) node {\tiny $\overrightarrow{1}$}
(-8,0.3)--(-1,0.3)
(-0.5,0.5) node {\tiny $\overrightarrow{0}$}

(10.5,0.5) node {\tiny $\overleftarrow{0}$}
(9.5,0.5) node {\tiny $\overleftarrow{1}$}
(9,0.3)--(2,0.3)
(1.5,0.5) node {\tiny $\overleftarrow{0}$}

(0.5,0.5) node{\tiny$\top$} 
(0.5,1)--(0.5,5)
(0.5,5.5) node{\tiny$\top$};

\draw(0.3,6) node[anchor=south]{\tiny$\top$};
\draw[dashed, \signals](0.7,6) node[anchor=south]{\tiny$s$};

\draw[dotted]
(0.5,7.5) node{\tiny$\top$}
(0.5,8)--(0.5,10)
(0.5,10.5) node {\tiny$\top$};
\draw(-6,0.5) node[fill=white] {\tiny$\overrightarrow{0}$};
\draw[-latex'](-5.5,1)--(0,6.5);
\draw(7,0.5) node[fill=white]{\tiny$\overleftarrow{1}$};
\draw[-latex'](6.5,1)--(1,6.5);
\draw[-latex',dashed, \signals](1,6.5)--(4,9.5) node[fill=white,shape=circle]{}
(4,9.5) node {\tiny$\rightarrow$}
--(6,11.5);
\draw[-latex',dashed, \signals](0,6.5)--(-3,9.5) node[fill=white,shape=circle]{}
(-3,9.5) node {\tiny$\leftarrow$}
--(-5,11.5);
\draw[ultra thin]
(0,0)--(0,10) (1,0)--(1,10);
\draw
(-10,0)--(-10,1)--(-9,1)--
(-9,1)--(-9,2)--(-8,2)--
(-8,2)--(-8,3)--(-7,3)--
(-7,3)--(-7,4)--(-6,4)--
(-6,4)--(-6,5)--(-5,5)--
(-5,5)--(-5,6)--(-4,6)--
(-4,6)--(-4,7)--(-3,7)--
(-3,7)--(-3,8)--(-2,8)--
(-2,8)--(-2,9)--(-1,9)--
(-1,9)--(-1,10)--(0,10)--
(0,10)--(0,11)--(1,11)--
(1,11)--(1,10)--(2,10)--
(2,10)--(2,9)--(3,9)--
(3,9)--(3,8)--(4,8)--
(4,8)--(4,7)--(5,7)--
(5,7)--(5,6)--(6,6)--
(6,6)--(6,5)--(7,5)--
(7,5)--(7,4)--(8,4)--
(8,4)--(8,3)--(9,3)--
(9,3)--(9,2)--(10,2)--
(10,2)--(10,1)--(11,1)--
(11,1)--(11,0)--cycle;
\end{tikzpicture}
\end{center}
\caption{A CA easy for \predp{} and hard for \invap{}{}}
\label{fig:easypred-hardinv}
\end{figure}
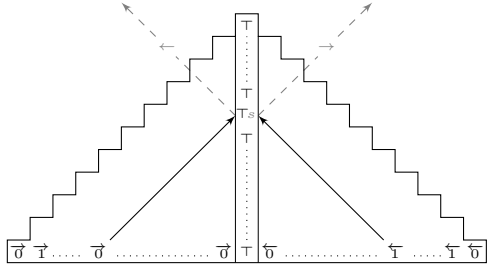

Moreover, we introduce another rule to perform the equality test:
when the test is negative (i.e. a $\top$ state has an
$\overrightarrow{x}$ on its left, a $\overleftarrow{y}$ on its right,
and $x\neq y$), then we place an $s$ state on the second layer :

\begin{center}
\begin{tabular}{ccc}
  $F\left(\begin{array}{c}\emptyset \\ \overrightarrow{a}\end{array},
    \begin{array}{c} \emptyset \\ \top\end{array},
    \begin{array}{c} \emptyset \\ \overleftarrow{a}\end{array}
    \right)$ & $=$ & $\begin{array}{c} \emptyset \\ \top\end{array}$\\
  $F\left(\begin{array}{c}\emptyset \\ \overrightarrow{a}\end{array},
    \begin{array}{c} \emptyset \\ \top\end{array},
    \begin{array}{c} \emptyset \\ \overleftarrow{1-a}\end{array}
    \right)$ & $=$ & $\begin{array}{c} s \\ \top\end{array}$\\
\end{tabular}
\end{center}

\begin{proposition}
  The CA $F$ described above is such that:
  \begin{enumerate}
  \item  ${\predc{F}\in O(1)}$,
  \item there is $u$ such that ${\invac{F}{u}\in\Omega(n)}$.
  \end{enumerate}
  \begin{proof}
    \begin{enumerate}
    \item A protocol for \predp{} needs to predict the content of both layers: if
      the configuration is not well-formed, then a $K_i$ state will appear somewhere
      and this is easy (and it can be checked locally by Alice
      and Bob). Else :
      \begin{itemize}
      \item On the first layer, the result will always be the result of a
        shift if the initial configuration contains only
        $\overrightarrow{x}$ or $\overleftarrow{x}$ states, or if the $\top$
        state is not the central cell of the configuration, and a $\top$
        state else. This requires a constant number of communicated bits.
      \item On the second layer, there are four -- possibly overlapping -- possibilities:
        
        \begin{itemize}
        \item If the leftmost state of Alice's differs from the rightmost
          state of Bob's, and the central cell is a $\top$ state, the result is
          an $s$.
        \item If the $\top$ state is not the central cell, but somewhere
          else in the left part, and the corresponding word is not a palindrom,
          then a $\rightarrow$ is launched (see figure \ref{fig:bla}).
        \item If the initial configuration contained an $s$ or a $\rightarrow$
          in its leftmost cell, a $\rightarrow$ arrives to the top of the triangle.
        \item Else, the result is a $\emptyset_2$.
        \end{itemize}
        All of these can be checked locally and communicated between Alice and Bob within
        a constant number of bits.

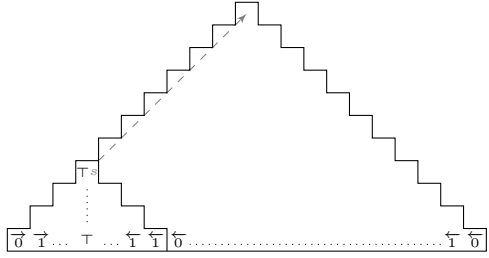
\begin{figure}[ht]
\begin{center}
\begin{tikzpicture}[scale=0.3]
\def\signals{black!50!white}
\draw[dotted]
(-9.5,0.5) node {\tiny $\overrightarrow{0}$}
(-8.5,0.5) node {\tiny $\overrightarrow{1}$}
(-8,0.3)--(-5,0.3)

(10.5,0.5) node {\tiny $\overleftarrow{0}$}
(9.5,0.5) node {\tiny $\overleftarrow{1}$}
(9,0.3)--(-2,0.3)
(-2.5,0.5) node {\tiny $\overleftarrow{0}$};

\draw[dashed, -latex', \signals] (-6.5,3.5)--(0.5,10.5);
\draw[dotted](-6.5,0.5) node[fill=white]{\tiny$\top$} 
(-6.5,1)--(-6.5,3)
(-6.2,3.5) node[\signals, fill=white]{\tiny$s$}
(-6.7,3.5) node{\tiny$\top$};

\draw(-3.5,0.5)node{\tiny$\overleftarrow{1}$};
\draw(-4.5,0.5)node{\tiny$\overleftarrow{1}$};


\draw
(-10,0)--(-10,1)--(-9,1)--
(-9,1)--(-9,2)--(-8,2)--
(-8,2)--(-8,3)--(-7,3)--
(-7,3)--(-7,4)--(-6,4)--
(-6,4)--(-6,5)--(-5,5)--
(-5,5)--(-5,6)--(-4,6)--
(-4,6)--(-4,7)--(-3,7)--
(-3,7)--(-3,8)--(-2,8)--
(-2,8)--(-2,9)--(-1,9)--
(-1,9)--(-1,10)--(0,10)--
(0,10)--(0,11)--(1,11)--
(1,11)--(1,10)--(2,10)--
(2,10)--(2,9)--(3,9)--
(3,9)--(3,8)--(4,8)--
(4,8)--(4,7)--(5,7)--
(5,7)--(5,6)--(6,6)--
(6,6)--(6,5)--(7,5)--
(7,5)--(7,4)--(8,4)--
(8,4)--(8,3)--(9,3)--
(9,3)--(9,2)--(10,2)--
(10,2)--(10,1)--(11,1)--
(11,1)--(11,0)--cycle;

\draw
(-6,4)--(-6,3)--(-5,3)--
(-5,2)--(-4,2)--(-4,1)--
(-3,1)--(-3,0);

\end{tikzpicture}
\end{center}
\caption{A CA easy for \predp{} and hard for \invap{}{}}
\label{fig:bla}
\end{figure}
\end{itemize}

\item Now we need to find a set of hard instances for the \invap{}{}
problem: with a background word $u$, with $\emptyset_i$ on both layers, and
an initial configurations of the form
$(\overrightarrow{0},\overrightarrow{1})^n\top
(\overleftarrow{0},\overleftarrow{1})^n$ on the first layer, and
$\emptyset^*$ on the second, we reduce the equality problem to
\invap{}{}.
\end{enumerate}
\end{proof}
\end{proposition}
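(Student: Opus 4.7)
The plan is to prove the two claims independently: for the upper bound on $\predc{F}$ I would exhibit a constant-cost protocol that exploits the shift-like nature of both layers, and for the lower bound on $\invac{F}{u}$ I would reduce from the equality function.

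For claim~1, I would fix a split position and let Alice and Bob each inspect their half. Each party first performs a local well-formedness check (no forbidden adjacent patterns $\overrightarrow{a}/\overleftarrow{a}$, no co-located signal/arrow, no spurious multiple $\top$'s in her/his own half); if any party detects a violation, she/he announces with a single bit that a $K_i$ state will appear, in which case the apex of the prediction triangle is a uniformly spreading state known to both. Otherwise, the value at the apex depends only on a constant-size summary of the two halves: the extremal boundary letters (which are carried to the centre by the shifts), the presence and location of the unique $\top$, the two bits adjacent to $\top$ that feed the equality test, and the presence of an $s$, $\rightarrow$ or $\leftarrow$ state near the centre. Exchanging $O(1)$ bits suffices to communicate this summary, after which either party can finish the computation.

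For claim~2, I would take $u$ to be the one-cell word with both layers in state $\emptyset$, so the reference orbit is uniformly $\emptyset$. Reduce from $\phi_{\EQ}$ on $\{0,1\}^n$ by encoding Alice's input $x$ as $\overrightarrow{x_1}\cdots\overrightarrow{x_n}$ placed just left of a central $\top$, and Bob's input $y$ in reversed order as $\overleftarrow{y_n}\cdots\overleftarrow{y_1}$ just right of it, with the second layer empty. By the shift dynamics, at step $k$ the left neighbour of the fixed $\top$ is $\overrightarrow{x_{n-k}}$ and its right neighbour is $\overleftarrow{y_{n-k}}$, so the equality test succeeds at every step iff $x=y$. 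If $x=y$, no $s$ is ever spawned, the arrows traverse the neighbourhood of $\top$ without reaction and are carried away by the shifts, so the perturbation has bounded width and $\ridiff{t}-\lediff{t}$ stays bounded. If $x\neq y$, at least one $s$ is created; the resulting $\rightarrow$ and $\leftarrow$ propagate at speed $1$ forever, giving $\ridiff{t}-\lediff{t}\to\infty$. Thus $\invap{F}{u}$ on this instance family coincides with $1-\phi_{\EQ}$, and Proposition~\ref{prop:commlb} yields $\invac{F}{u}\in\Omega(n)$.

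The main obstacle is the case analysis in claim~1: for every split index $i$ one must check that the constant-size summary really captures the apex value, and in particular that no subtle interaction (for example, a signal injected near the boundary by a failed equality test and then meeting an arrow from the other side) forces more than $O(1)$ bits of communication. A secondary concern in claim~2 is to verify that in the equality case no residual $K_i$ is triggered at the interface between the input block and the $\emptyset$ background; this follows because arrows of opposite orientations never become adjacent in our instances and no arrow ever collides with a co-located signal of the same orientation.
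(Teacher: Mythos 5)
Your overall strategy coincides with the paper's on both counts: a constant-bit case analysis for \predp{} built on local well-formedness checks plus a bounded summary of what reaches the apex, and a reduction from \EQ{} to \invap{F}{u} over the background ${u=\emptyset}$ using the instance family $\overrightarrow{x_1}\cdots\overrightarrow{x_n}\,\top\,\overleftarrow{y_n}\cdots\overleftarrow{y_1}$. So the architecture is fine. (One small caution on claim~1: you list ``the presence and location of the unique $\top$'' in your constant-size summary, but a \emph{location} costs $\Omega(\log n)$ bits to transmit; what the protocol actually needs is only whether $\top$ sits at the central cell, plus $O(1)$ further comparisons, as in the paper.)

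The genuine weak point is your justification of the ``no invasion when $x=y$'' direction of claim~2. You write that the arrows ``are carried away by the shifts, so the perturbation has bounded width.'' Taken literally this is self-contradictory: under pure shift dynamics Alice's right-moving arrows and Bob's left-moving arrows persist forever and drift to $+\infty$ and $-\infty$ respectively, so the set of cells where $F^t\bigl(\perio{u}(w)\bigr)$ differs from the all-$\emptyset$ reference orbit would have $\ridiff{t}-\lediff{t}\to\infty$, i.e.\ invasion would occur for \emph{every} nonempty input and the reduction from \EQ{} would collapse. What actually bounds the first-layer perturbation is that $\emptyset_1$ is a \emph{spreading} state (together with the absorption of the arrow pairs at $\top$): the background erases the arrow blocks at speed one from the outside, which is exactly the shrinking triangle of Figure~\ref{fig:easypred-hardinv}. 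You must invoke this to conclude that the only possible source of invasion is a second-layer signal spawned by a failed comparison, after which the equivalence ``invasion $\iff x\neq y$'' and the $\Omega(n)$ bound via Proposition~\ref{prop:commlb} go through as you describe.
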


\subsubsection{A CA easy for \cyclp{}{} and hard for \invap{}{}}
We can reuse the construction of paragraph
\ref{sec:pred_easy_inv_hard}: we already know that it is hard for \invap{}{}.
What we need to do is to modify the rule so that on the second layer, when a
$\rightarrow$ signal crosses a $\leftarrow$ signal, they both disappear and
the resulting state is a $\emptyset_2$. This ensures that on cyclic configurations,
even if signals are ``raised'' somewhere, they are ``caught'' by the cyclicity.
The rest of the discussion is essentially the same as in paragraph
\ref{sec:pred_easy_inv_hard}, and we can conclude easily that the orbits of configurations
containing at least one $\top$, or of ill-formed configurations,
are always 1-periodic; the \cyclp{}{} problem can be decided with no communication.
In all other cases, the dynamic is nothing more than a shift: the protocol from
\ref{prop:reversible} can be used.

\subsubsection{A CA easy for \predp{}, and hard for \cyclp{}{}}
\label{hard-cycle-easy-pred}

We can use once again (and for the last time) quite the same construction as in
paragraph \ref{sec:pred_easy_inv_hard}. We modify it to launch only one
signal (in only one direction) when an error appears. Thus, as proven
in section \ref{sec:pred_easy_inv_hard}, the \predp{} problem remains
easy. Now we need to prove that the \cyclp{}{} problem is hard, but for this we
can choose the instances on purpose.

If no test fails, the configuration will be 1-periodic: When all the
tests have been done, the configuration is uniformly empty, except for
the $\top$ states, and then nothing more happens. Otherwise, a signal will
be launched. We need to show that the period of the configuration is
then in $\Omega(n)$. But we can notice that a contiguous portion of
$\Omega(n)$ cells can not have any signal (see Figure
\ref{fig:pred_easy_cycl_hard}). Therefore, the period of the
configuration is $\Omega(n)$ if and only if an error occurs.

\begin{figure}[htb]
\begin{center}
\begin{tikzpicture}[scale=0.3]
\draw[dotted] 
(-9.5,0.5) node {\tiny $\overrightarrow{0}$}
(-8.5,0.5) node {\tiny $\overrightarrow{1}$}
(-8,0.3)--(-1,0.3)
(-0.5,0.5) node {\tiny $\overrightarrow{0}$}

(10.5,0.5) node {\tiny $\overleftarrow{0}$}
(9.5,0.5) node {\tiny $\overleftarrow{1}$}
(9,0.3)--(2,0.3)
(1.5,0.5) node {\tiny $\overleftarrow{0}$}

(0.5,0.5) node{\tiny$\top$} 
(0.5,1)--(0.5,5)
(0.5,5.5) node{\tiny$\top$};

\draw(0.3,6) node[anchor=south]{\tiny$\top$};
\draw[blue!50!white](0.7,6) node[anchor=south]{\tiny$s$};

\draw[dotted]
(0.5,7.5) node{\tiny$\top$}
(0.5,8)--(0.5,10)
(0.5,10.5) node {\tiny$\top$}
(0.5,11)--(0.5,16.5);
\draw(-6,0.5) node[fill=white] {\tiny$\overrightarrow{0}$};
\draw[-latex'](-5.5,1)--(0,6.5);
\draw(7,0.5) node[fill=white]{\tiny$\overleftarrow{1}$};
\draw[-latex'](6.5,1)--(1,6.5);
\draw[-latex',blue!50!white](1,6.5)--(4,9.5) node[fill=white,shape=circle]{}
(4,9.5) node {\tiny$\rightarrow$}
--(6,11.5);

\draw[ultra thin]
(0,0)--(0,10) (1,0)--(1,10);
\draw
(-10,0)--(-10,1)--(-9,1)--
(-9,1)--(-9,2)--(-8,2)--
(-8,2)--(-8,3)--(-7,3)--
(-7,3)--(-7,4)--(-6,4)--
(-6,4)--(-6,5)--(-5,5)--
(-5,5)--(-5,6)--(-4,6)--
(-4,6)--(-4,7)--(-3,7)--
(-3,7)--(-3,8)--(-2,8)--
(-2,8)--(-2,9)--(-1,9)--
(-1,9)--(-1,10)--(0,10)--
(0,10)--(0,11)--(1,11)--
(1,11)--(1,10)--(2,10)--
(2,10)--(2,9)--(3,9)--
(3,9)--(3,8)--(4,8)--
(4,8)--(4,7)--(5,7)--
(5,7)--(5,6)--(6,6)--
(6,6)--(6,5)--(7,5)--
(7,5)--(7,4)--(8,4)--
(8,4)--(8,3)--(9,3)--
(9,3)--(9,2)--(10,2)--
(10,2)--(10,1)--(11,1)--
(11,1)--(11,0)--cycle;


\begin{scope}[ultra thin]

\draw(0,0)--(12.5,0);
\draw(6,5.5)--(12.5,5.5);
\draw(0,16.5)--(12.5,16.5);

\draw (12,2.75)node[anchor=west]{\small $n/4$ : No signals here};
\draw (12,11)node[anchor=west]{\small $\leq n/2$ signals};

\draw[latex'-latex'] (12,0)--(12,5.5);
\draw[latex'-latex'](12,5.5)--(12,16.5);

\end{scope}
\end{tikzpicture}

\end{center}
\caption{A CAeasy for \predp{} and hard for \cyclp{}{}.}
\label{fig:pred_easy_cycl_hard}
\end{figure}
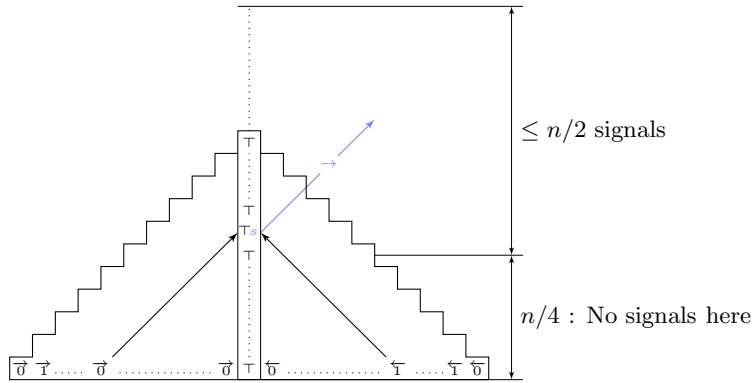

\subsubsection{A CA easy for \invap{}{} and hard for \cyclp{}{}}
\label{hard-cycle}
As promised in remark \ref{rem:exists}, we now prove a protocol for the \invap{}{}
problem of the rule described there:
\begin{proposition} The CA $F$ described in the proof of proposition \ref{exists} is such that:
  \[{\forall u, \invac{F}{u}\in O(1)}\]
\begin{proof}
  Let $u$ be any word over the alphabet for $F$. First,
  if the orbit of $p_u$ contains a spreading state, then $p_u(w)$ quickly becomes
  uniform with the spreading state everywhere, independently from $w$.
  Else, the discussion is a little more subtle.
  Let us note the periodic background $p_u=(p_{u_1},p_{u_2},p_{u_3})$, and let
  $w=(w_1,w_2,w_3)$ the input, split between Alice and Bob.
 
  \begin{enumerate}
  \item If $p_{u_1}(w_1)\neq p_{u_1}$, and $p_{u_2}(w_2)\neq p_{u_2}$,
  and then either a spreading state is generated, or
  the differences on components one and two are shifted in opposite
  directions, thus also invading $p_u$.

  \item \label{hard-cycle-eq-case}
  If $p_{u_1}(w_1)=p_{u_1}$ and $p_{u_2}(w_2)=p_{u_2}$, maybe the third
  component (the actual ``tests'') changes between $p_u$ and $p_u(w)$, but then there is an
  easy way to transmit whole configurations : Alice can simply tell Bob that her part
  is the same as in $p_{u}$, on the first two components.
  If Bob does the same, then both know both ``sets'', and
  they can check without more communication if their respective portions
  of $p_{u_3}(w_3)$ ever generates a spreading state : if so, $p_u(w)$ is invaded,
  else it is not.

  \item Else, without loss of generality, we can assume that $p_{u_1}(w_1)=p_{u_1}$
  and $p_{u_2}(w_2)\neq p_{u_2}$. There are two cases :

  \begin{itemize}
  \item Either $p_{u_3}(w_3)=p_{u_3}$ (the ``tests'' are the same in $p_u$ and $p_u(w)$),
  and then using the trick from (\ref{hard-cycle-eq-case}), Alice and Bob
  can know $p_{u_1}(w_1)$ and $p_{u_3}(w_3)$ completely, within constant communication.

  Then, since they each know a part of set $p_{u_2}(w_2)$, and they both know $p_{u_3}(w_3)$,
  they can check disjointness with $p_{u_1}(w_1)$ separately and tell if a spreading
  state ever appears, which is the only way $p_u(w)$ can be invaded in this case.

  \item If $p_{u_3}(w_3)\neq p_{u_3}$, then either a spreading state is generated,
  or $p_{u_3}(w_3)$ stays fixed, and $p_{u_2}(w_2)$ shifts to infinity:
  in both cases, $p_u(w)$ is invaded.

  \end{itemize}
  \end{enumerate}
\end{proof}
\end{proposition}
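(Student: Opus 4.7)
The plan is to do a case analysis based on which of the three layers of $w$ differ from the corresponding components of the periodic background $p_u$. Write $p_u=(p_{u_1},p_{u_2},p_{u_3})$ and $w=(w_1,w_2,w_3)$. The structural facts I will exploit are: layer 1 is a pure left shift, layer 2 is a pure right shift, layer 3 is pointwise stable except that it produces a spreading state $K$ whenever all three components simultaneously read $1$; and once any $K$ appears, all three layers are erased to $K$ uniformly.

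First, I would dispose of the degenerate case where the orbit of $p_u$ itself ever contains a $K$. Then both $p_u$ and $p_u(w)$ eventually become uniformly $K$ and thus coincide at all large times, so no invasion occurs; this is a property of $u$ alone and requires no communication. Assuming from now on that $p_u$ never generates $K$, the symmetric case $w_1 \neq p_{u_1}$ \emph{and} $w_2 \neq p_{u_2}$ is easy: either the perturbations interact to create a $K$ (invasion), or they do not, in which case the two independent shifts carry the left- and right-perturbations outward in opposite directions and again invade $p_u$. Alice and Bob settle this with a constant-size exchange indicating, on each of layers 1 and 2, whether their half of $w$ agrees with the background.

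The subtle situation is when at most one of layers 1, 2 is perturbed. The key observation is that if, say, $w_1=p_{u_1}$, then as long as no $K$ is generated, the layer-1 evolution over $p_u(w)$ is identical to that over $p_u$; since $u$ is known to both parties, each of them can locally reconstruct the entire layer-1 behaviour at all times. The invasion question then reduces to deciding whether the remaining perturbed layers, interacting with the known shifted backgrounds, ever create a $K$ — a computation each party can do on their own side, with only a constant number of bits exchanged to handle the cells straddling the Alice/Bob boundary. I would split this into the two sub-cases $w_3=p_{u_3}$ vs.\ $w_3\neq p_{u_3}$: in the former, local checks suffice directly; in the latter, either a $K$ is generated (invasion) or the perturbation on the remaining shift layer runs off to infinity (also invasion), so the answer is again determined by a constant-bit exchange. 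The main technical obstacle is making precise that periodicity of $u$ together with one party's agreement with the background really does let the other party simulate infinitely far to the side, and that the finitely many boundary cells can be resolved in $O(1)$ bits — which is where I would focus the bulk of the formal verification.
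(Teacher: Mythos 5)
Your proof follows the same case decomposition and the same key observations as the paper's own argument (an unperturbed shift layer is reconstructible by both parties from $u$ alone; perturbing both shift layers always invades; perturbing a single shift layer invades only via a spreading state; perturbing a shift layer together with the test layer always invades), so it is essentially the published proof. One small imprecision in the final sub-case: a perturbation that merely ``runs off to infinity'' on a single shift layer keeps a difference set of constant width and does not by itself invade (as your other sub-case implicitly relies on) --- what forces the width to grow there is that the third-layer difference stays \emph{fixed} while the second-layer one escapes, an anchor that your hypothesis $w_3\neq p_{u_3}$ supplies but your stated justification omits.
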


\subsubsection{A CA easy for \invap{}{} and hard for \predp{}}
\label{easy-inv-hard-pred}
Elementary rule 218 is a natural example exhibiting this property.
Unfortunately, the proof is quite technical and requires an in-depth
study of rule 218, which we chose to delay until section \ref{sec:218},
for conciseness of this --already long-- section, and consistency of
section \ref{sec:concrete}.

\subsubsection{An CA easy for \cyclp{}{} and hard for \predp{}}

We describe the natural example of Rule 33 in Section \ref{sec:33}, 
which has a protocol in constant time for \cyclp{}{}, and for which any
deterministic protocol for \predp{} is in $\Omega(\log n)$.

\section{Intrinsic universality: Ruling out complex CA} 
\label{sec:iu}
Here we show that for
two of our canonical problems -- namely, \predp{} and \invap{}{} -- we were
able to find a CA of maximal algorithmic complexity (\emph{complete}), 
and yet very simple with respect to
our framework. 

More precisely, we are going to show that, for problems \predp{} and
\invap{}{}, there exists a CA $F$ for which the communication
complexity of the problem is low while its classical computational
complexity is the highest one can expect.

Therefore, we are ruling out such non-trivial CA from being
intrinsically universal.

\subsection{Prediction}
T. Neary and D. Woods proved ``the \textsc{P}-completeness of Rule
110'' \cite{woodsneary06}.  In our language, they proved that the
problem \predp{F_{110}} is \textsc{P}-complete. A very
natural question arises: What do classical algorithmic properties of
CA, such as \textsc{P}-completeness, imply on their communication
complexity counterpart?  

As we show in this section, such a strong computational property
is not enough to guarantee maximal communication complexity. However, we do
not know of an automaton that would have, for instance,
polylogarithmic communication complexity, and still a \textsc{P}-complete
prediction problem, nor do we have a nonexistence proof. We leave this
as an open problem.

\begin{proposition}
\label{prop:pcomplete}
For any ${k\geq 1}$, there exists a CA $F$ such that
\[\cc{\predp{F}}\in O( n^{1/k})\]
and \predp{F} is \textsc{P}-complete.

\begin{proof}

Let $\mathcal{M}$ a Turing machine. We construct a CA $F$ simulating
$\mathcal{M}$ slowly but still in polynomial time: it takes $n^k$ steps of
$F$ to simulates $n$ steps of $\mathcal{M}$. Hence, by a suitable choice of
$\mathcal{M}$, the problem of predicting $F$ is P-complete.

First it is easy to construct a CA simulating $\mathcal{M}$ in real time. We
encode each symbol of the tape alphabet of the Turing machine by a CA state, and
add a ``layer'' for the head, with '$\rightarrow$' symbols on its left and
'$\leftarrow$' symbols on its right. We guarantee this way that there can be
only one head: if a '$\rightarrow$' state is adjacent to a '$\leftarrow$' state
without a head between them, we propagate a spreading ``error'' state destroying
everything.

We then add a new layer to slow down the simulation: it consists in a single
particle (we use the same trick to ensure that there is only one particle)
moving left and right inside a marked region of the configuration. More
precisely, it goes right until it reaches the end of the marked region, then it
adds a marked cell at the end and starts to move left to reach the other end,
doing the same thing forever. Clearly, for any cell in a finite marked region,
seeing $n$ traversals of the particle takes $\Omega(n^2)$ steps. Then, the idea
is to authorize head moves, in the previous construction, only at particle
traversals. This way, $n$ steps of $\mathcal{M}$ require $n^2$ time steps of the
automaton. By adding another particle layer, one can also slow down the above
particle with the same principle and it is not difficult to finally construct a
CA $F$ such that $n$ steps of $\mathcal{M}$ require $n^k$ time steps of
$F$. We have represented in Figure \ref{fig:pcomplete} the behavior of the
particle, with the dashed arrow representing a Turing transition.

Now if the initial configuration does not respect the rules described above,
then a spreading error state is generated and Alice and Bob can notice it within
constant communication. In all other cases, it is enough for Alice or Bob to
know the value of all the $2\cdot n^{1/k}$ states around the initial position of
the head, because the computation of the Turing machine simply does not depend
on the rest of the initial configuration. So for these cases, at most $n^{1/k}$
bits need to be communicated for Alice or Bob to compute the answer. Note that
if the bounds for the particle are absent from the initial configuration, then
no transition can happen, thus Alice and Bob know the result in constant time.

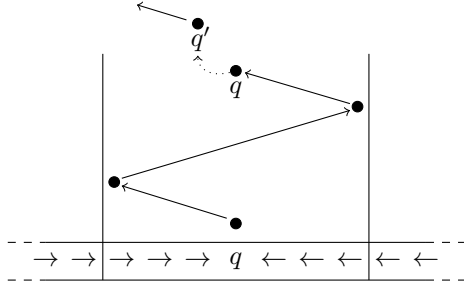
\begin{figure}[ht!]
\begin{center}
\begin{tikzpicture}

\draw(0,0)--(5,0) (0,0.5)--(5,0.5);
\draw[dashed](-0.5,0)--(0,0) (5,0)--(5.5,0) (-0.5,0.5)--(0,0.5) (5,0.5)--(5.5,0.5);
\foreach\x in {0,0.5,1,1.5,2} {\draw (\x,0.25)node{$\rightarrow$};}
\draw(2.5,0.25)node{$q$};
\foreach\x in {3,3.5,4,4.5,5} {\draw (\x,0.25)node{$\leftarrow$};}
\draw(4.25,0)--(4.25,3) (0.75,0)--(0.75,3); 

\draw[fill=black](2.5,0.75)circle(0.07cm);
\draw[->](2.38,0.82)--(1,1.25);
\draw[fill=black](0.9,1.3)circle(0.07cm);
\draw[->](1,1.35)--(4,2.25);
\draw[fill=black](4.1,2.30)circle(0.07cm);
\draw[->](4,2.34)--(2.62,2.75);
\draw[fill=black](2.5,2.775)circle(0.07cm);
\draw(2.5,2.525)node{$q$};
\draw[->,dotted](2.5,2.775) .. controls (2.2,2.675) and (2,2.775)..(2,2.975);

\draw[fill=black](2,3.4)circle(0.07cm);
\draw(2.04,3.2) node {$q'$};
\draw[->](1.85,3.45)--(1.2,3.65);
\end{tikzpicture}
\end{center}
\caption{A CA for which \predp{} is P-complete.}
\label{fig:pcomplete}
\end{figure}
\end{proof}

\end{proposition}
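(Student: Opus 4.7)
The plan is to engineer a CA $F$ that simulates a fixed P-complete Turing machine $\mathcal{M}$ but does so very slowly, precisely: $m$ steps of $\mathcal{M}$ cost $\Theta(m^k)$ steps of $F$. Since $\predp{F}$ on an input of length $N$ corresponds to $\Theta(N)$ steps of $F$, this amounts to $\Theta(N^{1/k})$ steps of $\mathcal{M}$, and the head of $\mathcal{M}$ can therefore have moved only $O(N^{1/k})$ cells away from its initial position. Consequently, only the cells within that window affect $\predp{F}$, which will give the communication upper bound, while a standard reduction from the P-complete machine $\mathcal{M}$ yields the P-completeness of $\predp{F}$.

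I would start with the standard real-time simulation of $\mathcal{M}$ by a CA, with one layer carrying the tape symbol, another carrying the head state, and a third carrying an arrow field with $\rightarrow$ on the left and $\leftarrow$ on the right of the head. Any local violation of the arrow pattern or the presence of two heads raises a spreading error state that overwrites everything; this keeps both prediction trivial on malformed inputs and the well-formedness check local.

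To obtain the slowdown, I would add a ``particle layer'' carrying a single bullet that bounces between two delimiter cells marked on yet another static layer, and then restrict the TM transition so it fires only when this bullet visits the head cell. If the bouncing region has width $m$, one traversal costs $\Theta(m)$ steps and yields one transition, so $m$ transitions cost $\Theta(m^2)$ CA steps; stacking $k-1$ such layers, each one gating the bouncing of the layer above, compounds the slowdown to $\Theta(m^k)$. Uniqueness of each bullet and integrity of the delimiters are enforced with the same spreading-error trick.

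For the communication upper bound, Alice and Bob first locally check well-formedness on their halves: if anything is wrong, the spreading state erases everything and $O(1)$ bits suffice; if the delimiters of the bouncing region are missing, no transition ever occurs and again $O(1)$ bits suffice. In the remaining case the simulation performs only $O(N^{1/k})$ TM steps, so the relevant window around the initial head position has width $O(N^{1/k})$; whichever of Alice or Bob contains the head sends to the other the $O(N^{1/k})$ bits on her or his side of the boundary that fall in this window, and the receiver concludes. The main obstacle I expect is the bookkeeping of the nested particle layers: I need to confirm that the composition of gated bouncers really yields $\Theta(m^k)$ steps per TM transition, and that every ill-formedness in any of the layers is caught by a truly spreading state so that the $O(1)$ bound on malformed instances survives when the construction is iterated $k-1$ times.
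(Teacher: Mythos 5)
Your overall strategy is the paper's: simulate a P-complete Turing machine in a CA, throttle its speed so that $t$ CA steps permit only $O(t^{1/k})$ head moves, and conclude that the prediction output depends only on an $O(n^{1/k})$-wide window around the initial head position. The P-completeness part and the handling of malformed inputs via a spreading error state are fine. But there is a genuine gap in the slowdown gadget: you let the particle bounce between \emph{two delimiter cells marked on a static layer}, firing one TM transition per visit to the head. With a static region of width $w$, each traversal costs $\Theta(w)$ steps, so $t$ CA steps yield $\Theta(t/w)$ TM transitions --- a number that is \emph{linear} in $t$, with a constant factor controlled by the input. Since $w$ is part of the initial configuration, the worst case over inputs (which is what $\cc{\predp{F}}$ measures) includes configurations where the delimiters are adjacent to the head; there the machine runs at essentially full speed, the head can travel $\Theta(n)$ cells in the $\sim n/2$ available steps, and your claim that only an $O(n^{1/k})$ window matters fails. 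Stacking $k-1$ static bouncers does not help: the compounded cost is $\Theta(T\prod_i w_i)$ for $T$ transitions, still linear in $T$, never $T^k$. Your own computation ``$m$ transitions cost $\Theta(m^2)$'' silently identifies the number of transitions with the region width, which is exactly the identification the construction must \emph{force} rather than assume.

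The paper's device that closes this gap is that the marked region is \emph{self-expanding}: each time the particle reaches an end it appends a new marked cell before turning around. Hence after $j$ traversals the region has width at least $j$, the $(j+1)$-th traversal costs $\Omega(j)$ steps, and $n$ traversals cost $\Omega(n^2)$ steps \emph{uniformly over all well-formed initial configurations}, whatever the initial width. This is what makes the number of TM steps executed in $t$ CA steps $O(\sqrt{t})$, and $O(t^{1/k})$ after iterating the layering, independently of the input --- which is precisely what the communication upper bound needs. Replacing your static delimiters by this growing region (keeping everything else in your argument) repairs the proof.
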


\begin{remark}
  A result by Hromkovic (see \cite{hromkovic97}) states that a Turing
  machine with a single head working in time $t(n)$ can only recognize
  a language of communication complexity less than $O(\sqrt{t(n)})$.
  Said differently, a CA simulating a Turing machine cannot produce
  instances of communication complexity more than $O(\sqrt{n})$ for
  the prediction problem on configurations with a single head
  (whatever the machine does).
\end{remark}

\subsection{Invasion}
\label{subsubsec:undecidable}
This problem is even more complex than \predp{}: It is in fact undecidable.
However, since there is no limitation on the ``classical'' computational power of
Alice and Bob, it can still be decided within very little communication.
\begin{proposition}\ 
  \par
  \begin{enumerate}
  \item For any CA $F$ and any word $u$, we
    have: ${\invap{F}{u}\in\Pi_1^0}$.
  \item Their exist $F$ and $u$ such that $\invap{F}{u}$ is
    $\Pi_1^0$-complete, and yet $\invac{F}{u}\in O(\log n)$
  \end{enumerate}
  \begin{proof}\hfill
  \begin{enumerate}
  \item Let $F$ and $u$ be fixed and consider the problem
    $\invap{F}{u}$. Given an input ${x_1,\ldots, x_n}$,
    we use the notations $\lediff{t}$ and $\ridiff{t}$ for the
    leftmost and righmost differences at time $t$ between the orbit of
    $p_u$ and the orbit of $p_u(x_1\cdots x_n)$ as in
    Definition~\ref{def:invasion}. 
    \begin{claim}
      There exists a recursive function $\beta$ such that for any $n$,
      any input $x_1,\ldots,x_n$ and any $\Delta\geq 0$ we have:
      \[\exists t,\, \ridiff{t}-\lediff{t}\geq\Delta \, \iff\ \exists
      t\leq\beta(\Delta),\, \ridiff{t}-\lediff{t}\geq\Delta.\]
    \end{claim}
    The proof follows from the above claim because the invasion problem can be
    expressed as the following $\Pi_1^0$ predicate:
    \[\forall\Delta\geq 0,\, \underbrace{\exists t\leq\beta(\Delta),\,
      \ridiff{t}-\lediff{t}\geq\Delta}_{\text{recursive predicate}}\]
    \begin{proof}[Proof of the claim]
      First, the orbit of $p_u$ is ultimately periodic: There are
      $t_0$ and $p$ such that for any $t\geq t_0$ we have
      ${F^t(p_u)=F^{t+p}(p_u)}$. Given an input ${x_1,\ldots, x_n}$ of
      the problem, denote by $w(t)$ the word of length
      ${\ridiff{t}-\lediff{t}}$ starting at position $\lediff{t}$ in
      configuration ${F^t\bigl(p_u(x_1,\ldots,x_n)\bigr)}$. The key
      point is that for any $t\geq t_0$, the triple
      \[\chi(t+1)=\bigl(w(t+1),\lediff{t+1}\bmod |u|, t+1\bmod p\bigr)\]
      is uniquely determined by the triple
      \[\chi(t)=\bigl(w(t),\lediff{t}\bmod |u|, t\bmod p\bigr)\]
      (because the word $w(t)$ ``evolves'' in a periodic context and
      knowing the offset of the position of $w(t)$ in that context is
      enough to know $w(t+1)$). Therefore, if the words $w(t)$ are
      bounded by $\Delta$ for a sufficiently long time (exponential in
      $\Delta$), then the triple $\chi(t)$ will take a value already
      taken before and the sequence $\bigl(\chi(t)\bigr)_t$ will be
      ultimately periodic, showing that ${|w(t)|}$ is bounded and that
      there is no invasion.  Adding $t_0$ to this exponential function
      is a convenient choice for $\beta$.
    \end{proof}
  \item We build a CA $F$ that simulates a 2-counter machine
  \cite{minsky}. More precisely, standard states have two layers: a
  data layer over states ${A,M,B,0}$, used to store the value of the 2
  unary counters, and a control layer made of a Turing head storing a
  state from $Q$, with the extra $\rightarrow$ and $\leftarrow$
  symbols ensuring the uniqueness of the head. Finally, $F$ possesses
  a blank state $\emptyset$ and a spreading state $K$ to deal with
  encoding problems. The state set is therefore
  \[\bigl(Q\cup\{K,\emptyset,\rightarrow,\leftarrow\}\bigr)\times\{A,B,0,M\}.\]

  A valid configuration is a configuration everywhere equal to
  $\emptyset$ except on finite coding segments which have the folloing
  form (see figure~\ref{fig:well-formed-conf}):
  \begin{itemize}
  \item the data layer must be of the form: ${0^\ast
      A^+MB^+0^\ast}$;
  \item the control layer must be of the form:
    ${\rightarrow^+q\leftarrow^+}$ with $q\in Q$.
  \end{itemize}

  \begin{figure}[ht]
    \begin{center}
      \begin{tikzpicture}
        \draw[dashed](-3.25,0)--(-0.5,0) (5,0)--(5.75,0)
        (-3.25,1)--(-0.5,1) (5,1)--(5.75,1); \draw(-0.5,0)--(5,0)
        (-0.5,1)--(5,1); \draw[lightgray](-0.5,0.5)--(5,0.5);
        \draw[dashed,lightgray](-3.25,0.5)--(-0.5,0.5)
        (5,0.5)--(5.75,0.5); \draw(-2,0.25) node{Data layer} (-2,0.75)
        node{Control layer}; \draw(0,0.25) node {$\emptyset$} (0,0.75)
        node {$\emptyset$}; \draw(0.25,0)--(0.25,1); \draw(0.5,0.25)
        node {$\emptyset$} (0.5,0.75) node {$\emptyset$};
        \draw(1,0.25) node {$A$} (1,0.75) node {$\rightarrow$};
        \draw(1.5,0.25) node {$A$} (1.5,0.75) node {$\rightarrow$};
        \draw(2,0.25) node {$A$} (2,0.75) node {$\rightarrow$};
        \draw(2.5,0.25) node {$M$} (2.5,0.75) node {$q$};
        \draw(3,0.25) node {$B$} (3,0.75) node {$\leftarrow$};
        \draw(3.5,0.25) node {$0$} (3.5,0.75) node {$\leftarrow$};
        \draw(4,0.25) node {$\emptyset$} (4,0.75) node {$\emptyset$};
        \draw(4.5,0.25) node {$\emptyset$} (4.5,0.75) node
        {$\emptyset$}; \draw(4.75,0)--(4.75,1); \draw(5,0.25) node
        {$\emptyset$} (5,0.75) node {$\emptyset$};
        
      \end{tikzpicture}
    \end{center}
    \caption{A well-formed piece of configuration. The counter $A$
      contains value $3$ and the counter $B$ contains value $1$ in this
      example.}
    \label{fig:well-formed-conf}
  \end{figure}
  
  The number of $A$s and $B$s represent the current value of the 2
  counters. The behaviour of $F$ is the following:
  \begin{itemize}
  \item If the configuration is not valid (which can be detected
    locally), then the state $K$ is generated and spreads;
  \item If the configuration is valid, then on each coding segment,
    the (necessarily unique) head goes repeatedly from one end
    of the segment to the other end, and extends the segment at each pass
    by adding a $\rightarrow$ on the left (resp. $\leftarrow$ on the
    right) and a $0$ on the data layer. If the extension step is
    blocked by another segment, then the state $K$ is generated and
    spreads;
  \item Moreover, at each pass on the segment, the head executes
    one of the basic 2-counter machine's instructions:
    \begin{itemize}
    \item testing if a counter is empty can be done by checking if
      there is a $0$ on the right (resp. the left) of the unique $M$;
    \item decrementing can be done be replacing the leftmost $A$
      (resp. rightmost $B$) by a $0$;
    \item incrementing can be done by replacing a $0$ by $A$ on the
      left of the leftmost $A$ (resp. by $B$ on the right of the
      rightmost $B$); there must be a $0$, because the segment is
      extended at each passage by both sides;
    \item finally, the head can simply stop.
    \end{itemize}
  \end{itemize}

  If any order given to the head leads to an incoherence (decrement an
  empty counter, write a $B$ when on the '$A$' part of the segment,
  etc), the state $K$ is genereated and spreads.

  With this definition, and if $u=\emptyset$, the halting problem for
  the 2-counter machine encoded in $F$ (input: value of counters;
  output: does it halt started from these values ?) clearly reduces to
  $\invap{F}{u}$ (halt$\iff$ no invasion). Therefore, by a suitable
  choice of the 2-counter machine used to construct $F$, we have that
  $\invap{F}{u}$ is $\Pi_1^0$-complete.

  To conclude the proof, we show that ${\cc{\invap{F}{u}}\in
    O(\log(n))}$. Given an input $w$ split between Alice and Bob,
  the following protocol determines whether ${\invap{F}{u}(w)=1}$:
  \begin{itemize}
  \item first Alice and Bob check whether the input configuration is
    valid; if not, the answer is 'invasion'; this can be done with
    $O(1)$ bits of communication since validity is a local property;
  \item the configuration being valid, Alice and Bob communicate so
    that for any pair of consecutive valid segments $s_1$ and $s_2$,
    either Alice or Bob knows the state of both $s_1$ and $s_2$ and
    the distance between them; to achieve this, even if a segment is
    split between Alice's part and Bob's part, it is sufficient
    that they communicate $O(\log(n))$ bits; indeed, a segment is
    completely defined by:
    \begin{itemize}
    \item the value and position of the head,
    \item number of $0$ states on the right and the same on the left,
    \item number of $A$s and number of $B$s.
    \end{itemize}
  \item since for each pair of valid segment, Alice or Bob as enough
    information to detect a possible future collision, they can
    determine together with $O(1)$ bits of communication whether there
    is invasion or not; indeed, invasion is equivalent to: either
    their is a collision somewhere, or their is a single segment
    holding a non-halting computation.
  \end{itemize}

\end{enumerate}
\end{proof}

\end{proposition}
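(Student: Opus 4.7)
The plan has two independent parts, corresponding to the two assertions.

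For assertion 1, I would establish a recursive time bound $\beta$ such that
\[\exists t,\ \ridiff{t}-\lediff{t}\geq\Delta \iff \exists t\leq\beta(\Delta),\ \ridiff{t}-\lediff{t}\geq\Delta.\]
Granted this, the invasion predicate rewrites as
\[\forall \Delta\geq 0,\ \exists t\leq\beta(\Delta),\ \ridiff{t}-\lediff{t}\geq\Delta,\]
which is a universal quantification over a recursive predicate, hence $\Pi_1^0$. To construct $\beta$, I would exploit that the orbit $(F^t(\perio{u}))_t$ is ultimately periodic (say with pre-period $t_0$ and period $p$, both computable from $F$ and $u$), and observe that the ``difference word'' $w(t)$ between $F^t(\perio{u})$ and $F^t(\perio{u}(x_1,\ldots,x_n))$ evolves in a periodic context, so that the triple $(w(t),\lediff{t}\bmod|u|, t\bmod p)$ deterministically determines the next triple. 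Thus if $|w(t)|\leq\Delta$ stays bounded for more than some exponential in $\Delta$ number of steps past $t_0$, the pigeonhole principle forces the sequence of triples to enter a cycle, and $|w(t)|$ must remain bounded forever. Setting $\beta(\Delta) = t_0 + |Q|^\Delta\cdot|u|\cdot p$ (or any such explicit bound) works.

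For assertion 2, I would simulate an arbitrary 2-counter (Minsky) machine, which has a $\Pi_1^0$-complete halting problem. The state set factors as a data layer $\{A,M,B,0\}$ storing the two counters in unary as $0^*A^+MB^+0^*$, and a control layer containing a head $q\in Q$ flanked by $\rightarrow$s on the left and $\leftarrow$s on the right; a spreading state $K$ kills invalid local patterns, and blank state $\emptyset$ fills the background (with $u=\emptyset$). The head bounces back and forth inside its segment, extending the segment by one cell on each turnaround and executing one machine instruction per pass; collisions between segments and all illegal instructions also generate $K$. A straightforward check then gives: the machine halts iff the head stops and the segment stops growing, iff there is no invasion. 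Choosing a machine with $\Pi_1^0$-complete halting problem makes $\invap{F}{u}$ (for $u=\emptyset$) $\Pi_1^0$-complete.

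For the communication bound, on an input $w$ split between Alice and Bob I would use the protocol: (i) both locally check that their half is well-formed, costing $O(1)$ bits for a ``valid/invalid'' exchange; (ii) for each coding segment that straddles the cut, one of the two players transmits a complete succinct description of it: head state ($O(1)$), position of the head within the segment, number of $A$s, number of $B$s, length of $0$-padding on each side — each an integer at most $n$, hence $O(\log n)$ bits total; (iii) after this exchange, one of the two players knows each segment entirely, together with the gaps between consecutive segments, and so can independently simulate each segment and detect in finite time either a collision with a neighbour (which forces invasion via $K$) or a non-halting isolated computation (the only other source of invasion). Since the number of segments that straddle the cut is at most $1$ and the segments fully contained in one player's half need no transmission, the total communication is $O(\log n)$.

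The main obstacle is the careful bookkeeping in the $\Pi_1^0$-hardness construction: ensuring that well-formedness is truly a local property (so that invalid configurations are detected and killed by $K$ in bounded time, uniformly in the input), that the per-pass extension of the segment is compatible with running one 2-counter instruction per pass, and that collisions between two valid growing segments are the only way non-halting-induced invasion can arise. Once this is settled, both the reduction from the halting problem and the $O(\log n)$ protocol fall out cleanly from the segment description above.
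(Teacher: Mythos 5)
Your proposal follows essentially the same route as the paper's proof: the same recursive bound $\beta$ obtained from the ultimately periodic background orbit and the pigeonhole argument on the triples $\bigl(w(t),\lediff{t}\bmod|u|,t\bmod p\bigr)$ for the $\Pi_1^0$ upper bound, and the same 2-counter-machine simulation with bouncing, self-extending coding segments, a spreading error state, and an $O(\log n)$ protocol that transmits succinct segment descriptions across the cut. The argument is correct and matches the paper in all essentials.
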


\subsection{Cycle-length}
For this problem, we could find a CA of maximal
algorithmic complexity, as shown by the following proposition.
However, we have to leave as an open problem the existence of a CA
$F$ for which both
$\cyclp{F}{k}$ is \textsc{pspace}-complete for some $k\in\setN$,
and $\cyclc{F}{k}\in o(n)$.

\begin{proposition}
  \par
  \label{prop:pspacecomplete}
  \begin{enumerate}
  \item For any CA $F$ and any $k\geq 1$, $\cyclp{F}{k}\in\textsc{pspace}$.
  \item Their exist $F$ and $k$ such that $\cyclp{F}{k}$ is
    \textsc{pspace}-complete.
  \end{enumerate}
\end{proposition}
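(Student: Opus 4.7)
The plan for the upper bound (part 1) is a direct polynomial-space simulation. Fix $u \in Q^n$. Since $F$ preserves $n$-periodicity, the orbit $\bigl(F^t(\perio{u})\bigr)_{t \geq 0}$ is entirely described by a sequence of words in $Q^n$, a set of size $|Q|^n$, so the orbit becomes periodic within at most $|Q|^n$ iterations. Unfolding the definition of $\lambda$ yields the equivalence: $\lambda(u) \leq k$ iff there exist $t_0 \leq |Q|^n$ and $p \in \{1,\ldots,k\}$ such that $F^{t_0}(\perio{u}) = F^{t_0+p}(\perio{u})$. The PSPACE algorithm then loops over the constant number of values of $p$: for each $p$, it maintains two configurations $c_1, c_2 \in Q^n$, initialized to $\perio{u}$ and $F^p(\perio{u})$, iterates them in parallel and compares them at each step, for up to $|Q|^n$ steps. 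Each configuration uses $O(n \log |Q|)$ bits and the step counter (ranging up to $|Q|^n$) fits in $O(n \log |Q|)$ bits, so polynomial space suffices.

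For the hardness (part 2), the plan is to fix a PSPACE-complete language $L$ decided by a deterministic polynomial-space TM $M$, and to construct a CA $F$ that simulates $M$ on cyclic configurations. Given an input $w$ for $L$, the reduction produces a cyclic input $u_w$ of length polynomial in $|w|$ encoding the initial configuration of $M$ on $w$, padded with blanks up to the full work-tape length, with boundary markers, a unique head (uniqueness enforced by a spreading error state), and an independent \emph{flasher} layer that alternates between two states at every step (period $2$). The CA $F$ simulates $M$ step by step on this encoding. When $M$ enters an accepting state, a spreading clean state wipes out every layer, including the flasher, yielding a uniform fixed configuration of temporal period $1$. If $M$ never accepts on $w$, the flasher keeps flipping forever, so the ultimate period is at least $2$. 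Taking $k = 1$ we obtain $\cyclp{F}{1}(u_w) = 1$ iff $w \in L$, which establishes PSPACE-hardness.

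The main difficulty lies in making the simulation work correctly on a finite cyclic tape rather than a bi-infinite one. In particular, the flasher layer must be genuinely independent of the TM simulation so that its period-$2$ oscillation is preserved as long as $M$ has not accepted, and the spreading clean state, once acceptance occurs, must propagate across the boundary markers and reach every flasher cell to produce a truly uniform configuration. Malformed inputs (multiple heads, missing markers, corrupted flasher) can be handled uniformly by letting the clean state appear immediately in their presence; this is harmless since the reduction only emits well-formed $u_w$, so the equivalence with $w \in L$ is preserved.
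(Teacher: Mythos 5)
Your proposal is correct and follows essentially the same route as the paper: part 1 is a polynomial-space simulation over at most $|Q|^n$ steps with an $O(n\log|Q|)$-bit counter, and part 2 embeds a polynomial-space Turing machine into a cyclic configuration with a spreading accepting state that collapses the orbit to period $1$. The only (cosmetic) difference is in how non-acceptance is separated from acceptance: you add an independent period-$2$ flasher layer so the period stays at least $2$, whereas the paper has the rejecting state launch a rotating erasing particle whose period equals the configuration length; both yield the same reduction with $k=1$.
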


\begin{proof}\hfill
  \begin{enumerate}
  \item Let $F$ and ${k\geq 1}$ be fixed. The length of the cycle
    reached by iterating $F$ on a periodic initial configuration $c$
    can be determined in polynomial space with the algorithm described
    below.  Let $n$ be the period of $c$. Starting from $c$, the cycle
    is reached in less than $\alpha^n$ steps where $\alpha$ is the
    cardinal of the state set.
    \begin{enumerate}
    \item compute $c_0=F^{\alpha^n}(c)$ (memory usage: $O(n)$);
    \item memorize $c_0$ and compute the first $t$ such that
      $F^t(c_0)=c_0$ (memory usage: $O(n)$ because such a $t$ is less
      than $\alpha^n$).
    \end{enumerate}
  \item 
  To show this, we embed a Turing machine $\mathcal{M}$,
  deciding a \textsc{pspace}-complete language,
  in a cyclic configuration for a cellular automaton.
  $\mathcal{M}$ works in polynomial space, meaning that there is a
  polynomial $P\in \setN[X]$ such that for any $x\in\Gamma^*$, it will
  never use more than $P(|x|)$ tape cells.
\begin{figure}[htb]
\begin{center}
\begin{tikzpicture}[scale=0.7]
\pgfmathdeclarerandomlist{color}{{yellow}{green}{blue}{red}}

\foreach\alpha in{0,...,\time}{
  \pgfmathrandomitem{\c}{color}
}
\def\start{30}
\def\step{40}
\def\ending{100}

\foreach\alpha in{\start,\step,...,\ending}{
  \pgfmathrandomitem{\c}{color}
  \draw[draw=none, fill=\c!50!white](\alpha:2.5)--(\alpha:3) arc (\alpha:10+\alpha:3)
    -- (10+\alpha:2.5) arc (10+\alpha:\alpha:2.5);
}
\draw[->](5+\ending:3.5) node[anchor=east] {Turing head (state: $Q$)} --(5+\ending:3) ;
\foreach\alpha in{0,10,...,350}{
  \draw(\alpha:2.5)--(\alpha:3);
}
\draw(0,0) circle (3) (0,0) circle (2.5);
\draw[<-](\ending+10:2.2) arc (\ending+10:\ending+200:2.2);
\draw[->](\ending+200:2.2) arc (\ending-160:\ending+10:2.2);
\draw(\ending+15:2.2) node[anchor=north west] {$P(n)$};
\end{tikzpicture}
\end{center}
\caption{The output of the transducer used in Proposition
  \ref{prop:pspacecomplete}.
\label{fig:pspace}
}

\end{figure}
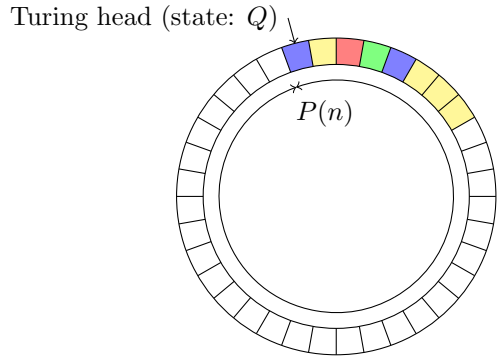

  We can encode a Turing machine easily into a simple cellular
  automaton F{}: the states code for the Turing tape cells, and there
  is a special ``head'' state carrying the state of the machine.  It
  can be easily shown that we can encode the transitions of a Turing
  machine into a local cellular automaton rule, ensuring that if there
  is only one head at the beginning, then it will be so during all the
  computation.

  Moreover, the accepting state is \emph{spreading}, meaning that if
  it appears somewhere, it spreads over all the configuration in both
  directions. The rejecting state launches a particle erasing the
  configuration (i.e., writing blank states everywhere), but shifting
  clockwise. In this way, an accepting computation will result in period
  1, whereas rejecting computations will yield periods of the size of
  the configuration.

  A polynomial-time transducer can easily encode an input $x$ for
  $\mathcal{M}$ into a (cyclic) configuration of F, like shown in
  figure \ref{fig:pspace}. It first directly translates $x$ into states
  of F, then computes $P(|x|)$ and outputs $P(x)$ blank states.
\end{enumerate}
\end{proof}

\section{Intrinsic universality: Ruling out concrete elementary CA}
\label{sec:concrete}
\subsection{CA Rule 218}
\label{sec:218}

The local function 
$f_{218}: \{0,1\}^3 \to \{0,1\}$
of CA rule 218 is defined in
Figure~\ref{fig:rule-218}.

\begin{figure}[htb]
\begin{center}
\subfigure[$F_{218}.$]
{
{\carule 0 1 0 1 1 0 1 1}
\label{fig:rule-218}
}
\subfigure[Example of a space-time diagram for CA Rule 218.]
{
\docompile{\input{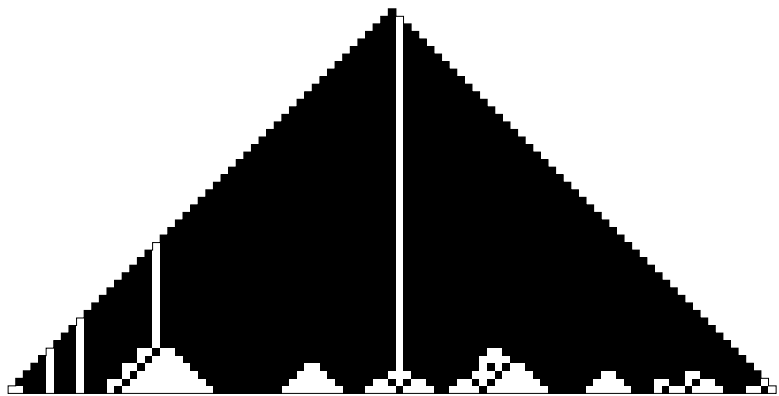}}
\label{spacetime-218}
}
\end{center}
\caption{CA rule 218.}
\end{figure}

From the result of \cite{rap08} we already knew that
 $\cc{\predp{F_{218}}} \in O(\log(n))$.  
It follows from Corollary~\ref{coro:uni} that Rule 218
is not intrinsically universal.  Nevertheless, the proof of
\cite{rap08} was very long and complicated.  As we are going to see
now, the invasion approach gives a short and elegant proof of the same
result.

\begin{definition}
  A word is \emph{additive} if 1s are isolated and separated by an odd
  number of 0s. By extension, an infinite configuration is
  \emph{additive} if it contains only additive words.
\end{definition}

\begin{lemma}
  \label{218:additiv}
  Additivity is preserved by iterations. Moreover, if $abc$ is
  additive then:
  \[f_{218}(a,b,c)\not=f_{218}(1-a,b,c)\text{ and
  }f_{218}(a,b,c)\not=f_{218}(a,b,1-c).\]
\end{lemma}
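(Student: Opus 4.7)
The plan is to exploit the fact that, on every 3-letter pattern that can occur in an additive configuration, rule~218 coincides with the linear rule $a\oplus c$ (rule~90), and then to prove both parts of the lemma by a short case analysis.

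Step~1. I would first tabulate $f_{218}$ from the binary expansion $218 = 11011010_2$. The five patterns without a $11$ substring --- namely $000,001,010,100,101$, which are exactly the length-$3$ windows that may appear in an additive word --- give the values $0,1,0,1,0$, matching $a\oplus c$ in each case. This identifies rule~218 with rule~90 as long as we stay inside the "additive world."

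Step~2. I would recast additivity in a cleaner form: a configuration is additive iff all its $1$s occupy positions of the same parity. (Isolation forbids adjacent $1$s, and "odd number of $0$s between consecutive $1$s" means consecutive $1$-positions differ by an even integer, so they share a parity.) From Step~1, on an additive $c$ one has $F_{218}(c)_i = c_{i-1}\oplus c_{i+1}$. If all $1$s of $c$ lie at even positions, then for $i$ even both $c_{i-1}$ and $c_{i+1}$ are $0$, so $F_{218}(c)_i=0$; hence the $1$s of $F_{218}(c)$ lie only at odd positions. The odd-parity case is symmetric, which gives preservation of additivity.

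Step~3. For the sensitivity statement, I would handle the five additive triples one by one. A subtle point is that flipping $a$ or $c$ can produce a non-additive triple (e.g.\ $010\mapsto 011$), on which the $a\oplus c$ shortcut is not a priori valid; so I would read off the value of $f_{218}$ directly from the table in Step~1. The ten resulting comparisons are straightforward line-by-line verifications, all yielding a flipped output.

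The only real obstacle is getting the combinatorics of "additive" right --- in particular, confirming that $101$ counts as additive (one $0$ between two $1$s, which is odd) while $1001$ does not. Once the position-parity reformulation is in place, both parts of the lemma reduce to finite inspection plus the one-line observation about $F_{218}=\oplus$ on additive patches.
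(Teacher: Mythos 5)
Your proposal is correct: the rule table you extract from $218=11011010_2$ is right, the five $11$-free windows do agree with $a\oplus c$, and the ten flip-comparisons (read off the full table, since a flip may create a $11$) establish the second claim exactly as the paper does — the paper likewise reduces it to checking all triples $abc$ in which $11$ is not a factor. Where you genuinely diverge is the preservation half. The paper argues directly on blocks: it verifies that $010^{n}10$ maps to $010^{n-2}10$ for $n\geq 3$ and that $01010$ maps to $000$, so the gap structure of an additive word evolves back into an additive word. You instead reformulate additivity as ``all $1$s share a position parity,'' observe that on additive configurations $F_{218}$ acts as rule $90$, i.e. $F_{218}(c)_i=c_{i-1}\oplus c_{i+1}$, and conclude that the image's $1$s all land on the opposite parity class. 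Both arguments are elementary and correct; yours is a bit more conceptual and makes explicit the ``rule $218$ is rule $90$ on additive configurations'' identification that the paper only states for rule $94$ (bi-permutativity on additive configurations), whereas the paper's block computation is more self-contained and needs no reformulation of the definition. Your caution about $101$ being additive while $1001$ is not, and about not applying the $a\oplus c$ shortcut to flipped (possibly non-additive) triples, addresses precisely the two places where a careless version of this argument would break.
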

\begin{proof}
  First additivity is preserved by iterations because $010^n10$
  becomes $010^{n-2}10$ for $n\geq 3$ and $01010$ becomes $000$.

  To conclude the lemma, it is sufficient to check that, for any
  $a$,$b$,$c$ such that $11$ is not a factor of $abc$ then:
  \[f_{218}(a,b,c)\not=f_{218}(1-a,b,c)\text{ and
  }f_{218}(a,b,c)\not=f_{218}(a,b,1-c).\]
\end{proof}

\begin{lemma}
  \label{218:makewall}
  Let $c$ be any non-additive configuration. Then, after a finite
  time, the word $11$ appears in the evolution and this word is a
  wall.
\end{lemma}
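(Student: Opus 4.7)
The proof has two parts: showing that $11$ is a wall and that $11$ actually appears. The wall part is a one-line verification from the rule table: since $f_{218}(x,1,1) = f_{218}(1,1,y) = 1$ for every $x,y \in \{0,1\}$, both cells of any occurrence of $11$ remain at state $1$ forever.

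For the appearance of $11$ I will argue by contradiction and assume the orbit $(F_{218}^t(c))_{t\geq 0}$ never contains $11$. The key observation is that $f_{218}$ coincides with the XOR rule $f_{90}(a,b,c) = a \oplus c$ (Rule~$90$) on every neighbourhood except $(1,1,1)$; an $11$-free configuration contains no such neighbourhood, so the two rules act identically on it. Under the assumption the orbit is therefore identical to the Rule~$90$ orbit of $c$, and it will suffice to exhibit a time at which $F_{90}^t(c)$ contains $11$.

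By non-additivity, the set $\{p_1 - p_0 : c[p_0] = c[p_1] = 1,\ p_1 - p_0 \text{ odd}\}$ is nonempty. I will fix a pair $(p_0, p_1)$ realizing its minimum $2m+1$. If $m=0$, the word $11$ already appears in $c$ and we are done; otherwise, minimality will force no $1$ of $c$ to lie strictly between $p_0$ and $p_1$ --- any such $1$ would share the parity of $p_0$ or of $p_1$ and would form a non-additive pair of strictly smaller odd separation with the other, contradicting minimality.

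The remaining step uses the linearity of Rule~$90$ over $\mathbb{F}_2$ to compute $F_{90}^m(c)$ at the adjacent positions $p_0 + m$ and $p_0 + m + 1 = p_1 - m$. The Pascal-triangle-mod-$2$ description gives $F_{90}^m(\delta_q)$ supported in $[q-m, q+m]$ with extremal values $F_{90}^m(\delta_q)[q \pm m] = 1$. Writing $F_{90}^m(c)[i] = \bigoplus_{q\,:\,c[q]=1} F_{90}^m(\delta_q)[i]$, the position $p_0 + m$ receives contributions only from initial $1$s in $[p_0, p_1 - 1]$, and by the choice of $(p_0, p_1)$ the only such $1$ is $p_0$ itself, so $F_{90}^m(c)[p_0+m] = 1$; symmetrically only $p_1$ contributes to position $p_0 + m + 1$, giving $F_{90}^m(c)[p_0+m+1] = 1$. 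This occurrence of $11$ at time $m$ contradicts the hypothesis. The main obstacle --- ruling out cancellations from other $1$s of $c$ in the XOR sum --- is precisely what the minimality of the pair $(p_0,p_1)$ is chosen to eliminate.
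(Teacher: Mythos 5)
Your proof is correct, and the part showing that $11$ is a wall is identical to the paper's. For the appearance of $11$, however, you take a noticeably different route. The paper argues directly and locally: a non-additive $11$-free configuration contains a factor $10^{2k}1$ (two consecutive $1$s with an even gap of zeros), and a one-line check of the rule shows that the image of any factor $10^n1$ with $n\ge 2$ is $10^{n-2}1$ one cell in from each end; iterating $k$ times produces $11$, with no hypothesis on the rest of the orbit. You instead argue by contradiction, identify the entire orbit with the Rule~$90$ orbit (legitimate, since $f_{218}$ and $f_{90}$ differ only on the neighbourhood $(1,1,1)$, which an $11$-free configuration cannot exhibit), and then use $\mathbb{F}_2$-linearity together with the light-cone/Pascal description of $F_{90}^m(\delta_q)$ to show that the extremal signals emanating from a minimally separated odd pair collide after $m$ steps. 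Your minimality device plays exactly the role that ``take two \emph{consecutive} $1$s'' plays implicitly in the paper: it guarantees no $1$ strictly between $p_0$ and $p_1$ can contribute to the two target cells, while the light cone rules out contributions from $1$s outside $[p_0,p_1]$, so the superposition really does reduce to the two extremal binomial coefficients $\binom{m}{0}=\binom{m}{m}=1$. Both arguments are sound; the paper's is shorter and purely local, whereas yours makes the Rule~$90$ structure of rule $218$ explicit and yields the precise collision time $m$, where $2m+1$ is the minimal odd separation between $1$s of $c$.
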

\begin{proof}
  First $11$ is a wall because:
  \[f_{218}(\ast,1,1)=f_{218}(1,1,\ast)=1.\] To conclude it is sufficient to
  check that the image of ${10^n1}$ with ${n\geq 2}$ is ${10^{n-2}1}$.
\end{proof}

\begin{proposition}
  For all $u$, we have $\cc{\invap{F_{218}}{u}}\leq 1$.
\end{proposition}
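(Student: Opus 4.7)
The plan is to design a 1-bit protocol that splits into two cases depending on whether the background $p_u$ is additive. The protocol is decided uniformly in advance based on $u$, so the branching costs no communication.

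\textbf{Case 1: $p_u$ is not additive.} By Lemma~\ref{218:makewall} applied to $p_u$, the word $11$ appears in its orbit after some finite time $t_0$, and by spatial periodicity of $p_u$ the walls appear simultaneously at every position of the form $W_0 + k\,|u|$. Since the modification $x_1,\ldots,x_n$ is finite and $F_{218}$ has radius $1$, the difference between the orbits of $p_u$ and $p_u(x_1,\ldots,x_n)$ has spread by at most $t_0$ cells during the first $t_0$ steps. Picking $k$ large enough so that $W_0+k\,|u| > n+t_0$ (and symmetrically on the left), both orbits agree near that position at time $t_0$, so the wall forms identically in both. Since $f_{218}(\ast,1,1)=f_{218}(1,1,\ast)=1$, the wall is permanent, and because the cell just to the left of a wall $(W,W{+}1)$ evolves from $(c_{W-1},1)$ alone, information cannot cross a wall. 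Hence the difference set remains confined between two walls of $p_u$'s orbit, so $\ridiff{t}-\lediff{t}$ stays bounded and there is \emph{never} invasion. No bit of communication is needed.

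\textbf{Case 2: $p_u$ is additive.} By Lemma~\ref{218:additiv}, $F^t(p_u)$ stays additive for all $t$. We claim that invasion occurs iff $p_u(x_1,\ldots,x_n)\neq p_u$. The ``only if'' direction is trivial. For the converse, let $\lediff{t}$ be the leftmost position of disagreement at time $t$. Positions $\lediff{t}-2$ and $\lediff{t}-1$ coincide in both orbits, and the triple around $\lediff{t}-1$ in $F^t(p_u)$ is additive. The two triples entering $f_{218}$ at position $\lediff{t}-1$ differ only in their right coordinate, so by the second part of Lemma~\ref{218:additiv} their images differ, giving $\lediff{t+1}\leq\lediff{t}-1$. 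Symmetrically $\ridiff{t+1}\geq\ridiff{t}+1$, so $\ridiff{t}-\lediff{t}\to\infty$ and invasion holds.

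\textbf{The 1-bit protocol.} In Case 2, Alice sends the single bit ``does my portion $w_1$ coincide with the corresponding positions of $p_u$?'' Bob answers ``invasion'' iff this bit is ``no'' or his own portion $w_2$ disagrees with $p_u$; by the claim this exactly computes $\invap{F_{218}}{u}(w)$. In Case 1 Bob answers ``no invasion'' with no communication. Either way at most one bit is exchanged, proving $\cc{\invap{F_{218}}{u}}\le 1$.

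\textbf{Main obstacle.} The delicate point is Case 1: one must argue that walls appear at \emph{periodic} spatial positions in the orbit of $p_u$ (which follows from periodicity of $p_u$ and determinism of $F_{218}$), that they form before the perturbation can reach them, and that they genuinely block all information transfer. The Case 2 spreading argument, by contrast, is a direct application of Lemma~\ref{218:additiv}.
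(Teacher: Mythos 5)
Your proof is correct and follows the same route as the paper's: split on whether $p_u$ is additive, use Lemma~\ref{218:makewall} to get confining walls in the non-additive case, and use the bi-permutativity statement of Lemma~\ref{218:additiv} to show that in the additive case invasion is equivalent to $p_u(x_1,\ldots,x_n)\neq p_u$, which costs one bit. You merely spell out details the paper leaves implicit (why the walls form identically in both orbits and why they block information), so the two arguments are essentially identical.
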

\begin{proof}
  First, if the configuration $\perio{u}$ is non-additive then, by
  Lemma~\ref{218:makewall}, at some time $t$ a wall appears
  periodically in ${F_{218}^t(\perio{u})}$. Hence, for any
  ${x_1,\ldots, x_n}$, the differences between
  $\perio{u}(x_1,\ldots,x_n)$ and $\perio{u}$ are bounded to a fixed
  finite region. Said differently, there is never
  propagation for such an $u$.

  Now consider the case where $\perio{u}$ is additive. By
  Lemma~\ref{218:additiv}, we have for any ${x_1,\ldots, x_n}$:
  \begin{itemize}
  \item either ${\perio{u}=\perio{u}(x_1,\ldots,x_n)}$,
  \item or for any $t\geq 0$:
    \begin{align*}
      \lediff{t} &= \lediff{0}-t\\
      \ridiff{t} &= \ridiff{0} + t
    \end{align*}
  \end{itemize}
  
Therefore, the problem consists in deciding whether 
${\perio{u}}$ and ${\perio{u}(x_1,\ldots,x_n)}$ are equal, 
which can be done
 with $1$ bit of communication.
\end{proof}

\begin{corollary}
CA Rule 218 is not intrinsically universal.
\end{corollary}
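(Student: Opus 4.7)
The plan is to derive this corollary directly by combining the invasion-complexity bound just established for Rule 218 with the necessary condition for intrinsic universality given in Corollary~\ref{coro:uni}. The argument is a one-line contradiction, so there is no real obstacle; the work has already been done in the preceding proposition.

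More concretely, I would argue by contradiction. Assume $F_{218}$ is intrinsically universal. Then by assertion~\ref{titi} of Corollary~\ref{coro:uni}, there must exist a finite word $u$ over the alphabet $\{0,1\}$ of $F_{218}$ such that $\cc{\invap{F_{218}}{u}}\in\Omega(n)$. In particular, $\cc{\invap{F_{218}}{u}}$ is unbounded as a function of $n$.

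However, the preceding proposition establishes that for \emph{every} finite word $u$ one has $\cc{\invap{F_{218}}{u}}\leq 1$, which is obviously $O(1)$ and not in $\Omega(n)$. This directly contradicts the previous conclusion, so $F_{218}$ cannot be intrinsically universal.

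The only subtlety worth noting in the writeup is that Corollary~\ref{coro:uni} is stated with a single existential quantifier on $u$, whereas the invasion bound on Rule 218 is universal in $u$; this asymmetry makes the contradiction clean, since a universal $O(1)$ bound rules out the existence of any $u$ with $\Omega(n)$ complexity. No further computation or construction is needed.
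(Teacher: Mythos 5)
Your proof is correct and is exactly the argument the paper intends: the corollary follows immediately by combining the preceding proposition (which gives $\cc{\invap{F_{218}}{u}}\leq 1$ for \emph{all} $u$) with assertion~\ref{titi} of Corollary~\ref{coro:uni}, and your observation about the quantifier asymmetry is the right one. Nothing is missing.
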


As promised in section \ref{easy-inv-hard-pred},
it remains to show that the deterministic (possibly with several rounds)
communication complexity of the \predp{} problem for rule 218 is ``hard'',
according to our conventions:

\begin{proposition}
\[\predp{F_{218}}\in\Omega(\log n)\]
\begin{proof}
To show this, we construct a fooling set $S_n$ (see
Definition \ref{def:fool} or \cite{kushilevitz97}):
\[S_n=\{(1^{n-k}0^k,0^{k+1}1^{n-k},0\leq k\leq n\}\] 

We show that $S_n$ is a fooling set for Rule $218$: In fact, on all
configurations of the form $1^{n-k}0^{2k+1}1^{n-k}$, the result of
\predp{F_{218}} is always $0$. On configurations of the form
$1^{n-i}0^{i+j+1}1^{n-j}$ where $i\neq j$, it is always $1$. This can
be easily shown from the collection of lemmas of \cite{rap08}, and we
illustrate it on Figure \ref{fig:218_fool}. Thus, since $|S_n|=n+1$,
we deduce that a deterministic protocol solving \predp{F^n_{218}} can
not take less than $\log(n+1)$ steps:
\[\cc{\predp{F_{218}^n}}\in\Omega(\log(n))\]
\begin{figure}[htb]
\begin{center}
\begin{tikzpicture}[scale=0.3]
\draw[draw=none,fill=black](-6,0)--(-5,0)--(-5,1)--(-6,1)--cycle;
\draw[draw=none,fill=black](-5,0)--(-4,0)--(-4,1)--(-5,1)--cycle;
\draw[draw=none,fill=black](-4,0)--(-3,0)--(-3,1)--(-4,1)--cycle;
\draw[draw=none,fill=white](-3,0)--(-2,0)--(-2,1)--(-3,1)--cycle;
\draw[draw=none,fill=white](-2,0)--(-1,0)--(-1,1)--(-2,1)--cycle;
\draw[draw=none,fill=white](-1,0)--(0,0)--(0,1)--(-1,1)--cycle;
\draw[draw=none,fill=white](0,0)--(1,0)--(1,1)--(0,1)--cycle;
\draw[draw=none,fill=white](1,0)--(2,0)--(2,1)--(1,1)--cycle;
\draw[draw=none,fill=white](2,0)--(3,0)--(3,1)--(2,1)--cycle;
\draw[draw=none,fill=white](3,0)--(4,0)--(4,1)--(3,1)--cycle;
\draw[draw=none,fill=white](4,0)--(5,0)--(5,1)--(4,1)--cycle;
\draw[draw=none,fill=white](5,0)--(6,0)--(6,1)--(5,1)--cycle;
\draw[draw=none,fill=black](6,0)--(7,0)--(7,1)--(6,1)--cycle;
\draw[draw=none,fill=black](-5,1)--(-4,1)--(-4,2)--(-5,2)--cycle;
\draw[draw=none,fill=black](-4,1)--(-3,1)--(-3,2)--(-4,2)--cycle;
\draw[draw=none,fill=black](-3,1)--(-2,1)--(-2,2)--(-3,2)--cycle;
\draw[draw=none,fill=white](-2,1)--(-1,1)--(-1,2)--(-2,2)--cycle;
\draw[draw=none,fill=white](-1,1)--(0,1)--(0,2)--(-1,2)--cycle;
\draw[draw=none,fill=white](0,1)--(1,1)--(1,2)--(0,2)--cycle;
\draw[draw=none,fill=white](1,1)--(2,1)--(2,2)--(1,2)--cycle;
\draw[draw=none,fill=white](2,1)--(3,1)--(3,2)--(2,2)--cycle;
\draw[draw=none,fill=white](3,1)--(4,1)--(4,2)--(3,2)--cycle;
\draw[draw=none,fill=white](4,1)--(5,1)--(5,2)--(4,2)--cycle;
\draw[draw=none,fill=black](5,1)--(6,1)--(6,2)--(5,2)--cycle;
\draw[draw=none,fill=black](-4,2)--(-3,2)--(-3,3)--(-4,3)--cycle;
\draw[draw=none,fill=black](-3,2)--(-2,2)--(-2,3)--(-3,3)--cycle;
\draw[draw=none,fill=black](-2,2)--(-1,2)--(-1,3)--(-2,3)--cycle;
\draw[draw=none,fill=white](-1,2)--(0,2)--(0,3)--(-1,3)--cycle;
\draw[draw=none,fill=white](0,2)--(1,2)--(1,3)--(0,3)--cycle;
\draw[draw=none,fill=white](1,2)--(2,2)--(2,3)--(1,3)--cycle;
\draw[draw=none,fill=white](2,2)--(3,2)--(3,3)--(2,3)--cycle;
\draw[draw=none,fill=white](3,2)--(4,2)--(4,3)--(3,3)--cycle;
\draw[draw=none,fill=black](4,2)--(5,2)--(5,3)--(4,3)--cycle;
\draw[draw=none,fill=black](-3,3)--(-2,3)--(-2,4)--(-3,4)--cycle;
\draw[draw=none,fill=black](-2,3)--(-1,3)--(-1,4)--(-2,4)--cycle;
\draw[draw=none,fill=black](-1,3)--(0,3)--(0,4)--(-1,4)--cycle;
\draw[draw=none,fill=white](0,3)--(1,3)--(1,4)--(0,4)--cycle;
\draw[draw=none,fill=white](1,3)--(2,3)--(2,4)--(1,4)--cycle;
\draw[draw=none,fill=white](2,3)--(3,3)--(3,4)--(2,4)--cycle;
\draw[draw=none,fill=black](3,3)--(4,3)--(4,4)--(3,4)--cycle;
\draw[draw=none,fill=black](-2,4)--(-1,4)--(-1,5)--(-2,5)--cycle;
\draw[draw=none,fill=black](-1,4)--(0,4)--(0,5)--(-1,5)--cycle;
\draw[draw=none,fill=black](0,4)--(1,4)--(1,5)--(0,5)--cycle;
\draw[draw=none,fill=white](1,4)--(2,4)--(2,5)--(1,5)--cycle;
\draw[draw=none,fill=black](2,4)--(3,4)--(3,5)--(2,5)--cycle;
\draw[draw=none,fill=black](-1,5)--(0,5)--(0,6)--(-1,6)--cycle;
\draw[draw=none,fill=black](0,5)--(1,5)--(1,6)--(0,6)--cycle;
\draw[draw=none,fill=white](1,5)--(2,5)--(2,6)--(1,6)--cycle;
\draw[draw=none,fill=black](0,6)--(1,6)--(1,7)--(0,7)--cycle;
\draw[very thin]
(-6,0)--(-6,1)--(-5,1)--
(-5,1)--(-5,2)--(-4,2)--
(-4,2)--(-4,3)--(-3,3)--
(-3,3)--(-3,4)--(-2,4)--
(-2,4)--(-2,5)--(-1,5)--
(-1,5)--(-1,6)--(0,6)--
(0,6)--(0,7)--(1,7)--
(1,7)--(1,6)--(2,6)--
(2,6)--(2,5)--(3,5)--
(3,5)--(3,4)--(4,4)--
(4,4)--(4,3)--(5,3)--
(5,3)--(5,2)--(6,2)--
(6,2)--(6,1)--(7,1)--
(7,1)--(7,0)--cycle;
\end{tikzpicture}
\end{center}
\caption{A configuration of the fooling set $S_n$ for rule 218}
\label{fig:218_fool}
\end{figure}
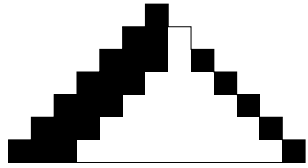
\end{proof}
\end{proposition}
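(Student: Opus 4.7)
The plan is to exhibit an explicit fooling set of size $n+1$ for some split of $\predp{F_{218}}|_{2n+1}$; combined with the fooling-set lemma, this yields $\cc{\predp{F_{218}}}(2n+1)\geq \log_2(n+1)$, hence the claimed $\Omega(\log n)$ lower bound.

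I would take the split ``Alice gets the first $n$ cells, Bob gets the last $n+1$'' and consider the set
\[ S_n=\bigl\{(x_k,y_k)\;:\;0\leq k\leq n\bigr\}\quad\text{where}\quad x_k=1^{n-k}0^k,\; y_k=0^{k+1}1^{n-k}. \]
The diagonal pair $(x_k,y_k)$ concatenates to $w_{k,k}=1^{n-k}0^{2k+1}1^{n-k}$, while the crossed pair $(x_i,y_j)$ with $i\neq j$ produces $w_{i,j}=1^{n-i}0^{i+j+1}1^{n-j}$. The whole argument reduces to computing $\predp{F_{218}}(w_{i,j})$, which by definition is the unique remaining cell after $n$ iterations of $f_{218}$, i.e.\ the value at the central position at time $n$ in the infinite extension (light-cone argument: the past light cone of the center at time $n$ is exactly the input word, so the boundary outside does not matter).

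The core dynamical facts I would establish (all verifiable by direct inspection of the eight entries of $f_{218}$, exactly as in Lemmas~\ref{218:additiv}--\ref{218:makewall}) are: (a) any block of at least two consecutive $1$s is a wall fixed forever; (b) a pattern $1\,0^m\,1$ flanked by walls shrinks to $1\,0^{m-2}\,1$ in one step when $m\geq 2$; (c) the pattern $1\,0\,1$ is stable; (d) when $m$ is even the two walls merge and cells between them become $1$ and stay so. From (a)--(d) I would deduce, for the symmetric input $w_{k,k}$, that the 0-block shrinks symmetrically by $2$ at each step and, after $k$ steps, converges to a stable $1\,0\,1$ pattern centered on the middle cell; hence $\predp{F_{218}}(w_{k,k})=0$ for every $k$. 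For the asymmetric input $w_{i,j}$ with $i\neq j$, the 0-block of odd/even length $i+j+1$ still shrinks by $2$ per step, but the center of the word is \emph{not} the center of the 0-block; either the block closes before the center and merges into $1$s (even case), or it ends as an isolated $1\,0\,1$ whose $0$ is not the center position; in both cases the central cell becomes $1$ no later than $\min(i,j)+1$ steps and stays $1$, so $\predp{F_{218}}(w_{i,j})=1$.

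With these two computations, $S_n$ satisfies both clauses of Definition~\ref{def:fool} with $z=0$: every diagonal pair maps to $0$, while the two crossings between any two distinct diagonal pairs both evaluate to $1\neq 0$. Since $|S_n|=n+1$, the fooling-set lemma gives $\CC(\predp{F_{218}}|_{2n+1}^n)\geq\log_2(n+1)$, and therefore $\cc{\predp{F_{218}}}\in\Omega(\log n)$. The only subtle point is the careful bookkeeping of which positions of $w_{i,j}$ lie in which block and an honest check of the small cases $k\in\{0,n\}$ and $\min(i,j)\in\{0,1\}$ where one of the 1-blocks has length $<2$ and does not itself form a wall; here I would argue directly by tracing a constant number of steps, using that the \emph{other} 1-block still acts as a wall and that $f_{218}(0,0,1)=f_{218}(1,0,0)=1$ immediately fills the cells adjacent to a large 0-block with $1$s.
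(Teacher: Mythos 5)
Your proposal is correct and follows essentially the same route as the paper: the identical fooling set $S_n$ with the split at position $n$, the same evaluation ($0$ on diagonal words $1^{n-k}0^{2k+1}1^{n-k}$, $1$ on crossed words $1^{n-i}0^{i+j+1}1^{n-j}$ with $i\neq j$), and the fooling-set lemma to conclude $\Omega(\log n)$. The only difference is that you spell out the wall/shrinking dynamics of $f_{218}$ (which the paper delegates to the lemmas of \cite{rap08} and a figure), and your treatment of those facts, including the boundary cases, is sound.
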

\subsection{CA Rule 94}
\label{sec:94}

The local function 
$f_{94}: \{0,1\}^3 \to \{0,1\}$
of CA Rule 94 is defined in
Figure~\ref{fig:rule-94}.

\begin{figure}[htb]
\begin{center}
\subfigure[$f_{94}.$]{
\carule 0 1 1 1 1 0 1 0
\label{fig:rule-94}
}
\subfigure[Example of a space-time diagram for CA Rule 94.]
{
\docompile{\input{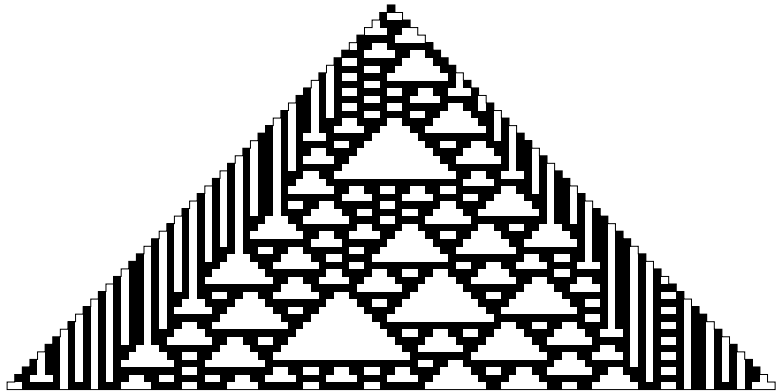}}
\label{spacetime-94}
}
\caption{CA Rule 94.}
\end{center}
\end{figure}

Here appears clearly how powerful the invasion approach is (as a tool
for proving non-universality).  Finding an upper bound (a protocol)
for $\cc{\predp{F_{94}}}$ seems to be
hard. Nevertheless, here we prove in a rather simple way
that its invasion complexity is
logarithmic.

\begin{definition}
  A configuration is \emph{additive} if its language is included in
  ${\bigl((00)^+(11)^+\bigr)^\ast}$ (blocks of 0s or 1s are always of
  even length).
\end{definition}

\begin{lemma}
  \label{94:additiv}
  $f_{94}$ is bi-permutative when restricted to additive
  configurations (it behaves like $f_{90}$) and additive
  configurations are stable under iterations.
\end{lemma}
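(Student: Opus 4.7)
The plan is to split the lemma into its two assertions and tackle them in sequence. For the bi-permutativity statement, I would enumerate the three-cell neighbourhoods that can appear inside an additive configuration and verify that on each of them $f_{94}$ agrees with the XOR local rule $(a,b,c)\mapsto a\oplus c$ that defines Rule 90; bi-permutativity of the latter is then immediate. For the stability statement, I would pass to the first-difference sequence $d_i := c_i\oplus c_{i+1}$ and reformulate additivity as a parity condition on the positions where $d_i = 1$, after which one short XOR identity finishes the argument.

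For the first step, the key remark is that the neighbourhoods $010$ and $101$ each contain an isolated cell, so they cannot appear in a configuration whose maximal monochromatic runs all have length at least $2$. On the six remaining neighbourhoods $000, 001, 011, 100, 110, 111$, a direct check against the table of Figure \ref{fig:rule-94} yields $f_{94}(a,b,c) = a\oplus c$ in every case. Hence the action of $f_{94}$ on additive contexts coincides with $f_{90}$, which is bi-permutative because fixing any two of its arguments leaves a bijection in the third.

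For the second step, I would use the restated additivity: the blocks of $c$ all have even length if and only if the positions $i$ with $d_i = 1$ all share the same parity modulo $2$, since consecutive such positions are separated by the length of a block. Using $F_{94}(c)_i = c_{i-1}\oplus c_{i+1}$ on additive $c$ (by the first step), a one-line XOR computation gives
\[
F_{94}(c)_i \oplus F_{94}(c)_{i+1} \;=\; d_{i-1}\oplus d_{i+1}.
\]
If every $1$ of $d$ lies at a position of parity $p$, this XOR can be nonzero only when $i-1$ (equivalently $i+1$) has parity $p$, forcing $i$ to have parity $1-p$. Thus every $1$ of the first-difference sequence of $F_{94}(c)$ lies at a position of the common parity $1-p$, so $F_{94}(c)$ is again additive, and the argument iterates. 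The main place where I would slow down is sanity-checking the degenerate cases (uniform configurations, or configurations for which $d$ contains at most one $1$), but these are vacuously additive; the XOR identity is the clean replacement for what would otherwise be a tedious block-by-block case analysis around each boundary.
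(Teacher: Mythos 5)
Your proof is correct, and the two halves fare differently against the paper's argument. For the first assertion you do essentially what the paper does: additivity forbids isolated cells, hence the neighbourhood $010$ (the unique one on which $f_{94}$ and $f_{90}$ disagree) never occurs, and bi-permutativity is inherited from $f_{90}(a,b,c)=a\oplus c$. (One literal slip: it is \emph{not} true that ``fixing any two of its arguments leaves a bijection in the third'' --- the middle argument of $f_{90}$ is irrelevant, so that map is constant; but bi-permutativity only concerns the two outer variables, which is what you actually verify, so nothing is lost.) For the stability assertion your route is genuinely different. The paper verifies stability by a direct local computation on blocks, checking that $00(11)^n00$ and $11(00)^n11$ both map to $11(00)^{n-1}11$, so that image blocks again have even length. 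You instead pass to the first-difference sequence $d_i=c_i\oplus c_{i+1}$, note that additivity is equivalent to all block boundaries sharing a common parity, and use the identity $d'_i=d_{i-1}\oplus d_{i+1}$ (valid because, by the first half, $F_{94}(c)_i=c_{i-1}\oplus c_{i+1}$ on additive $c$) to conclude that the boundaries of the image all sit at the opposite parity. Your version is cleaner in that it dispenses with the case analysis at block junctions and covers uniform configurations and configurations with at most one boundary without special pleading, whereas the paper's check tacitly treats each block as finite and flanked by blocks of the other symbol. What the paper's computation buys in exchange is explicit knowledge of how each block evolves (it shrinks by $2$ and acquires flanking pairs), which is more information than bare stability; for the lemma as stated, either argument suffices.
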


\begin{proof}
  For stability of additive configurations, it is
  sufficient to check that $00(11)^n00$ becomes $11(00)^{n-1}11$ and
  $11(00)^n11$ becomes $11(00)^{n-1}11$ for $n\geq 1$.  
  
  $f_{94}$ differs from $f_{90}$ only for transition $010$, hence
  bi-permutativity. 
\end{proof}

\begin{lemma}
  \label{94:wall}
  If $c$ is a non-additive configuration which does not contain $010$,
  then $101$ appears after a finite time and it is a wall. More
  precisely, a wall appears after $t+1$ steps of CA Rule 94 at the middle of any
  occurrence of $10^{2t+1}1$ or ${01^{2t+3}0}$ (with $t\geq 0$).
\end{lemma}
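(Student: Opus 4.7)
The plan is in three pieces: check directly that $101$ is a wall; then prove by induction on $t$ the quantitative claim, for each of the two pattern families separately; finally derive the qualitative first assertion from the quantitative one via a short structural observation on non-additive configurations without $010$.

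\emph{$101$ is a wall.} Direct inspection of $f_{94}$ gives $f_{94}(\ast,1,0)=1$ (since both $f_{94}(0,1,0)$ and $f_{94}(1,1,0)$ equal $1$), $f_{94}(1,0,1)=0$, and $f_{94}(0,1,\ast)=1$ (since both $f_{94}(0,1,0)$ and $f_{94}(0,1,1)$ equal $1$). Hence any occurrence of $101$ reappears at the same site one step later, independently of the neighboring cells.

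\emph{Shrinking $10^{2t+1}1$ by induction on $t$.} The claim is: if $10^{2t+1}1$ occupies positions $p, p+1, \dots, p+2t+2$, then after at most $t$ steps the pattern $101$ sits at positions $p+t, p+t+1, p+t+2$, i.e.\ centered at the middle of the original occurrence. The case $t=0$ is trivial. For $t\geq 1$, apply $f_{94}$ at these $2t+3$ positions: using $f_{94}(\ast,1,0)=1$, $f_{94}(1,0,0)=1$, $f_{94}(0,0,0)=0$, $f_{94}(0,0,1)=1$ and $f_{94}(0,1,\ast)=1$, one reads off the word $1\,1\,0^{2t-1}\,1\,1$. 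Hence $10^{2t-1}1$ now sits at the central $2t+1$ positions and the induction hypothesis gives the wall at the middle after $t-1$ more steps. The companion case $01^{2t+3}0$ at positions $q,\dots,q+2t+4$ reduces to the previous one: applying $f_{94}(0,1,1)=1$, $f_{94}(1,1,1)=0$ and $f_{94}(1,1,0)=1$ at $q+1,\dots,q+2t+3$ gives $1\,0^{2t+1}\,1$ there, centered at the same middle, so the wall appears at the center after at most $t+1$ steps in total.

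\emph{From occurrences to the qualitative statement.} A configuration avoiding $010$ decomposes into alternating maximal $0$-blocks and $1$-blocks, with $1$-blocks sandwiched between two $0$-blocks having length at least $2$ (else $010$ appears). Non-additivity produces a maximal block of odd length: an odd $0$-block of length $2t+1$ is flanked by $1$s, producing an occurrence of $10^{2t+1}1$; an odd finite $1$-block (necessarily of length $\geq 3$, hence of the form $2t+3$) is flanked by $0$s, producing an occurrence of $01^{2t+3}0$. The quantitative part then yields a $101$ wall in either case.

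\emph{Main point.} The whole shrinking induction rests on the two algebraic identities $f_{94}(\ast,1,0)=1$ and $f_{94}(0,1,\ast)=1$: these ensure that the $1$s flanking the shrinking $0$-block regenerate spontaneously at every step, with no hypothesis on the cells farther out. Consequently the no-$010$ assumption plays no role in the induction itself and enters only through the structural argument of the previous paragraph, so no real obstacle arises.
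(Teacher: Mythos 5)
Your proof is correct and follows essentially the same route as the paper's: verify that $101$ is stable regardless of its context, and show that $10^{n}1$ and $01^{n}0$ shrink toward their centers by two cells per step; your version merely makes explicit the induction, the regeneration of the flanking $1$s via $f_{94}(\ast,1,0)=f_{94}(0,1,\ast)=1$, and the block-decomposition argument linking non-additivity to an occurrence of one of the two patterns. (Your count of $t$ steps for $10^{2t+1}1$ versus the lemma's $t+1$ is not a discrepancy, since a wall present at step $t$ is still present at step $t+1$.)
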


\begin{proof}
  First $101$ is stable under iterations of $f_{94}$. Second, $10^n1$
  with $n\geq 2$ is sent to $10^{n-2}1$ and $01^n0$ is sent to
  $10^{n-2}1$ for $n\geq 2$.
\end{proof}

\begin{lemma}
  \label{94:teknik}
  The orbit of a configuration $c$ contains a wall if and only if
  $F_{94}(c)$ is not additive.
\end{lemma}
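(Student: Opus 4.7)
The plan is to prove each direction separately, leveraging Lemma~\ref{94:wall} for the main direction and Lemma~\ref{94:additiv} for the converse.

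For the direction ``$F_{94}(c)$ not additive $\Rightarrow$ orbit of $c$ contains a wall,'' I will split into two cases depending on whether $F_{94}(c)$ contains the factor $010$. In the first case, if $F_{94}(c)$ is non-additive and contains no $010$, then Lemma~\ref{94:wall} applied to $F_{94}(c)$ directly yields that a wall eventually appears in the orbit of $F_{94}(c)$, which is a suffix of the orbit of $c$. The second case, where $F_{94}(c)$ contains $010$, is the main obstacle and requires a short local preimage analysis of $f_{94}$. Namely, I will enumerate the triples that map to $0$ (patterns $000$, $101$, $111$) and those that map to $1$, and deduce that if $F_{94}(c)_{i-1}F_{94}(c)_iF_{94}(c)_{i+1}=010$, then the five-cell window $c_{i-2}\cdots c_{i+2}$ must be one of $10101$, $10111$, or $11101$. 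Each of these three patterns contains $101$ as a factor, so $c$ itself already contains a wall, and the orbit contains a wall.

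For the converse direction, I will argue the contrapositive: if $F_{94}(c)$ is additive, then the orbit of $c$ contains no wall. By Lemma~\ref{94:additiv}, additivity is stable under iteration, so $F_{94}^t(c)$ is additive for every $t\ge 1$. Since an additive configuration has only blocks of even length, it cannot contain an isolated $0$ between two $1$s, hence no occurrence of $101$; this rules out walls at all times $t\ge 1$. To rule out a wall at time $0$, I will verify directly that if $c$ contains $101$ then so does $F_{94}(c)$: indeed $f_{94}(1,0,1)=0$ and $f_{94}(\ast,1,0)=f_{94}(0,1,\ast)=1$ for any $\ast\in\{0,1\}$, so $101$ is preserved at its position and $F_{94}(c)$ would then be non-additive, contradicting our hypothesis.

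The only genuinely delicate step is the preimage enumeration in Case~2 of the forward direction, but it is a purely finite case analysis on $f_{94}$ and reduces immediately to the observation that each admissible preimage of $010$ contains $101$. The remainder of the argument is a direct combination of the preservation properties already established in Lemmas~\ref{94:additiv} and~\ref{94:wall}.
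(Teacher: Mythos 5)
Your proof is correct and follows essentially the same route as the paper: the heart of both arguments is the finite check that every $f_{94}$-preimage of $010$ contains $101$ (your list $10101$, $10111$, $11101$ is exactly right), combined with Lemma~\ref{94:wall} for the case without $010$ and Lemma~\ref{94:additiv} for the converse. You merely spell out explicitly what the paper leaves implicit, including the useful observation that $101$ persists under one step, which settles the wall-at-time-zero case.
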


\begin{proof}
  From Lemma~\ref{94:wall}, it is enough to show that if $c$ is a
  configuration not containing $101$ then $F_{94}(c)$ does not contain
  $010$.  For that, it is sufficient to check that any word $u$ such
  that $f_{94}(u)=010$ must contain $101$.
\end{proof}

From the 2 lemmas above, we get the following proposition.

\begin{proposition}
  For any $u$ we have $\cc{\invap{F_{94}}{u}} \in O(\log(n))$.
\end{proposition}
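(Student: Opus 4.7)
My plan is to structure the proof exactly parallel to the Rule 218 analysis, using additivity as the central dichotomy and exploiting the fact that Rule 94 restricted to additive configurations behaves like the linear Rule 90 (Lemma~\ref{94:additiv}). The whole argument is a case analysis on whether $F_{94}(p_u)$ is additive, and the communication happens only at the boundary between Alice and Bob, so the bound $O(\log n)$ essentially comes from possibly transmitting a single position in the perturbed region.

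\textbf{Case 1: $F_{94}(p_u)$ is non-additive.} By Lemma~\ref{94:teknik} the orbit of $p_u$ contains walls, and since $p_u$ is spatially periodic, these walls recur at a periodic set of positions in space--time. For any input $w$, the configuration $p_u(w)$ agrees with $p_u$ outside $[1,n]$, so outside a bounded neighborhood of this region the orbits also agree; in particular, walls of $p_u$ that lie outside the expanding light cone of $w$ also occur in the orbit of $p_u(w)$. These walls are stable and block the propagation of differences, so $\ridiff{t}-\lediff{t}$ stays bounded and the answer is always $0$. Alice and Bob simply output~$0$ using no communication.

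\textbf{Case 2: $F_{94}(p_u)$ is additive.} Lemma~\ref{94:additiv} implies that $F_{94}^t(p_u)$ is additive for every $t\geq 1$ and evolves under the Rule 90 dynamics $x\oplus z$. The plan here is to show that invasion occurs if and only if $F_{94}(p_u)\neq F_{94}(p_u(w))$. One direction is immediate: if $F_{94}(p_u) = F_{94}(p_u(w))$, then the two orbits coincide from step $1$ onward, differences exist only at $t=0$, and there is no invasion. For the converse I split again. In the \emph{additive subcase} (where $F_{94}(p_u(w))$ is also additive, preserved by Lemma~\ref{94:additiv}), both orbits evolve linearly under Rule 90, the XOR difference is a finite-support configuration evolving under Rule 90, and bi-permutativity guarantees that its support width grows by $2$ per step, yielding invasion. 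In the \emph{non-additive subcase}, Lemma~\ref{94:teknik} produces walls in the orbit of $p_u(w)$, but walls alone are not the only source of difference: the patterns flanking the walls (as seen in the concrete example $u=00$, $w=1$, where an alternating $010101$ region bordered by $11$-boundaries expands outward) make $\lediff{t}$ decrease and $\ridiff{t}$ increase linearly, so invasion occurs.

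\textbf{Protocol and communication cost.} Alice and Bob know $u$, so they know a priori which case they are in and never need to communicate for Case~1. In Case 2 they need only decide whether $F_{94}(p_u) = F_{94}(p_u(w))$ at the positions where the two configurations could differ, namely a neighborhood of $[1,n]$. Each party can compute its portion of $F_{94}(p_u(w))$ locally (the radius is $1$), and they only need to exchange information at the single boundary between their inputs: a constant number of bits for the cell values immediately adjacent to the cut, plus at most $O(\log n)$ bits to transmit, for instance, the position of the rightmost (resp.\ leftmost) deviation in Alice's (resp.\ Bob's) input from $p_u$, which is enough to determine whether any difference propagates past the boundary.

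\textbf{Main obstacle.} The delicate step is proving invasion in the non-additive subcase of Case~2. The walls formed in the orbit of $p_u(w)$ freeze at fixed positions and might superficially look like they block propagation, as they do in Case~1; the argument must show that the boundary patterns next to the walls expand at speed at least $1$ on each side, using the specific behavior of $f_{94}$ on the transitional triples $(1,1,0),(0,1,1),(1,0,0),(0,0,1)$ observed in the explicit simulation. Modulo this analysis, the protocol and its $O(\log n)$ cost follow in a straightforward way.
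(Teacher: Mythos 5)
Your case decomposition is exactly the paper's (walls in the orbit of $\perio{u}$ versus $F_{94}(\perio{u})$ additive, then a subcase split on the additivity of $F_{94}(\perio{u}(x_1,\ldots,x_n))$), and Case~1 and the additive subcase of Case~2 are handled correctly. The gap is precisely the step you flag as the ``main obstacle'': your argument needs the claim that in the non-additive subcase invasion \emph{always} occurs, i.e.\ that the walls created inside the perturbed region necessarily emit differences that propagate outward at speed $1$ on both sides. You have not proved this, and it is not clear it is true: a wall $101$ is a bounded, stationary difference, and whether the cell just outside it ever comes to differ from the background orbit depends on comparing $f_{94}(a,b,1)$ with $f_{94}(a,b,c)$ where $c$ is the background's value adjacent to the wall over time. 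For backgrounds whose orbit keeps the adjacent cell equal to $1$ at all times, nothing need be emitted on that side, so ``invasion iff $F_{94}(\perio{u})\neq F_{94}(\perio{u}(x_1,\ldots,x_n))$'' is not a safe criterion. A symptom of the problem is your cost accounting: if your criterion were correct, a single bit from each party (``does my half of $F_{94}(\perio{u}(w))$ agree with $F_{94}(\perio{u})$?'') plus $O(1)$ boundary bits would suffice, and the $O(\log n)$ you budget for transmitting ``the position of the rightmost deviation'' would serve no purpose.

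The paper avoids proving any such dichotomy. In the non-additive subcase it observes that the walls forming at the leftmost and rightmost occurrences of $10^{2t+1}1$ or $01^{2t+3}0$ screen off everything between them, so the answer depends only on (i) whether the additive portions outside these two outermost occurrences differ from $\perio{u}$ (if so, bi-permutativity gives invasion), and (ii) the exact positions and parameters $t$ of the two outermost occurrences, which determine what, if anything, the outermost walls emit into the surrounding additive background. Item (ii) is describable in $O(\log n)$ bits (two positions and two lengths), after which Alice can simply \emph{compute} whether invasion occurs rather than assert it. This is where the $\log n$ in the statement genuinely comes from, and it is the piece your proposal is missing. To repair your proof you would either have to establish the always-invades claim for every additive background (which requires the wall-emission analysis you defer), or switch to the paper's strategy of transmitting the outermost non-additive occurrences and letting one party decide.
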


\begin{proof}
  If $u$ is such that the orbit of $\perio{u}$ contains a wall, then
  invasion never occurs.

  If $u$ is such that the orbit of $\perio{u}$ does not contain any
  wall, then it means that $F_{94}(\perio{u})$ is additive (by
  Lemma~\ref{94:teknik}). In this situation, two
  cases are to be considered depending on the input ${x_1,\ldots,
    x_n}$. Knowing in which case we are can be done within constant
  number of bits:
  \begin{itemize}
  \item either ${F_{94}(\perio{u}(x_1,\ldots,x_n))}$ is also additive
    and then, by Lemma~\ref{94:additiv}, there is invasion if and only
    if ${F_{94}(\perio{u})=F_{94}(\perio{u}(x_1,\ldots,x_n))}$. This
    can be decided with a finite number of bits of communication.
  \item or ${F_{94}(\perio{u}(x_1,\ldots,x_n))}$ is not additive. Then
    it contains some $10^{2t+1}1$ or some ${01^{2t+3}0}$ (with $t\geq
    0$) because, as shown in the proof of lemma, if the image of a
    configuration contains $010$, then it must also contain $101$. Consider
    the leftmost and the rightmost occurrences of this kind of
    words. Since walls appear above the middle of these two occurrences
    after a time equal to half their lengths (Lemma~\ref{94:wall}),
    the fact there is invasion or not does not depend on what is
    beetween that two occurrences.  It takes $O(\log(n))$ bits of
    communication for Alice to know the positions of these two
    occurrences and the exact words present at their positions (of
    type $10^{2t+1}1$ or ${01^{2t+3}0}$). Moreover, as soon as Alice
    knows this she also knows that on the left of the leftmost
    occurrence and on the right of the rightmost occurrence, the
    configuration is additive. If there is one difference with
    $\perio{u}$ in those additive parts, then there is invasion. If
    not, then Alice has got enough information to decide invasion. Deciding
    in which of the two cases we are can be done within constant
    communication.
  \end{itemize}
\end{proof}

\begin{corollary}
CA Rule 94 is not intrinsically universal.
\end{corollary}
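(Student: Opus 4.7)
The plan is to obtain the corollary as a direct contrapositive application of Corollary~\ref{coro:uni} to the preceding proposition. Corollary~\ref{coro:uni} asserts that if $F$ is intrinsically universal, then there must exist some word $u$ such that $\cc{\invap{F}{u}} \in \Omega(n)$. So to rule out intrinsic universality for $F_{94}$, it suffices to exhibit an upper bound of the form $\cc{\invap{F_{94}}{u}} \in o(n)$ uniformly in $u$, and this is exactly what the previous proposition provides.

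Concretely, I would proceed as follows. First, I would invoke the preceding proposition, which states that for every word $u$, $\cc{\invap{F_{94}}{u}} \in O(\log(n))$. Since $\log(n) \in o(n)$, this means that no choice of $u$ can produce an invasion problem of communication complexity $\Omega(n)$. Then I would apply the first item of Corollary~\ref{coro:uni} in its contrapositive form: if no $u$ witnesses $\cc{\invap{F_{94}}{u}} \in \Omega(n)$, then $F_{94}$ cannot be intrinsically universal. This yields the claim.

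There is essentially no remaining obstacle at the level of this corollary itself: all the substantive work has already been absorbed into the preceding proposition (whose proof relies on Lemmas \ref{94:additiv}, \ref{94:wall}, and \ref{94:teknik} describing the additive and wall-forming behaviour of Rule 94) and into Corollary~\ref{coro:uni} (which in turn rests on Propositions~\ref{prop:simpred}, \ref{prop:siminva}, \ref{prop:simcycl} and the existence result in Proposition~\ref{exists}). The only care needed is to observe that the quantifier structure matches: the necessary condition in Corollary~\ref{coro:uni} is existential in $u$, while our upper bound is universal in $u$, so the contradiction is immediate and there is no gap to bridge.
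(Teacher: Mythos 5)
Your proposal is correct and follows exactly the route the paper intends: the corollary is an immediate consequence of the preceding proposition ($\cc{\invap{F_{94}}{u}}\in O(\log n)$ for all $u$) combined with the contrapositive of item 1 of Corollary~\ref{coro:uni}, and your remark about the quantifier structure (existential in the necessary condition versus universal in the upper bound) is the only point that needs checking. Nothing further is required.
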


\subsection{CA Rule 33}
\label{sec:33}

We are going to show that this rule, although non-trivial for the
\predp{} problem, needs zero communication for the \cyclp{}{}
problem. To show this, we prove that the cycle length of Rule 33 is
always 2. The local function $f_{33}: \{0,1\}^3 \to \{0,1\}$ of CA
 Rule 33 is defined in Figure~\ref{fig:rule-33}.

\begin{figure}[ht!]
\centering
\subfigure[$f_{33}.$] 
{
{\carule 1 0 0 0 0 1 0 0}
\label{fig:rule-33}
}
\subfigure[Example of a space-time diagram for CA  Rule 33.] 
{
\docompile{\input{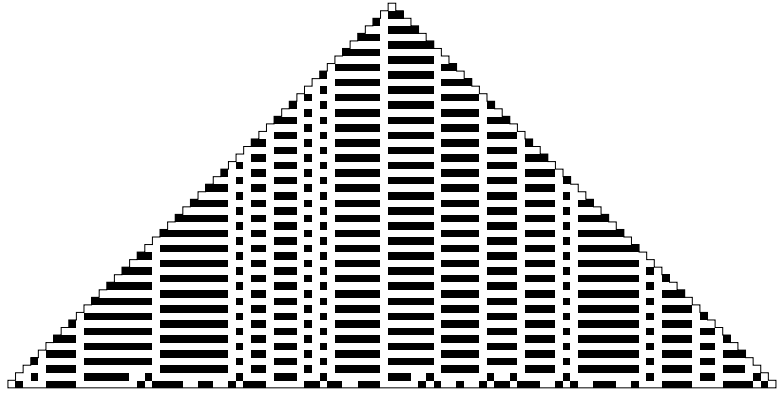}}
\label{spacetime-33}
}
\label{fig:33}
\caption{CA Rule 33.}
\end{figure}

\begin{lemma}
\label{lemma:33-1}
All configurations that do not contain neither $101$ (isolated $0$s) nor $1001$
(isolated $00$s) are stable under $(F_{33})^2$.
\begin{proof}

We call $A_0$ the set of configurations without isolated $0$s, and $A_{00}$ the
set configuration without isolated $00$s. First notice that the only antecedent
of $101$ is $10101$, which contains an isolated $0$, thus $A_{0}$ is stable
under $F_{33}$. With an exhaustive exploration of all configurations of the form
$u=abxyzcd$ where $xyz\in\{000\ldots 111\}$, and $u\in A_{0}\cap A_{00}$, we
observe that:
\[\forall u\in A_{0}\cap A_{00}, |u|=7, (F_{33})^2(u_1\ldots u_7)=u_3u_4u_5\]

\end{proof}
\end{lemma}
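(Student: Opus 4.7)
My plan is to reduce the global stability claim to a finite local verification, mirroring the sketch the author already begins. First, I would rephrase the hypothesis in a block form: a configuration lies in $A_0 \cap A_{00}$ iff it decomposes into alternating maximal 1-blocks (of any length $\geq 1$) and 0-blocks (of length $\geq 3$ if finite), because forbidding $101$ and $1001$ is exactly forbidding finite 0-blocks of length $1$ or $2$. Since $F_{33}$ has radius $1$, the value $F_{33}^2(c)_z$ depends only on $c_{z-2},\ldots,c_{z+2}$, so the global claim $F_{33}^2(c)=c$ on $A_0 \cap A_{00}$ reduces to a local statement about all $5$-letter windows occurring as factors, or equivalently, the slightly more symmetric assertion $(f_{33})^2(u_1\cdots u_7) = u_3 u_4 u_5$ for every $7$-letter word in $A_0 \cap A_{00}$ used by the author.

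Second, I would justify that $F_{33}$ preserves $A_0 \cap A_{00}$, so that the second application of $f_{33}$ is still taking place on an allowed window. The rule $f_{33}$ outputs $1$ only on neighborhoods $000$ and $101$, so if the image of some window contains $101$ then its two outer triples lie in $\{000,101\}$ while the inner triple does not, and the only consistent preimage is $10101$, which already lies outside $A_0$; a similar short case analysis rules out $1001$ appearing in an image of a word from $A_0 \cap A_{00}$.

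Third, I would carry out the core verification in its structural form. On $A_0 \cap A_{00}$ the rule acts blockwise: a maximal 1-block of length $a$ is sent to $0^a$, and a maximal 0-block of length $b \geq 3$ is sent to $0\cdot 1^{b-2}\cdot 0$. One step therefore transforms the block-length sequence $(a_i,b_i)_i$ into $(b_i-2,\,a_i+2)_i$, and a second step sends this back to $(a_i,b_i)_i$. For positions, an original 1-block at $[p,p+a-1]$ — surrounded by transforming 0-blocks that contribute an extra $0$ on each side — becomes a 0-block at $[p-1,p+a]$ of length $a+2$ after one step, and this in turn produces a 1-block of length $a$ at exactly $[p,p+a-1]$ after the second step; the intermediate 1-block of length $b-2$ in the old 0-block's interior becomes a 0-block at the matching positions, merging with the neighboring boundary $0$s to rebuild the original 0-block $[p+a,p+a+b-1]$. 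Hence $F_{33}^2$ fixes $c$ pointwise. An equivalent brute enumeration over the handful of length-$7$ words in $A_0\cap A_{00}$ settles the same identity directly. The main obstacle I anticipate is the position bookkeeping at the boundaries between blocks in the structural argument; the finite enumeration sidesteps this at the cost of a slightly less illuminating proof.
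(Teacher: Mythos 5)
Your proof is correct, and its core argument is genuinely different from the one in the paper. The paper reduces everything to a finite check: after observing that the only preimage of $101$ is $10101$ (so $A_0$ is $F_{33}$-stable), it simply enumerates the length-$7$ words avoiding $101$ and $1001$ and verifies $(f_{33})^2(u_1\cdots u_7)=u_3u_4u_5$ directly. You instead exploit the block structure: since $f_{33}$ outputs $1$ exactly on $000$ and $101$, on configurations whose finite maximal $0$-blocks all have length $\geq 3$ the rule acts blockwise, sending $1^a\mapsto 0^a$ and $0^b\mapsto 01^{b-2}0$, so the block-length sequence transforms as $(a_i,b_i)\mapsto(b_i-2,\,a_i+2)$, which is an involution, and your position bookkeeping correctly shows the fixed points return to their original places. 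This buys more than the paper's check: it makes the stability of $A_0\cap A_{00}$ under one step immediate (new $0$-blocks have length $a_i+2\geq 3$, new $1$-blocks length $b_i-2\geq 1$), so your separate preimage analysis of $1001$ in step two is actually redundant, and it explains \emph{why} the period is $2$ rather than merely certifying it. The cost is the boundary bookkeeping you flag, plus the need to say a word about the degenerate uniform configurations (a single bi-infinite block, where $F_{33}$ just swaps all-$0$ and all-$1$); both are easily handled. The brute enumeration you mention as a fallback is exactly the paper's proof.
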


\begin{lemma}

All (cyclic) configurations of length $n$, different from $(01)^{\lfloor
n/2\rfloor}$, do not contain isolated $0$s after
$\left\lfloor\frac{n}{2}\right\rfloor$ steps of CA Rule 33.

\begin{proof}

We already noticed in Lemma~\ref{lemma:33-1} that the only possible antecedent
of $101$ is $10101$. Thus, there can be an isolated $0$ after
$\left\lfloor\frac{n}{2}\right\rfloor$ steps only if there are at least
$\left\lfloor\frac{n}{2}\right\rfloor$ isolated $0$s in the initial
configuration, i.e. if the initial configuration is $(01)^{\lfloor n/2\rfloor}$.

\end{proof}
\end{lemma}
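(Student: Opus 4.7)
The plan is to establish, by backward induction on the number of steps, that any occurrence of the pattern $101$ in $F_{33}^{t}(c)$ forces a long alternating block in $c$ itself. Concretely, I would prove the following claim: if at time $T$ the configuration contains $101$ at positions $[p-1,p+1]$, then at time $T-k$ the configuration must contain the alternating pattern $1010\cdots 01$ of length $2k+3$ occupying positions $[p-k-1,p+k+1]$, for every $0\le k\le T$.

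The argument I would use for the inductive step is exactly the preimage observation already exploited in Lemma~\ref{lemma:33-1}: the only five-cell window whose $f_{33}$-image is $101$ is $10101$ itself. Applied to each of the two overlapping occurrences of $101$ inside the alternating block known at time $T-k$, this extends the block by one cell on each side at time $T-k-1$, and the overlap at the three middle cells is automatically consistent (both applications prescribe the same $101$ there). The base case $k=0$ is tautological.

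I would then specialise to $T=k=\lfloor n/2\rfloor$ to conclude: an isolated $0$ surviving at step $\lfloor n/2\rfloor$ forces, at time $0$, an alternating block of length $2\lfloor n/2\rfloor+3$, which is strictly greater than $n$. In a cyclic configuration of length $n$, such a block wraps all the way around the circle and therefore pins the whole configuration down to a globally alternating one. A globally alternating cyclic configuration of length $n$ exists only when $n$ is even, in which case it must be $(01)^{n/2}=(01)^{\lfloor n/2\rfloor}$ up to rotation; when $n$ is odd the parity obstruction of alternation around an odd cycle rules this case out altogether, which is consistent with the lemma since then $(01)^{\lfloor n/2\rfloor}$ has length $n-1$ and the exception clause is vacuous.

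The step I expect to require the most care is the wrap-around conclusion: one must check cleanly that an alternating substring of cyclic length exceeding $n$ really does pin down the whole configuration, and then split on the parity of $n$ to identify the globally alternating configurations. The inductive part is essentially forced by the preimage characterisation, so I do not expect any real obstacle there; the whole proof should fit in a few lines once the inductive claim is stated precisely.
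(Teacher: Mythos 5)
Your proposal is correct and follows essentially the same route as the paper: both arguments rest solely on the observation from Lemma~\ref{lemma:33-1} that the unique five-cell preimage of $101$ is $10101$, pulled back $\left\lfloor n/2\right\rfloor$ times so that a surviving isolated $0$ forces an alternating block of length exceeding $n$, hence (after the wrap-around and parity discussion you rightly single out) the configuration $(01)^{\lfloor n/2\rfloor}$. The one slip is in the wording of your inductive step: to pin down the full block of length $2k+5$ at time $T-k-1$ you must apply the preimage characterisation to all $k+1$ occurrences of $101$ in the block at time $T-k$, not just to two of them, but the pairwise consistency on overlaps that you already note makes this repair immediate.
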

\begin{corollary}
After $\left\lfloor\frac{n}{2}\right\rfloor+1$ steps of CA Rule 33, 
there are no isolated couples of $0$s.
\begin{proof}
The only antecedents of $1001$ contain an isolated $0$.
\end{proof}
\end{corollary}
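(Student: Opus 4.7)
The plan is to prove the contrapositive: if $F_{33}^{\lfloor n/2\rfloor + 1}(c)$ contains the pattern $1001$ at some position, then $F_{33}^{\lfloor n/2\rfloor}(c)$ must contain an isolated $0$, contradicting the preceding lemma.

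First I would determine all possible antecedents of the pattern $1001$ under a single application of $f_{33}$. Since $f_{33}$ has radius $1$, producing four cells of output requires a window of six input cells $u_1u_2u_3u_4u_5u_6$. The constraints are $f_{33}(u_1u_2u_3)=1$, $f_{33}(u_2u_3u_4)=0$, $f_{33}(u_3u_4u_5)=0$, and $f_{33}(u_4u_5u_6)=1$. Inspecting the local rule of Figure~\ref{fig:rule-33}, the only triples mapped to $1$ are $000$ and $101$, so both $u_1u_2u_3$ and $u_4u_5u_6$ lie in $\{000,101\}$. A short case analysis on the four combinations shows that the combination $u_1u_2u_3=000$ and $u_4u_5u_6=000$ forces $u_2u_3u_4=000$, contradicting $f_{33}(u_2u_3u_4)=0$; the remaining three combinations give the admissible antecedents $000101$, $101000$, and $101101$, each of which contains at least one occurrence of $101$, that is, an isolated~$0$.

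Next I would combine this local observation with the preceding lemma. If $F_{33}^{\lfloor n/2\rfloor+1}(c)$ contains a $1001$ block at position $i$, then some window of length $6$ of $F_{33}^{\lfloor n/2\rfloor}(c)$ centred around position $i$ must be one of the three admissible antecedents above, and hence must contain a $101$ pattern. By the preceding lemma, this can happen only when $c = (01)^{\lfloor n/2\rfloor}$ (the exceptional configuration).

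The only remaining step is to dispose of this exceptional case. Since $f_{33}(010)=0$ and $f_{33}(101)=1$, the configuration $(01)^{\lfloor n/2\rfloor}$ is fixed by $F_{33}$, so all its iterates are equal to itself. In particular $F_{33}^{\lfloor n/2\rfloor+1}(c) = (01)^{\lfloor n/2\rfloor}$, which clearly contains no occurrence of $1001$. Thus in every case, after $\lfloor n/2\rfloor + 1$ steps no isolated couple of $0$s survives. The main (and only) non-routine point is the finite case analysis identifying the antecedents of $1001$, which is entirely mechanical; the rest follows by direct appeal to the preceding lemma.
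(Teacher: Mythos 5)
Your proposal is correct and takes essentially the same route as the paper: the whole content is the observation that every length-$6$ preimage of $1001$ under $f_{33}$ contains the factor $101$, which is exactly the paper's one-line argument, and your explicit enumeration of the admissible antecedents ($000101$, $101000$, $101101$) checks out against the rule table. One small slip in your treatment of the exceptional case: $(01)^{\lfloor n/2\rfloor}$ is \emph{not} fixed by $F_{33}$ (every cell sees $101$ or $010$ and therefore flips, so the configuration maps to $(10)^{\lfloor n/2\rfloor}$), but since all of its iterates remain alternating they contain no occurrence of $1001$, so your conclusion for that case still holds.
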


\begin{corollary}
After $\left\lfloor\frac{n}{2}\right\rfloor+1$ steps, CA Rule 33 becomes periodic,
with period $2$. 
\end{corollary}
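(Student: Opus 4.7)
My plan is to split on the special configuration excluded from the preceding lemma and use the two facts already established: stability under $(F_{33})^2$ of configurations without isolated $0$s and without isolated $00$s, and the fact that after $\lfloor n/2\rfloor$ (resp.\ $\lfloor n/2\rfloor+1$) steps every non-exceptional configuration loses its isolated $0$s (resp.\ its isolated $00$s).

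Fix a spatially $n$-periodic configuration $c$. In the first case, suppose $n$ is even and $c$ is the configuration ${(01)^{n/2}}$ (or a translate). Then every window seen by the local rule is either $010$ or $101$, and a direct inspection of $f_{33}$ gives $f_{33}(0,1,0)=0$ and $f_{33}(1,0,1)=1$, so $F_{33}(c)$ is just the shift of $c$ by one cell, and $F_{33}^{2}(c)=c$. Thus the orbit of $c$ is already $2$-periodic in time, which suffices.

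In the second case, $c$ is not of the above form, so by the preceding lemma the configuration $F_{33}^{\lfloor n/2\rfloor}(c)$ contains no isolated $0$. The preceding corollary then gives that $F_{33}^{\lfloor n/2\rfloor+1}(c)$ contains no isolated pair $00$, and the same antecedent analysis as used for that corollary (the only antecedent of $101$ is $10101$) shows that absence of isolated $0$s is itself preserved under $F_{33}$, so $F_{33}^{\lfloor n/2\rfloor+1}(c)$ also has no isolated $0$. Hence $F_{33}^{\lfloor n/2\rfloor+1}(c)$ lies in the set to which Lemma~\ref{lemma:33-1} applies, and is fixed by $(F_{33})^{2}$. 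Therefore $F_{33}^{t+2}(c)=F_{33}^{t}(c)$ for every $t\geq \lfloor n/2\rfloor+1$.

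There is essentially no hard step: the only mild obstacle is being explicit about the exceptional configuration and noting that for odd $n$ the case distinction disappears (no cyclic configuration of odd length equals $(01)^{\lfloor n/2\rfloor}$, so the second case covers everything). Combining the two cases, the orbit of any spatially periodic configuration becomes (temporally) periodic with period dividing $2$ after at most $\lfloor n/2\rfloor+1$ steps, which is exactly the statement of the corollary.
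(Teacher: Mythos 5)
Your proof is correct and is essentially the argument the paper intends: the corollary is stated without proof as the immediate assembly of Lemma~\ref{lemma:33-1} with the two preceding results, and you carry out exactly that assembly, usefully making explicit both the exceptional configuration ${(01)^{n/2}}$ (where $F_{33}$ acts as complementation, hence a shift, so $F_{33}^2(c)=c$ already) and the fact that absence of isolated $0$s is preserved by one more step. The only cosmetic remark is that you conclude ``period dividing $2$'' where the statement asserts period exactly $2$; this is immediate since $f_{33}(0,0,0)=1$ and $f_{33}(1,1,1)=0$ rule out any temporal fixed point, so no change is really needed.
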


\begin{proposition}
$$\cc{\predp{F_{33}}}\in\Omega(\log n)$$


\begin{proof}
As usual, we just find a fooling set (see Definition~\ref{def:fool}). 
Consider the following set $S_n$:
\[S_n=\{(1^{n-2k}(01)^{k}0,(10)^{k}1^{n-2k}),0\leq k\leq \lfloor n/2\rfloor \}\]

It can be easily verified that:
\[\left\{\begin{array}{lcll}
F_{33}^n(1^{2n-k}(01)^{k}0(10)^{k}1^{2n-k})&=&n\mod 2&\\
F_{33}^n(1^{2n-i}(01)^{i}0(10)^{j}1^{2n-j})&=&1+(n\mod 2)&\text{whenever }i\neq j
\end{array}\right.\]

Since $|S_n|=\left\lfloor\frac{n}{2}\right\rfloor$, we conclude that a
deterministic protocol for predicting rule 33 needs $\Omega(\log n)$ 
bits of communication.

\end{proof}
\end{proposition}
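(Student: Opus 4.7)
The plan is to apply the fooling-set technique (Lemma~1) with the explicit set $S_n$ exhibited in the paper:
\[S_n=\bigl\{\bigl(1^{n-2k}(01)^{k}0,\;(10)^{k}1^{n-2k}\bigr):0\leq k\leq\lfloor n/2\rfloor\bigr\}.\]
Since $|S_n|=\Theta(n)$, if $S_n$ is a fooling set for some split $\predp{F_{33}}|_N^i$ of a suitable length $N=\Theta(n)$, then Lemma~1 immediately yields $\cc{\predp{F_{33}}}(N)\geq\log_2|S_n|\in\Omega(\log n)$, which is exactly what we want. All pairs in $S_n$ produce, upon concatenation, a configuration of the form $1^{n-2k}(01)^k 0\,(10)^k 1^{n-2k}$: a symmetric ``bump'' of alternating bits of length $2k$ around a central $0$, embedded in a $1$-background. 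The split is placed at the central $0$.

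The core of the proof is then to verify the two fooling-set conditions, namely that $\predp{F_{33}}$ takes a common value $z_n$ on all ``diagonal'' inputs (same $k$ on both sides) and a different value on every ``off-diagonal'' input (parameters $i\neq j$). I would carry this out by leaning on the structural results already proved: on the $1$-background, $F_{33}$ simply oscillates between $1^{\omega}$ and $0^{\omega}$, and by Lemma~\ref{lemma:33-1} together with its corollaries, any configuration of length $n$ enters the regime ``no isolated~$0$, no isolated~$00$'' after at most $\lfloor n/2\rfloor+1$ steps, in which regime $(F_{33})^2$ acts as a pure shift. Thus the long-term value at the center cell is determined by a transient of $O(n)$ steps, during which the two wavefronts emanating from the $(01)^k$ and $(10)^j$ bumps propagate toward the center: when $i=j$, symmetry forces cancellation to a fixed value $z_n$ depending only on $n \bmod 2$; when $i\neq j$, the asymmetry persists to the central cell and flips the outcome.

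The main obstacle will be this symmetric/asymmetric cancellation argument: concretely, to show that the parity of the difference $i-j$ determines whether the two wavefronts constructively or destructively interfere at the center. I would handle it by computing, using Lemma~\ref{lemma:33-1} and the rule table, the explicit evolution of an isolated bump $(01)^k 0 (10)^k$ for the first $O(k)$ steps until it either annihilates to an all-$1$ (or all-$0$) background locally at the center, and then observe that in the asymmetric case the residue left behind is shifted by $|i-j|$ cells and therefore cannot cancel. Once this residue-computation lemma is in hand, the fooling-set properties in Definition~\ref{def:fool} follow by direct inspection of the central cell, and the conclusion $\cc{\predp{F_{33}}}\in\Omega(\log n)$ is immediate.
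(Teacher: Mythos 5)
Your proposal is correct and follows essentially the same route as the paper: the identical fooling set $S_n$, the fooling-set lower bound of Lemma~1, and the observation that the diagonal inputs all yield one value while every off-diagonal input yields the other, giving $\Omega(\log n)$ communicated bits. The paper simply asserts the key evaluation of $\predp{F_{33}}$ on these instances as ``easily verified,'' whereas you sketch how to derive it from the structural lemmas on Rule~33 (period-$2$ behaviour after the transient, annihilation of the alternating bump), which is a reasonable way to fill in that computation but not a different proof.
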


\section{Conclusion}

We have suggested a method to prove negative results concerning
intrinsic universality in CA. We have shown that this
approach can be used both to show that global dynamical
properties can imply non-universality, and to rule out some concrete cellular
automata from being universal. We believe that this work should
go on in the following directions:

\begin{itemize}
\item It seems that more can be said about the communication
  complexity problems for the class of surjective CA and some of its
  sub-classes ($k$-to-1, $d$-separated, left/right-closing, etc.~\cite{hedlund69});
\item The case of elementary rules $218$ and $94$ shows that low-cost
  communication protocols can be found in CA that are not linear,
  but containing a linear component `in competition' with another
  component. Finding a general formalisation for such kind of
  behaviours could be useful to treat many other concrete examples.
\item Concerning concrete CA, ruling out as many elementary rules as
  possible from being intrinsically universal seems to be an
  interesting (but ambitious) goal. We could also consider other
  natural classes of small CA (one-way automata, totalistic rules,
  etc.).
\item The splitting of inputs that induce maximal
  communication complexity is a key parameter, especially
  for the prediction problem. There is no reason for such maximal
  splittings to be unique, and if it is unique, there is no reason
  to be located in the  middle of the input. We suspect that there are
  some links between directional entropy and the evolution of such
  maximal splitting (when increasing the input size).
\item Although completely formalized in dimension $1$, there is no
  doubt that this approach can be adapted to higher dimensions; it
  could be the occasion to adopt other communication complexity models
  (like the multiparty model) and discuss other ways of splitting
  the input.
\end{itemize}

\bibliographystyle{plain}
\bibliography{main}

\end{document}